\newlength{\commentWidth}
\let\oldnl\nl% Store \nl in \oldnl
\newcommand{\nonl}{\renewcommand{\nl}{\let\nl\oldnl}}% Remove line number for one line
\colorlet{DarkRed}{red!50!black}
\colorlet{DarkGreen}{green!50!black}
\colorlet{DarkBlue}{blue!50!black}
\newtheorem{theorem}{Theorem}
\newtheorem{lemma}{Lemma}
\newtheorem{definition}{Definition}
\newtheorem{corollary}{Corollary}
\newcommand{\bx}{\mathbf{x}}
\newcommand{\bz}{\mathbf{z}}
\newcommand{\bm}{\mathbf{m}}
\newcommand{\bp}{\mathbf{p}}
\newcommand{\bu}{\mathbf{u}}
\DeclareMathOperator{\UB}{UB}
\DeclareMathOperator{\LB}{LB}
\newcommand{\x}{x}
\newcommand{\f}{\varphi}
\DeclareMathOperator{\LOG}{\ensuremath{\operatorname{log\mathllap{\raisebox{1.35ex}{\rule{0.95em}{0.08ex}}\hspace{0.025em}}}}}
\DeclareMathOperator{\res}{Res}
\newcommand{\RES}[2]{\res(#1, #2)}
\providecommand{\f}{\phi}
\providecommand{\F}{\Phi}
\providecommand{\g}{\gamma}
\newcommand{\SSS}{\mathcal{S}}
\newcommand{\FFF}{\mathcal{F}}
\newcommand{\GGG}{\mathcal{G}}
\newcommand{\TTT}{\mathcal{T}}
\newcommand{\NN}{\ensuremath{\mathbb{N}}}
\newcommand{\ZZ}{\ensuremath{\mathbb{Z}}}
\newcommand{\QQ}{\ensuremath{\mathbb{Q}}}
\newcommand{\RR}{\ensuremath{\mathbb{R}}}
\newcommand{\CC}{\ensuremath{\mathbb{C}}}
\newcommand{\PP}{\ensuremath{\mathbb{P}}}
\newcommand{\ZZnz}{\ensuremath{\ZZ_{\neq 0}}}
\newcommand{\ZZnneg}{\ensuremath{\ZZ_{\ge 0}}}
\DeclareMathOperator{\Mea}{Mea}
\newcommand{\cert}{\#\textbf{\textsc{PolySol}}}
\title{Counting Solutions of a Polynomial System\\
 Locally and Exactly}
\date{}
\author{Ruben Becker \and Michael Sagraloff}
\begin{document}
\maketitle
\begin{abstract}
	We propose a symbolic-numeric algorithm to count the number of solutions of a polynomial system within a local region. More specifically, given a zero-dimensional system $f_1=\cdots=f_n=0$, with $f_i\in\CC[x_1,\ldots,x_n]$, and a polydisc $\mathbf{\Delta}\subset\CC^n$, our method aims to certify the existence of $k$ solutions (counted with multiplicity) within the polydisc. 
	In case of success, it yields the correct result under guarantee. Otherwise, no information is given. However, we show that our algorithm always succeeds if $\mathbf{\Delta}$ is sufficiently small and well-isolating for a $k$-fold solution $\bz$ of the system. 

	Our analysis of the algorithm further yields a bound on the size of the polydisc for which our algorithm succeeds under guarantee. This bound depends on local parameters such as the size and multiplicity of $\bz$ as well as the distances between $\bz$ and all other solutions. Efficiency of our method stems from the fact that we reduce the problem of counting the roots in $\mathbf{\Delta}$ of the original system to the problem of solving a truncated system of degree $k$. In particular, if the multiplicity $k$ of $\bz$ is small compared to the total degrees of the polynomials $f_i$, our method considerably improves upon known complete and certified methods.
	
	For the special case of a bivariate system, we report on an implementation of our algorithm, and show experimentally that our algorithm leads to a significant improvement, when integrated as inclusion predicate into an elimination method.
\end{abstract}

%!TEX root=./systems.tex

\section{Introduction}
In this paper, we propose a randomized but certified (i.e.~Las-Vegas type) algorithm, denoted $\cert$, to count the number of solutions of a zero-dimensional polynomial system $\FFF$ within a given polydisc $\mathbf{\Delta}\subset\CC^n$. Let 
\begin{align}\label{formula:original_system}
	\FFF: f_1(\bx)=\ldots = f_n(\bx) = 0, \quad\text{with }f_i\in\CC[\bx]=\CC[x_1,\ldots,x_n]\text{ for all }i=1,\ldots,n,
\end{align}
be a zero-dimensional\footnote{There are only finitely many solution in complex projective $n$-space.} polynomial system. We further assume that each of the coefficients $c_{i,\alpha}$ of the polynomials 
\[
f_i=\sum_{\alpha=(\alpha_1,\ldots,\alpha_n)}c_{i,\alpha}\cdot \bx^{\alpha}=\sum_{\alpha=(\alpha_1,\ldots,\alpha_n)}c_{i,\alpha}\cdot x_1^{\alpha_1}\cdots x_n^{\alpha_n}
\] 
can be approximated to any desired precision. That is, for any given non-negative integer (precision) $\rho$, we can ask for a dyadic approximation $\tilde{c}_{i,\alpha}\in 2^{-\rho}\cdot (\ZZ+\mathbf{i}\cdot\ZZ)$ of $c_{i,\alpha}$ with $|\tilde{c}_{i,\alpha}-c_{i,\alpha}|<2^{-\rho}$ for the cost of reading the approximations.

Given a polydisc $\mathbf{\Delta}=\mathbf{\Delta}_r(\bm)=\{\bz\in\CC^n:\|\bz-\bm\|_\infty<r\}$ of radius $r$ centered at $\bm$, we aim to compute the number of solutions of $\FFF=0$ in $\mathbf{\Delta}$. Here, solutions are counted with multiplicity. 
As input, our algorithm $\cert$ receives (arbitrary good approximations of) the coefficients of $\FFF$, the polydisc $\mathbf{\Delta}$, and an integer $K\in\{0,1,\ldots,d_\FFF\}$, where $d_\FFF:=\max_i d_i$ is defined as the maximum of the degrees $d_i$ of the polynomials $f_i$. As output, it returns an integer $k\in \NN\cup\{-1\}$. If $k=-1$, nothing can be said, that is, the algorithm fails to provide an answer to our request. Otherwise, $k$ equals the number of solutions of $\FFF=0$ in $\mathbf{\Delta}$. In this case, we say that the method succeeds. 
We further show that our method always succeeds if (1) $r$ is small enough, (2) $K\ge k$, and (3) the smaller polydisc $\mathbf{\Delta}':=\mathbf{\Delta}_{r'}(\bm)$, with $r':=\frac{r}{64 n (K+1)^n}$, contains a $k$-fold solution of $\FFF$. We also derive a bound on the size of $r$ that guarantees success of our method if the other two requirements are fulfilled. 
The given bound is adaptive in the sense that it does not only depend on global parameters such as the degree and the size of the coefficients of the polynomials $f_i$, but also on solution-specific parameters, that is, the multiplicity and the size of $\bz$ as well as the distances between $\bz$ and the other solutions of $\FFF$. Here, we state our main result for the special case, where $\FFF$ is defined over the integers. For a more general statement, see~Theorem~\ref{thm:main}.
\begin{theorem}
    Suppose that $\bz$ is a $k$-fold solution of a polynomial system $\FFF$ as in (\ref{formula:original_system}) with polynomials $f_i\in\ZZ[\bx]$ of total degree $d_i$ and with \emph{integer} coefficients $c_{i,\alpha}$ of bit-size less than $\tau_{\FFF}$. Then, for any $K\ge k$, there exists an $L^*\in\NN$ with 
    \begin{align*}
      L^*&=\tilde{O}\left(D_{\FFF}\cdot \max_{i=1,\ldots,n}\frac{d_{\FFF}+\tau_{\FFF}}{d_i}+D_{\FFF}\cdot\LOG(\bz)+\LOG(\partial(\bz,\FFF)^{-1})+(K+1)^n\cdot \LOG(\sigma(\bz,\FFF)^{-1})\right)
    \end{align*}
    such that, with probability at least $1/2$, the algorithm $\cert(\FFF,\mathbf{\Delta},K)$ returns $k$ for any disc $\mathbf{\Delta}=\mathbf{\Delta}_r(\bm)$ with $r\le 2^{-L^*}$ and $\|\bm-\bz\|_\infty<r$.  
    Here, we use the definitions $\LOG(\bx):=\max(1,\log\max(1,\|\bx\|_\infty))$, and
    \begin{align*}
        d_{\FFF}&:=\max_i d_i,\text{ }D_\FFF:=\prod\nolimits_{i=1}^n d_i,\\ % and let $\bz=\bz_i$ be a zero of multiplicity $k$. 
        \sigma(\bz_i,\FFF)&:=\min_{j\neq i}\|\bz_i - \bz_j\|\\
        \partial(\bz_i,\FFF) &:= \prod_{j\neq i} \|\bz_i-\bz_j\|^{\mu(\bz_j, \FFF)},
    \end{align*}
    where $\bz_1,\ldots, \bz_N$ denote the distinct solutions of $\FFF$ and $\mu(\bz_i, \FFF)$ the multiplicity of $\bz_i$.
\end{theorem}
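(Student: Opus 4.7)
The plan is to deduce this statement from the more general Theorem~\ref{thm:main}, which gives a bound on $L^*$ in terms of arbitrary complex coefficients and the same local parameters $\LOG(\bz)$, $\LOG(\partial(\bz,\FFF)^{-1})$, $\LOG(\sigma(\bz,\FFF)^{-1})$ that appear here. Only the first summand needs to be translated to the integer setting; the other three terms can be carried over verbatim because they are already adaptive and solution-specific.

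Concretely, I would first isolate in the general version of Theorem~\ref{thm:main} the contribution coming from the coefficient magnitudes of $\FFF$ and from the precision at which the coefficients must be read. For polynomials in $\ZZ[\bx]$ with coefficients of bit-size at most $\tau_\FFF$, each coefficient has magnitude at most $2^{\tau_\FFF}$, so approximations up to $\tilde{O}(\tau_\FFF)$ bits of precision per coefficient suffice to distinguish them exactly. Combining this precision requirement with the Bézout-type factor $D_\FFF$ that governs how many product-of-polynomials evaluations the truncated-system step needs, and with the per-polynomial normalisation $1/d_i$ (since the degree of $f_i$ controls how much each coefficient is amplified when $\bz$ is substituted), one recovers $\tilde{O}(D_\FFF \cdot \max_i (d_\FFF + \tau_\FFF)/d_i)$, which is the first summand.

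The remaining three summands follow directly from the adaptivity of Theorem~\ref{thm:main}. The term $D_\FFF \LOG(\bz)$ reflects the number of bits needed to evaluate $\FFF$ accurately at a point near $\bz$ when expanded monomial-wise, accumulated across the at most $D_\FFF$ intersection points handled by the truncation; the term $\LOG(\partial(\bz,\FFF)^{-1})$ captures the multiplicity-weighted cancellation in evaluating $\FFF$ near $\bz$ due to neighbouring roots; and $(K+1)^n \LOG(\sigma(\bz,\FFF)^{-1})$ quantifies the resolution required to certify that no other root of $\FFF$ enters the polydisc during the randomized counting step. Plugging these into Theorem~\ref{thm:main} closes the proof.

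The hard part is not this specialization but Theorem~\ref{thm:main} itself. That theorem must establish (i) that the truncated degree-$k$ system faithfully reproduces the local root structure of $\FFF$ in $\mathbf{\Delta}$ once $r \le 2^{-L^*}$, with the approximation error controlled by $\partial(\bz,\FFF)$; (ii) that the randomized root-counting subroutine applied to the truncated system succeeds with probability at least $1/2$; and (iii) that the multiplicative factor in front of $\LOG(\sigma(\bz,\FFF)^{-1})$ is exactly $(K+1)^n$ and not, say, $D_\FFF$ or worse. I expect (iii) to be the trickiest dependence to pin down tightly, since this is precisely the place where the gain from working with the degree-$k$ truncation over the full system $\FFF$ is realised, and it is therefore the technical heart of the overall analysis.
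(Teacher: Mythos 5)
Your overall plan---read the statement off Theorem~\ref{thm:main} as its special case over $\ZZ$---is indeed how the paper presents this result, but the way you describe the specialization contains a genuine gap. You claim that ``only the first summand needs to be translated to the integer setting; the other three terms can be carried over verbatim.'' That is not the situation. The general form of Theorem~\ref{thm:main} gives $L^*=\tilde{O}(L_0+(K+1)^n\log(1/\delta_0)+d_\FFF\LOG(\bz))$, where $L_0$ and $\delta_0$ come from Theorem~\ref{thm:countroots}. The terms $\LOG(\partial(\bz,\FFF)^{-1})$ and $(K+1)^n\LOG(\sigma(\bz,\FFF)^{-1})$ only appear \emph{after} substituting $\delta_0=\sigma(\bz,\FFF)/(2n^2D_\FFF)^{32n}$ and unpacking the bound~(\ref{boundonL}) for $L_0$. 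That bound contains the term $-|\log\min_\ell|\operatorname{LC}(\res(\FFF^*,x_\ell))||$, and this resultant leading coefficient is precisely where integrality enters: for $f_i\in\ZZ[\bx]$ one clears denominators with an integer $\lambda$ of size $2^{\tilde{O}(n^3\log d_\FFF)}$ to get an integer resultant whose leading coefficient has absolute value at least~$1$, so $|\operatorname{LC}(\res(\FFF^*,x_\ell))|\ge\lambda^{-nD_\FFF}=2^{-O(n^4D_\FFF)}$. Without integrality this quantity is unbounded and the scheme produces no concrete $L_0$, so it is wrong to treat the non-first summands as ``carrying over verbatim''; the very appearance of $\LOG(\partial^{-1})$ and $\LOG(\sigma^{-1})$ as concrete bounds is an integer-only phenomenon.

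Your derivation of the first summand is also off the mark. You attribute it to ``approximations up to $\tilde{O}(\tau_\FFF)$ bits of precision per coefficient'' times Bézout; in the paper this summand is instead tied to $B_\FFF=\tilde{O}(nD_\FFF+\tau_\FFF\max_i D_\FFF/d_i)$ from Corollary~\ref{cor:arithmetichilbert} (the arithmetic Nullstellensatz bound on the cofactors $g_{\ell,j}$ in the ideal representation of the resultant), which feeds through Lemma~\ref{boundsonsizeandseparation} and the constant in~(\ref{boundforF:boundary}). You further need Lemma~\ref{boundsonsizeandseparation} to absorb the $\LOG(\sigma^{-1})$ appearing inside~(\ref{boundonL}) into the displayed terms, a step you do not mention. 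In short: the reduction-to-Theorem~\ref{thm:main} plan is right, and your intuition that $(K+1)^n$ is the delicate coefficient is also right, but the specialization itself is not the triviality you paint it to be --- it requires the resultant leading-coefficient bound and the $B_\FFF$ and $\delta_0$ machinery, and none of that is present in your sketch.
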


Notice that our method never yields the exact multiplicity of a solution, even in the case where there is a well separated $k$-fold solution $\bz$ in $\mathbf{\Delta}$. Instead, we only obtain the sum of the multiplicities of all solutions contained in $\mathbf{\Delta}$.
However, in the considered computational model, where only approximations of the coefficients of the input polynomials are known, it is simply not possible to achieve a stronger result. This is due to the fact that arbitrary small perturbations of the input already destroy the multiplicity structure of non-simple roots.

We see a series of applications of our method. For instance, our method can be used to verify correctness of the result provided by a numerical (non-certified) method such as 
homotopy (e.g.~\cite{DBLP:journals/toms/Verschelde99,bates2013numerically}) or subdivision methods (e.g.~\cite{DBLP:journals/jsc/MourrainP09,DBLP:conf/issac/BurrCGY08}). 
Corresponding implementations of such methods (e.g. Bertini, PHCpack, axel) are available and have proven to be efficient and reliable in practice. Suppose that such a method returns an approximation $\zeta$ of a $k$-fold solutions $\bz$ such that $\|\zeta-\bz\|_\infty<2^{-L}$, however, \emph{without any guarantee on the correctness of the result}. Now, in order to show correctness, we may run the algorithm $\cert$ with input $\FFF$, $K=k$, and $\mathbf{\Delta}=\mathbf{\Delta}_{64n(k+1)^n\cdot 2^{-L}}(\zeta)$. According to the above theorem, the method returns $k$ if the claimed result is actually correct and $L$ is large enough. 
Hence, we eventually succeed if the numerical solver provides  a sufficiently good approximation of $\bz$ together with the correct multiplicity. Again, we remark that the method does not provide a proof that there is exactly one root of multiplicity $k$, but only a proof that there $k$ roots counted with multiplicity in $\mathbf{\Delta}$.

For polynomial systems that are defined over the integers, there exist complete and certified methods (e.g.~\cite{Rouillier1999,LAZARD2009222,DBLP:conf/issac/BrandS16}) to compute isolating regions for all solutions together with the corresponding multiplicities, however, their possible application is limited in practice. In particular, if the polynomials $f_i$ are of large degree,  the running time for the necessary symbolic computations (e.g. that of a Gr\"obner Basis or resultants) becomes prohibitive. Combining our method with a numerical solver may instead yield a certified result on the existence of solutions in a certain region.

In Section~\ref{sec:experiments}, we report on preliminary implementation of our method for the special case of a bivariate system. That is, we integrated an implementation of our method in \textsc{Bisolve}~\cite{DBLP:journals/tcs/BerberichEKS13,DBLP:journals/jc/KobelS15}, a highly efficient algorithm for isolating the solutions of a bivariate polynomial systems with integer coefficients. There, it serves as an inclusion predicate to verify the existence of a $k$-fold solution of the system. Compared to the original approach in \textsc{Bisolve}, we observe a considerable improvement with respect to running time and precision demand.

\paragraph{Overview of the Algorithm.}%\label{subsec:overview}
There exists a simple method, also known as Pellet's Theorem, to count the number of roots of a univariate polynomial $f\in\CC[x]$ in a disc $D_r(m)=\{x\in \CC: |x-m|\le r\}$ of radius $r$ centered at a point $m\in\CC$. The method works as follows: We first compute the Taylor-expansion 
\[
    f[m](x):=f(m+x)=\sum\nolimits_{i\ge 0} c_i\cdot x^i=\sum\nolimits_{i\ge 0} \frac{f^{(i)}(m)}{i!}\cdot x^i
\] 
at $m$ and then check whether $|c_k|\cdot r^k>\sum_{i\neq k}|c_i|\cdot r^i$ for some $k$. Notice that the latter inequality implies that the part $c_k\cdot x^k$ of $f[m]$ of degree $k$ dominates the remaining parts on the boundary of the disc $D_r(0)$. If this is the case, then $D_r(m)$ contains exactly $k$ roots of $f$, which follows directly from Rouch\'e's Theorem applied to $f[m]$ and its degree $k$-part $c_k\cdot x^k$. 
In~\cite{DBLP:journals/corr/BeckerS0Y15}, we give sufficient conditions on $r$ and the locations of the roots with respect to $m$ such that the above inequality is fulfilled. In particular, for $m$ being a $k$-fold root of $f$, we give a bound $r_0$ in terms of the degree of $f$ and the separation of $m$ such that Pellet's Theorem applies for any $r<r_0$; see Lemma~\ref{pellet} for details. 

Our algorithm $\cert$ can be considered as an extension of Pellet's Theorem to polynomial systems. Similar as in the one-dimensional case, we make crucial use of the fact that, for a sufficiently small neighborhood $\mathbf{\Delta}$ of a $k$-fold solution $\bz$ of $\FFF$, the system 
\[
    \FFF[\bz]:f_1(\bx+\bz)=\cdots=f_n(\bx+\bz)=0 
\]
obtained by shifting each of the polynomials $f_i$ by $\bz$ is dominated by terms of degree $k$ or less. 
Hence, in order to study the local behavior of $\FFF$ at $\bz$, it should suffice to consider the truncation $\FFF[\bz]_{\le k}$ of $\FFF[\bz]$, 
where we only consider the part $f_i[\bz]_{\le k}=\sum_{\alpha:|\alpha|\le k}c_{i,\alpha}'\cdot \bx^\alpha$ of each $f_i[\bz]=f(\bx+\bz)=\sum_{\alpha}c_{i,\alpha}'\cdot \bx^\alpha$ that is of degree $k$ or less. 
In fact, in Corollary~\ref{samemultiplicity:truncated}, we prove that, for any $K\ge k$, the system $\FFF[\bz]_{\le K}$ has a $k$-fold solution at the origin, and we give a bound on its separation in terms of the separation of $\bz$ as a solution of the original system $\FFF$. 
%This might be an interesting result from a theoretical point of view, however, it is rather useless for a computational approach as we cannot directly work with $\FFF[\bz]_{\le k}$ due to the fact that $\bz$ is unknown. In fact, we even do not know whether $\zeta$ is actually a good approximation of $\bz$. However, 
%
In Theorem~\ref{thm:countroots}, we even show that if $K\ge k$, and if $\|\bm-\bz\|_\infty<2^{-L}$ for a sufficiently large $L$, then we can work with $\FFF[\bm]_{\le K}$ instead of $\FFF[\bz]$. 
Namely, in this case, $\FFF[\bm]_{\le K}$ has $k$ solutions of norm less than $4\cdot 2^{-L}$, whereas all remaining solutions have considerably larger norm, that is, larger than some value that does not depend on $L$.

We now provide an overview of our approach. For the sake of simplicity, we omit technical details and only give the main ideas. Also, we do not treat any special cases, which considerably simplifies the approach when compared to the actual algorithm as given in Section~\ref{sec:algorithm}. 
We first define $L:=\lceil\log\frac{r}{32n (K+1)^n}\rceil$ such that $\frac{r}{64n (K+1)^n}\le 2^{-L}\le\frac{r}{32n (K+1)^n}=r'$. 
Obviously, we cannot check in advance whether the above requirements on $\bm$ and $L$ are fulfilled, however, we can check whether $\FFF[\bm]_{\le K}$ has a cluster of solutions near the origin. For this, we use a complete and certified 
algorithm to compute isolating 
regions of all solutions of $\FFF[\bm]_{\le K}$ that are contained in the polydisc $\mathbf{\Delta}=\mathbf{\Delta}_r(0)$. Notice that if $K$ is small compared to the degrees of the 
polynomials $f_i$, then the cost for computing the solutions of $\FFF[\bm]_{\le K}$ is much lower 
than solving the original system directly. In particular, for $K=1$, the truncated system $\FFF[\bm]_{\le K}$ becomes a linear system in $n$ variables. Now, suppose that $\mathbf{\Delta}$ 
contains $k'$ solutions of $\FFF[\bm]_{\le K}$ ($k'$ does not have to be equal to $k$) that are well separated from the remaining solutions, then we are left to show that $\FFF[\bm]$ contains the same number of solutions in $\mathbf{\Delta}$. For this, we use a generalization of Rouch\'e's Theorem that applies to 
analytic functions in $n$-dimensional complex space; see Theorem~\ref{Rouche}. This approach requires to compute a lower 
bound $\LB$ for $\|\FFF[\bm]_{\le K}(\bx)\|_\infty:=\max_i |f_i(\bx)|$ on the boundary of $\mathbf{\Delta}$ 
as well as a corresponding upper bound $\UB$ on the error $\|\FFF[\bm](\bx)_{\le K}-\FFF[\bm](\bx)\|_\infty$ that occurs when passing from $\FFF[\bm]$ to the truncated system $\FFF[\bm]_{\le K}$. While the computation of $\UB$ is straightforward (see \eqref{def:UB} in Section~\ref{sec:algorithm}), the computation of $\LB$ is more involved. 
Namely, we first compute the hidden-variable resultant $R_\ell:=\res(\FFF[\bm]_{\le K},x_\ell)\in\QQ[x_\ell]$ with respect to each of the variables $x_\ell$; see Section~\ref{sec:preliminaries} for details on the hidden variable approach. 
The roots of $R_\ell$ are the projections of the solutions of $\FFF[\bm]_{\le K}$ on the $x_\ell$-axis, and $R_\ell$ is contained in the ideal given by the polynomials $f_{i}[\bm]_{\le K}$, that is, there exist $g_{\ell,1},\ldots,g_{\ell,n}\in\QQ[\bx]$ with
\begin{align}
    R_\ell=g_{\ell,1}\cdot f_{i}[\bm]_{\le K}+\cdots +g_{\ell,1}\cdot f_{i}[\bm]_{\le K}.
\end{align}
Using a recent result~\cite{D2013} on the arithmetic Nullstellensatz, we derive upper bounds on the absolute value of the coefficients of the polynomials $g_{\ell,1}$; see Corollary~\ref{cor:arithmetichilbert} and (\ref{bound:B}) in Section~\ref{sec:algorithm}. In addition, we use our results on Pellet's Theorem from~\cite{DBLP:journals/corr/BeckerS0Y15} to derive a lower bound for $|R_\ell|$ on the boundary of the disc $D_r(0)\subset\CC$, which is the projection of the polydisc $\mathbf{\Delta}$ into one-dimensional space; see Lemma~\ref{pellet}. Combining the latter two bounds then yields $\LB$. 
Finally, we check whether $\LB>\UB$, in which case we conclude from Rouch\'e's Theorem that $\FFF[\bm]$ has the same number of solution in $\mathbf{\Delta}$ as the truncated system $\FFF[\bm]_{\le K}$. If $\UB<\LB$, we return $-1$.

In the analysis of our algorithm, we show that if $\|\bm-\bz\|_\infty<\frac{r}{64n(K+1)^n}$ for a sufficiently small $r$, then $\LB$ approximately scales like $c\cdot r^k$ for some constant $C$, whereas $\UB$ scales like $C'\cdot r^{-(K+1)L}$ for some constant $C'$. Thus, in this case, our algorithm eventually succeeds if $K\ge k$. As already mentioned, we omitted many details in the above description. In particular, for completeness, we needed to address certain special cases. In particular, this comprises the case where $\FFF[\bm]_{\le k}$ has distinct solutions whose projections on one of the coordinate axis are (almost) equal or solutions at infinity that yield roots of the hidden variable resultant. We show how to handle such situations by means of a random rotation of the coordinate system without harming the claimed complexity bounds.

\paragraph{Implementation for the Bivariate Case.}
For the special case of a polynomial system $\FFF: f_1(x_1,x_2)=f_2(x_1,x_2)=0$ in two 
variables, with $f_1,f_2\in\ZZ[x_1,x_2]$, we implemented our algorithm in \textsc{Sage}. As an oracle for computing an arbitrary 
good approximation of a solution $\bz$ of $\FFF$, we used a subroutine of the so-called 
\textsc{Bisolve} algorithm from~\cite{DBLP:journals/tcs/BerberichEKS13,DBLP:journals/jc/KobelS15}, which currently constitutes one of the fastest exact and complete algorithm for solving bivariate systems. \textsc{Bisolve} is a classical elimination approach that 
projects the solutions of the system on each of the two coordinate axis in a first step by means of 
resultant computation and root isolation. This yields a set of points on a two-dimensional grid that 
are all possible candidates for the solutions of the system. Also, the candidates can be approximated to an arbitrary precision using root refinement for univariate polynomials. Then, in a second step, in order to check whether a 
certain candidate is a solution or not, \textsc{Bisolve} combines interval arithmetic and an inclusion 
test based on bounds on the cofactors $g_1$ and $g_2$ in the representation $R=g_1\cdot f_1+g_2\cdot g_2$ of the resultant polynomials $R$ as 
an element in the ideal $\langle f_1,f_2\rangle$. This inclusion test is similar to our approach proposed in 
this paper, however, no truncation of the original system is considered. Also, it is tailored to the bivariate case and does not yield the multiplicity of a solution. In our experiments, we replaced the original inclusion test in the \textsc{Bisolve} algorithm by $\cert$ and compared the precision demand and the 
running time to that of the original variant. We observed that, for a multiplicity $k$ of $\bz$ that is small in comparison to the degrees of the input polynomials, our novel approach outperforms the original variant. At least, for the considered instances, we observed a sub-linear dependency of the needed precision on the degrees of the input polynomials. Notice that this is not in line with the derived bounds on the precision demand, which suggest at least a quadratic dependency. However, we remark that the given bounds are just worst-case bounds. In addition, our experiments can only be considered as preliminary at the current time, nevertheless we are confident that 
future work on this topic will support our first impressions.

%!TEX root=./systems.tex

\paragraph{Related Work.}
The literature on solving zero-dimensional polynomial systems is vast and we can only give an incomplete overview. A historical summary and an overview of known techniques can be found in~\cite{LAZARD2009222} and~\cite{dickenstein2006solving}, respectively. 

 There are roughly two different classes of methods -- numeric and symbolic methods. To the best of our knowledge, all existing complete and certified algorithms are based on elimination techniques. Using Gr\"obner bases~\cite{BUCHBERGER2006475,Faugere:2002:NEA:780506.780516} or resultants, they reduce the problem of solving a multivariate system to the problem of computing the roots of a univariate polynomial. 
Such methods further allow us to compute the coordinates of all solutions in terms of rational functions in the roots of a univariate polynomial (also called Rational Univariate Representation). A corresponding implementation~\cite{Rouillier1999} has proven to be quite efficient for systems of moderate size. 
Also, these methods are well understood in theory and corresponding complexity bounds are available~\cite{DBLP:conf/issac/BrandS16}. The major drawback of these methods is that the cost for the considered symbolic operations becomes prohibitive for larger systems.  
 In contrast, numerical methods, e.g. based on subdivision techniques or homotopy continuation, often allow us to compute good approximations of the solutions. Unfortunately, they typically fail to give guarantees on the correctness of the computed results. 

One classical numeric approach is Newton's method, see~\cite[Section 13]{DBLP:conf/issac/Rump10} for a general description and an approach that uses Newton's iteration with interval arithmetic. Shub and Smale introduced $\alpha$-theory~\cite{blum1998complexity}, where they provide conditions on a simple solution such that Newton iteration is guaranteed to yield quadratic convergence. 
Recent work~\cite{HAUENSTEIN2017575} uses Newton iteration and $\alpha$-theory to verify the existence of simple solutions  of systems of polynomial-exponential equations, however, the approach does not extend to multiple solutions. In~\cite{Zhi2017}, an extension of $\alpha$-theory is introduced that allows us to also certify multiple solutions of a polynomial system in a ``numerical fashion'' as studied in this paper.

Another very popular numeric approach are homotopy continuation methods. There has been also quite some implementation effort, see PHCpack~\cite{DBLP:journals/toms/Verschelde99} and Bertini~\cite{bates2013numerically}. 
In particular, we want to mention the work by Verschelde and Haegemans~\cite{DBLP:journals/tcs/VerscheldeH94}. From a high-level point of view, their approach is similar to ours as it is also based on Rouch\'e's theorem. Their method relies on finding a sparse part of the polynomial system that dominates the rest of the system on the border of a considered region and can be used as a better starting system for homotopy based techniques. The main differences to our approach are the following. First, we use our technique to directly certify the existence of a zero, not only in order to construct a starting system for a numerical method. Moreover, the system that we use in order to approximate the input system is of lower degree, more precisely our ``dominating part'' is always of degree $k$ if $k$ is the multiplicity of the zero in the given region.\footnote{Note that it is not strictly necessary to know this parameter $k$, since a binary search for $k$ can find a good enough approximation.} In contrast to their result, we also show that the precision that is needed in order to do so directly depends on the arrangements of the zeros of the system. Van der Hoeven~\cite{vdH:homotopy} describes methods for tracking homotopy paths in a certified manner. Using an analytic variant of the geometric resolution method~\cite{Giusti1995}.   

Subdivision methods~\cite{DBLP:journals/jsc/MourrainP09,DBLP:conf/issac/BurrCGY08} are usually incomplete in the sense that they only provide exclusion predicates and lack inclusion predicates. Thus they can be used in order to compute regions that are guaranteed to be free of solutions to the system but cannot ultimately guarantee that a region contains a zero. We want to stress that our work now provides an inclusion predicate that could be included in these approaches in order to turn these methods into complete methods. 

%!TEX root=./systems.tex

\section{Mise en place}\label{sec:preliminaries}
\subsection{Notation and Definitions}\label{subsec:defnot}
We start by introducing frequently used notation and important definitions.
\begin{enumerate}
	\item For a point $\bx=(x_1,\ldots,x_n)\in\CC$, we define the \emph{norm $\|\bx\|$ of  $\bx$} to be the $\infty$-norm by default, that is, $$\|\bx\|:=\|\bx\|_\infty=\max_{i\in [n]} |x_i|.$$ In addition, we define $M(\bx):=\max(1,\|\bx\|)$ and $\LOG(\bx) = M(\log(M(\bx)))$.
	\item For a polynomial $f=\sum_{\alpha}c_{\alpha}\bx^{\alpha}\in\CC[\bx]$, we define
	      $$
	      \deg(f):=\max_{\alpha=(\alpha_1,\ldots,\alpha_n):c_{\alpha\neq 0}} \alpha_1 + \ldots + \alpha_n
	      $$ to be the \emph{(total) degree of $f$}. The \emph{norm of $f$} is defined as
	      $$\|f\|:=\|f\|_\infty=\max_{\alpha}|c_{\alpha}|.$$
	      We further define $\tau_f:=\lceil \log \|f\|\rceil$
	\item For a polynomial system $\FFF=(f_1, \ldots, f_n)$, with $f_i\in\CC[\bx]$ of total degree $d_i$, we define
	      \[
	      	d_\FFF:=\max_{i\in[n]}d_i,\quad D_\FFF:=\prod_{i\in [n]}d_i,\quad\text{and}\quad\tau_\FFF:=\max_{i\in [n]}\tau_{f_i}.
	      \]
	     $D_{\FFF}$ is also called the \emph{B\'ezout bound} in the literature. It constitutes an upper bound on the total number of solutions (counted with multiplicities) of a zero-dimensional system $\FFF$. For a system $\FFF$ with generic coefficients, it actually equals the number of solutions. 
	\item We further say that a polynomial $f=\sum_{\alpha}c_{\alpha}\bx^{\alpha}\in\ZZ[\bx]$ with \emph{integer} coefficients has \emph{magnitude} $(d,\tau)$ if $d_f\le d$ and $\tau_f\le \tau$. A system $\FFF=(f_1,\ldots,f_n)$ with $f_i\in\ZZ[\bx]$ has magnitude $(d,\tau)$ if each polynomial has magnitude $(d,\tau)$.
	\item For a polynomial $f=\sum_{\alpha}c_{\alpha}\bx^{\alpha}\in\CC[\bx]$ and a positive integer $\kappa$, we say that $\phi=\sum_{\alpha}\tilde{c}_{\alpha}\bx^{\alpha}$ is an (absolute) $\kappa$-bit approximation of $f$ if each $\tilde{c}_{\alpha}$ is a dyadic number of the form $(m+m'\cdot \mathbf{i})\cdot 2^{-(\kappa+1)}\in\QQ+\mathbf{i}\cdot\QQ$, with $m,m'\in\ZZ$, and $\|f-\phi\|\le 2^{-\kappa}$. In other words, each $\tilde{c}_{\alpha}$ approximates $c_{\alpha}$ to $\kappa$ bits after the binary point.
	\item For $\bz\in\CC^n$ and a polynomial $f\in\CC[\bx]$, we define
	      \[
	      	f[\bz](\bx):=f(\bx+\bz)=\sum_{\mathbf{\alpha}}c_{\mathbf{\alpha}} (\bx + \bz)^\mathbf{\alpha}
	      \]
	      to be the \emph{shift of $f$ to $\bz$}. For a system $\FFF=(f_1, \ldots, f_n)$, we define the \emph{shift of $\FFF$ to $\bz$} as $\FFF[\bz] = (f_1[\bz], \ldots, f_n[\bz])$.
	\item For $k\in[d]$, we denote with $f_{\le k}:=\sum_{\mathbf{\alpha}:|\alpha|\le k}c_{\mathbf{\alpha}} \bx^\mathbf{\alpha}$ the \emph{truncation of $f$ of degree $k$}.
	      For a system $\FFF=(f_1,\ldots, f_n)$, we define the \emph{truncation of $\FFF$ of degree $k$} as $\FFF_{\le k}=({f_1}_{\le k},\ldots, {f_n}_{\le k})$.
\end{enumerate}

\subsection{Error Bounds for Shifting, Truncation, and Rotation}

We first collect some bounds on the size of $|f(\bz)|$ and $\|f[\bz]\|$ depending on the modulus of some point $\bz\in\CC^n$ and the norm $\|f\|$ of some polynomial $f\in\CC[\bx]$. We also give bounds on the error that occurs when computing $f(\bz)$ or $f[\bz]$ not exactly at $\bz$ but at a nearby point $\zeta$.

\begin{lemma}\label{generalbounds}
	Let $f(\bx)=\sum_{\mathbf{\alpha}}c_{\mathbf{\alpha}} \bx^\mathbf{\alpha}\in\CC[\bx]=\CC[x_1,\ldots,x_n]$ of total degree $d$ and with $\|f\|\le 2^\tau$. Moreover, let $k\in [d]=\{1,\ldots,d\}$, $\bz\in\CC^n$, and $\zeta$ be an approximation of $\bz$ with $\|\zeta-\bz\|<2^{-L}$, then it holds:
	\begin{enumerate}[(a)]
		\item\label{parta} $|f_{\le k}(\bz)|\le \binom{n+k}{k}\cdot 2^\tau\cdot M(\bz)^k$, and in particular $|f(\bz)|<\binom{n+d}{d}\cdot 2^\tau\cdot M(\bz)^d$.
		\item\label{partb} If $\|\bz\|\le 1$, then $|f(\bz)-f_{\le k}(\bz)|\le \|\bz\|^{k+1}\cdot \binom{n+d}{d}\cdot 2^\tau$.
		\item\label{partc} $|f(\zeta) - f(\bz)| \le 2^{\tau - L}\cdot\binom{n+d}{d} \cdot M(\bz)^d$.
		\item\label{partd} $\|f[\bz]\|<d^n\cdot\binom{n+d}{d}\cdot 2^\tau\cdot M(\bz)^d$ and $\|f[\zeta]-f[\bz]\|<2^{\tau - L}\cdot d^n\cdot \binom{n+d}{d} \cdot M(\bz)^d$.
	\end{enumerate}
\end{lemma}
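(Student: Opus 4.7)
The plan is to prove each of the four items by direct triangle-inequality estimates applied monomial by monomial, relying on three recurring ingredients: coefficient bounds $|c_\alpha|\le\|f\|\le 2^\tau$; the normalization $M(\bz)\ge 1$, which gives $|\bz^\alpha|\le M(\bz)^{|\alpha|}\le M(\bz)^d$; and the counting bound $\binom{n+d}{d}$ on the number of multi-indices of degree at most $d$.

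Parts (a) and (b) are immediate applications of the triangle inequality. For (a), I bound $|f_{\le k}(\bz)|\le\sum_{|\alpha|\le k}|c_\alpha|\,|\bz^\alpha|\le \binom{n+k}{k}\cdot 2^\tau\cdot M(\bz)^k$; specializing to $k=d$ yields the claim on $|f(\bz)|$. For (b), I apply the same estimate to the tail $f(\bz)-f_{\le k}(\bz)=\sum_{|\alpha|>k}c_\alpha \bz^\alpha$; the hypothesis $\|\bz\|\le 1$ lets me replace $|\bz^\alpha|$ by $\|\bz\|^{k+1}$ whenever $|\alpha|\ge k+1$, and at most $\binom{n+d}{d}$ summands appear.

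For (c), I set $\delta:=\zeta-\bz$ and expand $\zeta^\alpha=(\bz+\delta)^\alpha=\sum_{\beta\le\alpha}\binom{\alpha}{\beta}\bz^{\alpha-\beta}\delta^\beta$ via the multinomial theorem, so that
\[
\zeta^\alpha-\bz^\alpha=\sum_{0\ne\beta\le\alpha}\binom{\alpha}{\beta}\bz^{\alpha-\beta}\delta^\beta.
\]
Every surviving term carries $|\beta|\ge 1$, hence at least one factor $\|\delta\|\le 2^{-L}$. Pulling out this factor and bounding the remaining $M(\bz)$-powers by $M(\bz)^{|\alpha|-1}\le M(\bz)^d$, then summing over $\alpha$ and absorbing the combinatorial factors into $\binom{n+d}{d}$, produces the claimed estimate.

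Part (d) is the coefficient-level analogue of (a) and (c). Expanding $f[\bz](\bx)=\sum_\alpha c_\alpha(\bx+\bz)^\alpha$ by the multinomial theorem, the coefficient of $\bx^\beta$ in $f[\bz]$ equals $\sum_{\alpha\ge\beta}c_\alpha\binom{\alpha}{\beta}\bz^{\alpha-\beta}$; the triangle inequality together with the three ingredients above produces the first inequality. For the second, I take the coefficient-wise difference $\sum_{\alpha\ge\beta}c_\alpha\binom{\alpha}{\beta}(\zeta^{\alpha-\beta}-\bz^{\alpha-\beta})$ and invoke the identity from (c) to extract one factor $\|\delta\|\le 2^{-L}$ from each summand. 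The only genuine obstacle is the bookkeeping of the multinomial coefficients $\binom{\alpha}{\beta}$ across all relevant $(\alpha,\beta)$: absorbing them (together with the outer sum) into the stated product $d^n\cdot\binom{n+d}{d}$ is precisely what accounts for the additional $d^n$ factor appearing only in (d), obtained via a uniform estimate of the form $\binom{\alpha}{\beta}\le(d+1)^n$ on products of binomial coefficients with arguments bounded by $d$.
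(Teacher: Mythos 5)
Parts~(a) and~(b) of your argument match the paper's proof exactly: triangle inequality, the monomial count $\binom{n+k}{k}$, and the bound $|c_\alpha\bz^\alpha|\le 2^\tau M(\bz)^{|\alpha|}$. For~(c) you differ in route: the paper simply cites a floating-point evaluation bound from the literature, whereas you attempt a self-contained multinomial-expansion argument. For~(d) you take essentially the same route as the paper (expand the shifted polynomial and bound the resulting binomial coefficients).

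The difficulty in both of your arguments for (c) and (d) is the same, and it is a genuine gap: multi-index binomial coefficients $\binom{\alpha}{\beta}$ with $|\alpha|\le d$ are \emph{not} polynomially bounded in $d$. In~(c), after pulling out one factor $\|\delta\|\le 2^{-L}$ and bounding the $M(\bz)$-powers crudely, each $\alpha$ still contributes $\sum_{0\ne\beta\le\alpha}\binom{\alpha}{\beta}=2^{|\alpha|}-1$, so your estimate produces $2^{\tau-L}\cdot 2^d\cdot\binom{n+d}{d}\cdot M(\bz)^d$; the $2^d$ does not ``absorb'' into $\binom{n+d}{d}$. To get anything sharper you would need to keep the powers $2^{-L|\beta|}$ attached (so higher-$|\beta|$ terms are geometrically suppressed) or telescope $\zeta^\alpha-\bz^\alpha$ coordinate-by-coordinate, which trades the $2^{|\alpha|}$ for a factor $|\alpha|\le d$. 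In~(d), the uniform estimate $\binom{\alpha}{\beta}\le(d+1)^n$ that you invoke is simply false: already with $n=1$, $\alpha=(d)$, $\beta=(\lfloor d/2\rfloor)$ gives $\binom{d}{\lfloor d/2\rfloor}\sim 2^d/\sqrt{d}\gg d+1$; the only elementary uniform bound is $\binom{\alpha}{\beta}\le 2^{|\alpha|}\le 2^d$. I should add, for fairness, that the paper's own one-line justification of~(d) --- that the Taylor-coefficient polynomials $\tfrac{\partial^\alpha f}{\alpha!}$ have norm at most $2^\tau d^n$ --- hides exactly the same coefficient $\binom{\gamma}{\alpha}$ and is equally problematic, so you have inherited the flaw rather than introduced it. It remains a flaw: a careful write-up either needs a sharper argument or must replace the $d^n$ factor by something like $2^d$.
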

\begin{proof}
	Part~\eqref{parta} and~\eqref{partb} follow immediately from the fact that $f_{\le k}$ has at most $\binom{n+k}{k}$ coefficients and each occurring term $c_\alpha\cdot \bz^{\alpha}$ has absolute value bounded by $2^\tau\cdot \|\bz\|^{|\alpha|}$.
	Part~\eqref{partc} is a direct consequence of~\cite[Theorem 12]{DBLP:journals/comgeo/MehlhornOS11}, which provides general bounds on the error when evaluating a multivariate polynomial using floating point computation. For the last claim, notice that
	\[
		f[\bz] = \sum_{\alpha \in \ZZnneg^n} \tfrac{\partial^\alpha f (\bz)}{\alpha!} \bx^\alpha \text{ and }
		f[\zeta]= \sum_{\alpha \in \ZZnneg^n} \tfrac{\partial^\alpha f (\zeta)}{\alpha!} \bx^\alpha,
	\]
	where $\alpha=(\alpha_1,\ldots,\alpha_n)$,  $\alpha!:=\alpha_1!\cdots\alpha_n!$, and $\partial^\alpha f:=\tfrac{\partial^{|\alpha|}f}{\partial x_1^{\alpha_1}\cdots\partial x_n^{\alpha_n}}$.
	The polynomials $\tfrac{\partial^\alpha f}{\alpha!}$ have total degree bounded by $d$ and their norm is upper bounded by $2^\tau\cdot d^n=2^{\tau+n\log d}$. Hence, Part~\eqref{parta} implies the first part of~\eqref{partd}. The second part follows from Part~\eqref{partc} because, for any $\alpha$, it holds that
	\[
		|\tfrac{\partial^\alpha f (\bz)}{\alpha!} - \tfrac{\partial^\alpha f (\zeta)}{\alpha!}|
		\le 2^{\tau - L}\cdot d^{n}\cdot \binom{n+d}{d}\cdot M(\bz)^d.\qedhere
	\]
\end{proof}
We further provide the following lemma that investigates the influence of considering only an approximation of a polynomial $f$ when looking at shift and truncation.
\begin{lemma}\label{lemma:truncated approximation}
	Let $f(\bx)\in\CC[\bx]$ be a polynomial of total degree $d$ with norm $\|f\|\le 2^\tau$, and let $\bz\in\CC^n$ and $\zeta$ such that $\|\zeta-\bz\|<2^{-L}$. Furthermore, let $\phi$ be an approximation of $f[\zeta]_{\le k}$ of total degree at most $k$, with $k\in [d]$, such that
	$\|\phi - f_i[\zeta]_{\le k}\|\le 2^{-(k+1)L}$.
	Then, for any $\bx$ with $\|\bx\| \in [2^{-L},1]$, it holds
	\begin{align*}
		|\phi(\bx)-f[\zeta](\bx)|
		\le
		\|\bx\|^{k+1} \cdot d^n 2^{\tau + 1} [M(\zeta)\cdot (n+d)^2]^d.
	\end{align*}
\end{lemma}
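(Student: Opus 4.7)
The plan is to split $|\phi(\bx) - f[\zeta](\bx)|$ into an \emph{approximation error} coming from the fact that $\phi$ is only a numerical approximation of the truncated shift $f[\zeta]_{\le k}$, and a \emph{truncation error} coming from dropping the high-degree terms of $f[\zeta]$. Concretely, I would start from the triangle inequality
\[
|\phi(\bx)-f[\zeta](\bx)| \;\le\; |\phi(\bx)-f[\zeta]_{\le k}(\bx)| \;+\; |f[\zeta]_{\le k}(\bx)-f[\zeta](\bx)|
\]
and estimate the two summands independently.

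For the approximation error I would use that $\phi-f[\zeta]_{\le k}$ is a polynomial of total degree at most $k$ with coefficient norm at most $2^{-(k+1)L}$; since such a polynomial has at most $\binom{n+k}{k}$ monomials and $\|\bx\|\le 1$, the evaluation is bounded by $\binom{n+k}{k}\cdot 2^{-(k+1)L}$. This is the place where the hypothesis $\|\bx\|\ge 2^{-L}$ enters: it lets us rewrite $2^{-(k+1)L}=(2^{-L})^{k+1}\le \|\bx\|^{k+1}$ and so absorb the precision factor into the desired $\|\bx\|^{k+1}$ leading factor, yielding a clean bound of the form $\binom{n+k}{k}\|\bx\|^{k+1}$. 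This conversion from an additive precision to a multiple of $\|\bx\|^{k+1}$ is the only delicate point and is the key reason the lemma requires $\|\bx\|\in[2^{-L},1]$.

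For the truncation error I would apply Lemma~\ref{generalbounds}\eqref{partb} to the polynomial $f[\zeta]$, which has total degree $d$ and $\|\bx\|\le 1$; this yields
\[
|f[\zeta]_{\le k}(\bx)-f[\zeta](\bx)| \;\le\; \|\bx\|^{k+1}\cdot\binom{n+d}{d}\cdot \|f[\zeta]\|.
\]
The norm of the shifted polynomial is then controlled by Lemma~\ref{generalbounds}\eqref{partd}, giving $\|f[\zeta]\|\le d^n\binom{n+d}{d}2^\tau M(\zeta)^d$.

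It remains to combine the two estimates and simplify. Using $\binom{n+k}{k}\le \binom{n+d}{d}$ and $\binom{n+d}{d}\le (n+d)^d$ (so $\binom{n+d}{d}^2\le (n+d)^{2d}$), and the trivial fact that $d^n 2^\tau M(\zeta)^d (n+d)^{2d}\ge 1$, both error terms are dominated by $\tfrac12\cdot d^n 2^{\tau+1}[M(\zeta)(n+d)^2]^d\cdot\|\bx\|^{k+1}$, and adding them gives the claimed bound. The only non-routine step is the switch from additive precision to the $\|\bx\|^{k+1}$ form described above; the rest reduces to collecting constants from Lemma~\ref{generalbounds}.
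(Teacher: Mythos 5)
Your proposal is correct and follows essentially the same route as the paper's own proof: triangle inequality splitting into approximation and truncation errors, the monomial-count bound plus $\|\bx\|\ge 2^{-L}$ to absorb the $2^{-(k+1)L}$ precision into the $\|\bx\|^{k+1}$ factor, Lemma~\ref{generalbounds}\eqref{partd} to control $\|f[\zeta]\|$, and the same final simplification via $\binom{n+d}{d}\le(n+d)^d$. The only cosmetic difference is that you cite part~\eqref{partb} explicitly for the truncation term where the paper instead inlines the monomial-count argument; the underlying estimate is identical.
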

\begin{proof}
	We first observe that using the triangle inequality, simple bounds on the number of monomials of lower ($\le k$) and higher ($\ge k+1$) degree, and the fact that $\|\bx\|\le 1$ yields
	\begin{align*}
		|\phi(\bx)-f[\zeta](\bx)|
		  & \le |f[\zeta](\bx)-f[\zeta]_{\le k}(\bx)| + |\phi(\bx)-f[\zeta]_{\le k}(\bx)| \\
		  & \le \|\bx\|^{k+1}\cdot \|f[\zeta]-f[\zeta]_{\le k}\|\cdot \binom{n+d}{d}
		+ \|\phi-f[\zeta]_{\le k}\|\cdot \binom{n + k}{k}.
	\end{align*}
	Then, applying Lemma~\ref{generalbounds} part~\eqref{partd} to the left summand and the condition on the approximation $\|\phi - f[\zeta]_{\le k}\|\le 2^{-(k+1)L}$, we conclude that
	\begin{align*}
		|\phi(\bx)-f[\zeta](\bx)|
		  & \le \|\bx\|^{k+1}\cdot d^n\cdot\binom{n+d}{d}^2\cdot 2^\tau\cdot M(\zeta)^d+2^{-(k+1)L}\cdot\binom{n + k}{k} \\
		  & \le \|\bx\|^{k+1}\cdot
		[d^n \cdot 2^\tau \cdot (n+d)^{2d}\cdot M(\zeta)^d+(n+d)^d]\\
		  & \le \|\bx\|^{k+1} \cdot d^n 2^{\tau + 1} [M(\zeta)\cdot (n+d)^2]^d,
	\end{align*}
	where the second to last inequality follows from
	$\|\bx\|\ge 2^{-L}$.
\end{proof}

In our algorithm, we will consider a transformation of the coordinate system induced by a rotation $\bx\mapsto S\cdot \bx$, where $S\in \operatorname{SO}(n)$ is a rotation matrix with rational entries. The following lemma quantifies the impact of such a rotation on the bit-size of the coefficients of a given polynomial $f$.
\begin{lemma}\label{lem:rotation_coeffsize}
	Let $f=\sum_{\alpha}c_{\alpha}\cdot \bx^{\alpha}\in\CC[\bx]$ be a polynomial of total degree $d$ and $S\in \operatorname{SO}(n)$ be a rotation matrix. Then, $f^*:=f\circ S$, it holds that $\|f^*\|\le 2^{\tau_\FFF}\cdot \binom{n+d}{d}^2$.
\end{lemma}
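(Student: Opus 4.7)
The proof is a direct calculation: I would expand $f^*(\bx) = f(S\bx)$ monomial-by-monomial and bound each resulting coefficient, exploiting that every entry of $S$ has absolute value at most $1$ because $S \in \operatorname{SO}(n)$ has rows of Euclidean norm one.

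First I would write
$f^*(\bx) = \sum_\alpha c_\alpha\,(S\bx)^\alpha = \sum_\alpha c_\alpha\prod_{i=1}^n\Bigl(\sum_{j=1}^n S_{ij}\,x_j\Bigr)^{\alpha_i}$
and apply the multinomial theorem to each inner factor, so as to expand everything into monomials in $\bx$.

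Next, I would fix a target monomial $\bx^\beta$ with $|\beta|\le d$ and collect all contributions to its coefficient. Since $(S\bx)^\alpha$ is homogeneous of degree $|\alpha|$, only multi-indices $\alpha$ with $|\alpha|=|\beta|$ contribute. The resulting coefficient is a signed sum of products of multinomial coefficients $\prod_i\binom{\alpha_i}{\gamma^{(i)}}$ against products of entries $S_{ij}^{\gamma^{(i)}_j}$, where the $\gamma^{(i)}$ range over multi-indices with $|\gamma^{(i)}|=\alpha_i$ and $\sum_i\gamma^{(i)}=\beta$. Using $|S_{ij}|\le 1$ and the triangle inequality, the bound on $|[\bx^\beta]f^*|$ reduces to a purely combinatorial sum.

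To evaluate that combinatorial sum, I would compare two expansions of $\prod_i(y_1+\cdots+y_n)^{\alpha_i}=(y_1+\cdots+y_n)^{|\alpha|}$: expanding each factor by the multinomial theorem and then equating the coefficient of $\prod_j y_j^{\beta_j}$ to the direct multinomial expansion on the right yields $\sum_{(\gamma^{(i)})}\prod_i\binom{\alpha_i}{\gamma^{(i)}} = \binom{|\alpha|}{\beta}$, where the sum is taken under the constraint $\sum_i\gamma^{(i)}=\beta$. Combining with the bound $|c_\alpha|\le 2^{\tau_{\FFF}}$, the fact that there are at most $\binom{n+d}{d}$ values of $\alpha$ with $|\alpha|\le d$, and the bound $\binom{|\alpha|}{\beta}\le \binom{n+d}{d}$, yields $|[\bx^\beta]f^*|\le 2^{\tau_{\FFF}}\binom{n+d}{d}^2$, and taking the maximum over $\beta$ gives the claim.

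The main technical difficulty I anticipate is the bound on the multinomial $\binom{|\alpha|}{\beta}$ by $\binom{n+d}{d}$; an alternative route that avoids this step is to invoke Cauchy's integral formula on the unit polytorus, using the estimate $|(S\bx)_i|\le\sum_j|S_{ij}|\le \sqrt n$ (from orthogonality via Cauchy--Schwarz) to bound $|f^*(\bx)|$ uniformly and extract coefficient estimates directly. Either route reduces the argument to a few lines of elementary combinatorics, together with the single structural input that $S$ has all entries at most $1$ in absolute value.
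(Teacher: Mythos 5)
Your proof follows the same combinatorial route as the paper's proof, and the step you flagged as the technical crux is exactly where both arguments break: the inequality $\binom{|\alpha|}{\beta}\le\binom{n+d}{d}$ is false. Already for $n=2$, $d=8$ one has $\max_{|\beta|=8}\binom{8}{\beta}=\binom{8}{4,4}=70>\binom{10}{8}=45$, and the gap grows exponentially, since the central multinomial $\binom{d}{d/2,\,d/2}\sim 2^d/\sqrt d$ while $\binom{d+2}{d}\sim d^2/2$ for $n=2$. In fact the lemma as stated is false: take $f=x_1^{50}$ (so $\tau_\FFF=0$) and $S$ the rotation by $\pi/4$; then $f^*=2^{-25}(x_1-x_2)^{50}$ has middle coefficient $2^{-25}\binom{50}{25}\approx 3.8\cdot 10^6$, exceeding the claimed bound $\binom{52}{50}^2=1326^2\approx 1.8\cdot 10^6$. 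Your Vandermonde-type convolution identity $\sum_{(\gamma^{(i)})}\prod_i\binom{\alpha_i}{\gamma^{(i)}}=\binom{|\alpha|}{\beta}$ is correct and is exactly what the paper's vaguer phrase ``at most $\binom{n+d}{d}$ terms contributing to a specific monomial'' was trying (and failing) to bound; what the argument actually delivers is $\|f^*\|\le 2^{\tau_\FFF}\binom{n+d}{d}\max_{k\le d,\,|\beta|=k}\binom{k}{\beta}\le 2^{\tau_\FFF}\binom{n+d}{d}\,n^d$.

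Your Cauchy-integral fallback does not rescue the stated bound either, though it gives the sharper constant $n^{d/2}$ in place of $n^d$: on the unit polytorus $|(S\bx)_i|\le\sum_j|a_{ij}|\le\sqrt n$ by Cauchy--Schwarz, hence $|f^*(\bx)|\le 2^{\tau_\FFF}\binom{n+d}{d}n^{d/2}$ and the same bound transfers to each coefficient. Either way the corrected estimate carries an extra factor exponential in $d$ that $\binom{n+d}{d}^2$ does not account for, and the paper's own one-line proof has the identical gap. The good news is that downstream (in the proof of Theorem~\ref{thm:countroots} and in Step~3 of the algorithm) the lemma is only invoked to conclude $\tau_{f^*}=\tau_\FFF+\tilde O(d_\FFF)$, which the corrected bound $2^{\tau_\FFF}\binom{n+d}{d}n^{d/2}$ still delivers; so the defect is confined to this lemma and does not propagate to the main complexity statements.
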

\begin{proof}
	Notice that each of the entries $a_{r,s}$ of the rotation matrix $S=(a_{r,s})_{r,s}$ has absolute value at most $1$. Thus, $f^*(\bx)=f\circ S(\bx)=\sum_{\alpha:c_{\alpha}}c_{\alpha}\cdot [(a_{11}x_1+\cdots+a_{1,n}x_n)^{\alpha_1}\cdots (a_{n1}x_1+\cdots+a_{n,n}x_n)^{\alpha_n}]$ has coefficients of absolute value bounded by
	$
		2^{\tau_\FFF}\cdot \binom{n+d}{d}^2$
		as, when expanding the product $(a_{11}x_1+\cdots+a_{1,n}x_n)^{\alpha_1}\cdots (a_{n1}x_1+\cdots+a_{n,n}x_n)^{\alpha_n}$ for a fixed $\alpha$, there can be at most $\binom{n+d}{d}$ terms contributing to a specific monomial $\bx^{\alpha'}$.
		\end{proof}

\subsection{The Hidden-Variable Approach}

Let us assume that an arbitrary zero-dimensional system $\mathcal{F}=(f_1,\ldots, f_n)$ as in~\eqref{formula:original_system} is given. That is, $f_i$ has total degree $d_i$, $\|f_i\|<2^{\tau_i}$ for all $i$, and it is assumed that the total number of solutions of $\mathcal{F}=0$, also at ``infinity'' (see the considerations below for an explanation), is finite.
We now briefly describe the so-called hidden-variable approach that allows us to project the zeros of the system on an arbitrary coordinate axis. For more details, we recommend the excellent textbook~\cite{cox2005using} by Cox, Little, and O'Shea.

In a first step, we consider a homogenization of the system, that is, we introduce an additional (homogenizing) variable $x_{n+1}$ and multiply each occurring term in each $f_i$ with a suitable power of $x_{n+1}$  such that the so obtained polynomials $f_i^h\in\CC[x_1,\ldots, x_{n+1}]$ are homogenous and of total degree $d_i$, respectively; see also the example below. Notice that each solution $(x_1,\ldots,x_n)\in\CC$ of $\mathcal{F}=0$ yields a solution $(x_1,\ldots,x_n,1)$ of the  homogenized system
\begin{align}\label{formula:homog_system}
	\mathcal{F}^h:\quad f_1^h(x_1,\ldots, x_{n+1})=\ldots =f_n^h(x_1,\ldots, x_{n+1}) = 0
\end{align}
In addition, if $(x_1,\ldots,x_{n+1})\in\mathbb{C}^{n+1}$ is a solution of  $\mathcal{F}^h=0$, then $(t\cdot x_1,\ldots,t\cdot x_{n+1})$ is a zero of $\mathcal{F}^h$ for all $t\in\mathbb{C}$. In particular, if $x_{n+1}\neq 0$, we can set $t=1/x_{n+1}$, which yields the solution $(x_1/x_{n+1},\ldots,x_{n}/x_{n+1})$ of $\FFF=0$.
It is thus preferable to consider the set $S$ of solutions of the above homogenized system as a set of points in the $n$-dimensional projective space $\PP^n$. The set $S$ then decomposes into the set $S_{<\infty}=\{(x_1:\ldots:x_{n+1})\in S:x_{n+1}=1\}$ of so-called \emph{affine solutions}, for which $x_{n+1}=1$, and the set $S_{\infty}=\{(x_1:\ldots:x_{n+1}):x_{n+1}=0\}$ of \emph{solutions at infinity}, for which $x_{n+1}=0$. Notice that there is a one-to-one correspondence between the affine solutions of the homogenized system and the solutions of the original system~\eqref{formula:original_system}.

As mentioned above, we aim to compute the projections of the solutions of $\mathcal{F}=0$ on one of the coordinate axis, say w.l.o.g., $x=x_1$. For this, suppose that we fix some value $\xi$ for $x_1$. Plugging $x_1=\xi$ into the initial system then yields the specialized system
\begin{align*}
	\mathcal{F}^{[\xi]}:\quad f_1^{[\xi]}(x_2,\ldots,x_{n})=\ldots = f_n^{[\xi]}(x_2,\ldots, x_{n}) = 0,
\end{align*}
with $f_i^{[\xi]}(x_2,\ldots,x_{n}):=f_i(\xi,x_2,\ldots,x_{n})$
and the corresponding homogenized system
\begin{align}\label{formula:homogeneous_system:specialized}
	(\mathcal{F}^{[\xi]})^h:\quad (f_1^{[\xi]})^h(x_2,\ldots, x_{n+1})=\ldots = (f_n^{[\xi]})^h(x_2,\ldots, x_{n+1}) = 0,
\end{align}
where $(f_i^{[\xi]})^h$ denotes the homogenization of $f^{[\xi]}_i$.
Notice that, in general, $(f_i^{[\xi]})^h$ does not equal $(f_i^h)^{[\xi]}=f_i^h(\xi,x_2,\ldots,x_{n+1})$, that is, we cannot deduce the system in (\ref{formula:homogeneous_system:specialized}) from plugging $\xi$ into the homogenized system in (\ref{formula:homog_system}). The reason is that the total degree of $f_i$ may become smaller for certain values for $\xi$, and thus homogenization does not commute with specialization.\\

\noindent\textit{Example.} For $f:=x_1 x_2^3-2 x_2^3+x_3 x_1+x_3^2$ and $\xi=2$, we have $f^h=x_1 x_2^3-2 x_2^3 x_4+x_3 x_1 x_4^2+x_3^2 x_4^2$, $f^{[\xi]}=f(2,x_2,x_3)=2x_3+x_3^2$, and $(f^{[\xi]})^h=2x_3x_4+x_3^2$, which does not equal $(f^h)^{[\xi]}=f^h(2,x_2,x_3,x_4)=2 x_2^3-2 x_2^3x_4+2x_3 x_4^2+x_3^2 x_4^2$.\\

You may notice that (\ref{formula:homogeneous_system:specialized}) is a polynomial system consisting of $n$ homogenous polynomials in $n$ variables. If the initial homogenized system had a solution with $x_1=\xi$, then this would yield a solution of (\ref{formula:homogeneous_system:specialized}) and vice versa. In other words, $\xi$ would be the projection of a solution of the initial system.
The following important result now gives a necessary and sufficient criteria to check whether this is actually the case.

\begin{theorem}[\cite{cox2005using}, Chapter 3, Theorems 2.3 and 3.1]\label{thm:resultant}
	Let $\GGG$ be a system of $n$ homogeneous polynomials in $n$ variables of total degrees $d_1,\ldots,d_n$. Then, there is a unique polynomial
	\footnote{We remark that $\res$ only depends on the actual degrees of the polynomials.} $\res(\GGG)=\res_{d_1,\ldots,d_n}\in\ZZ[\mathbf{u}]$ in the coefficients $\mathbf{u}$ of $\GGG$ if and only if $\GGG(\bx) = 0$ has a non-trivial solution $\bx\in\mathbb{P}^{n-1}$. $\res(\GGG)$ is homogeneous in the variables of $f_i$ of degree $d_1\cdots d_{i-1}\cdot d_{i+1}\cdots d_n$
	and its total degree equals $\sum_{i=1}^{n}d_1\cdots d_{i-1}\cdot d_{i+1}\cdots d_n$.
\end{theorem}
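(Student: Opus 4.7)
The plan is to construct the resultant via elimination theory applied to the incidence variety
\[
W := \{(\mathbf{u},\bx) \in \mathbb{A}^N \times \mathbb{P}^{n-1} : f_1(\bx) = \cdots = f_n(\bx) = 0\},
\]
where $\mathbb{A}^N$ parametrizes all coefficient tuples $\mathbf{u}$ for systems $\GGG=(f_1,\ldots,f_n)$ of $n$ homogeneous polynomials of degrees $d_1,\ldots,d_n$. The first step is to show that $W$ is an irreducible variety of dimension $N-1$ by viewing it as a vector bundle over $\mathbb{P}^{n-1}$: for each fixed $\bx$, the fiber in $\mathbb{A}^N$ is cut out by the $n$ linear conditions $f_i(\bx)=0$ (each linear in the coefficients of $f_i$), so it is a linear subspace of codimension $n$. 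Combined with $\dim \mathbb{P}^{n-1}=n-1$, this gives $\dim W = N-1$.

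Next I would analyze the projection $\pi\colon W \to \mathbb{A}^N$ onto the first factor. By the projective elimination theorem, $\pi(W)$ is Zariski-closed in $\mathbb{A}^N$; since $\dim \pi(W) \le \dim W = N-1$ and $\pi(W)$ is irreducible (being the image of an irreducible variety), it lies inside a unique irreducible hypersurface. To conclude that $\pi(W)$ \emph{equals} this hypersurface, it suffices to exhibit one tuple outside $\pi(W)$; the standard choice $f_i = x_i^{d_i}$ has only the origin as common zero, hence no point of $W$ maps to its coefficient tuple. Because $W$ is defined over $\mathbb{Z}$, the unique (up to scalar) irreducible defining polynomial of $\pi(W)$ is rational, and Gauss's lemma lets us normalize it to have coprime integer coefficients, yielding $\res(\GGG) \in \ZZ[\mathbf{u}]$ whose vanishing characterizes the existence of a nontrivial common zero of $\GGG$.

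For the degree statement, I would exploit weighted homogeneity. The torus action rescaling all coefficients of $f_i$ by an independent factor $t_i$ leaves the zero locus in $\mathbb{P}^{n-1}$ untouched, so
\[
\res(t_1 f_1,\ldots,t_n f_n) = t_1^{e_1}\cdots t_n^{e_n}\cdot \res(f_1,\ldots,f_n),
\]
where each $e_i$ is by definition the degree of $\res$ in the coefficients of $f_i$. To pin down $e_i = \prod_{j\neq i}d_j$, I would specialize each $f_j$ with $j\neq i$ to a product of $d_j$ generic linear forms; by B\'ezout these share exactly $s := \prod_{j\neq i} d_j$ common zeros $\xi_1,\ldots,\xi_s$ in $\mathbb{P}^{n-1}$, and after this specialization $\res$ factors, up to a nonzero constant, as $\prod_{k=1}^{s} f_i(\xi_k)$, a product of $s$ linear forms in the coefficients of $f_i$. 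Summing the $e_i$ then gives the total-degree formula $\sum_i\prod_{j\neq i} d_j$.

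The main technical obstacle is the closedness of $\pi(W)$, which is not automatic for maps between affine varieties and requires the projective elimination theorem, namely that the image of a closed subvariety of $Y\times \mathbb{P}^m$ under projection to $Y$ is closed. Once this is invoked, existence, uniqueness and integrality of $\res$ are essentially formal, and the degree formulas reduce to torus-character arithmetic combined with the B\'ezout specialization argument sketched above.
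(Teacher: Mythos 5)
The paper does not prove Theorem~\ref{thm:resultant} itself --- it is imported verbatim from Cox, Little and O'Shea --- so there is no internal proof to compare against. What you have written is essentially the classical construction of the resultant via the incidence correspondence, and most of it is sound: the dimension count for $W$, irreducibility via the affine-bundle structure, closedness of $\pi(W)$ by projective elimination, normalization to $\ZZ[\mathbf{u}]$ via Gauss, multi-homogeneity from the torus action, and the specialization to products of linear forms are all the right ingredients.

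There is, however, a genuine gap in the step where you conclude that $\pi(W)$ is a hypersurface. From $\dim\pi(W)\le N-1$ and irreducibility you write that $\pi(W)$ ``lies inside a unique irreducible hypersurface,'' and that exhibiting one coefficient tuple outside $\pi(W)$ then forces equality. Neither claim is correct as stated: an irreducible subvariety of codimension at least two is contained in \emph{infinitely many} irreducible hypersurfaces, and showing $\pi(W)\subsetneq \mathbb{A}^N$ only rules out codimension zero, which was already known from $\dim W=N-1<N$. Your example $f_i=x_i^{d_i}$ is therefore aimed in the wrong direction: it is a system with \emph{no} common zero, and it cannot control the dimension of $\pi(W)$. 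What is actually needed is the opposite kind of witness, namely a coefficient tuple $\mathbf{u}_0\in\pi(W)$ whose fiber $\pi^{-1}(\mathbf{u}_0)$ is finite --- for instance $f_1,\dots,f_{n-1}$ generic (so they meet in exactly $d_1\cdots d_{n-1}$ points of $\mathbb{P}^{n-1}$) and $f_n$ vanishing at exactly one of these. Upper semicontinuity of fiber dimension then gives that the generic fiber of $\pi|_W$ is zero-dimensional, so $\dim\pi(W)=\dim W=N-1$, and only at that point is $\pi(W)$ a hypersurface with a principal vanishing ideal, yielding $\res$ as its irreducible generator. A smaller remark on the degree computation: after specializing the $f_j$, $j\neq i$, you assert that $\res$ equals $\prod_k f_i(\xi_k)$ up to a constant, but a priori you only know the two have the same zero locus; one must argue that the multiplicities are all one (equivalently, that $\res$ restricted to this slice is squarefree), for instance by genericity of the $\xi_k$, before reading off $e_i=\prod_{j\neq i}d_j$.
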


\noindent\textit{Example.}  The homogeneous system $\GGG: ax_1^2+bx_1x_2+cx_2^2=dx_1+ex_2=0$ (with general coefficients $a,b,c,d$, and $e$) has a solution in $\mathbb{P}^1$ if and only if the involved coefficients fulfill the equality $\res(\GGG)=\res_{2,1}=ae^2-be+cd=0$.\\

For an arbitrary polynomial system $\GGG$ consisting of $n+1$ (not necessarily homogenous) polynomials in $\CC[x_1,\ldots,x_n]$, we simply define $\res(\GGG)=\res(\GGG^h)$. Since $\GGG$ has the same coefficients as $\GGG^h$, it still holds that $\res(\GGG)$ is a polynomial in the coefficients of $\GGG$. In addition, since there is a one-to-one correspondence between the solutions of $\GGG$ and the affine solutions of $\GGG^h$, it follows that $\res(\GGG) = 0$ if and only if $\GGG^h=0$ has a solution in $\mathbb{P}^n$.

Now, in order to compute all values $\xi$ such that there exists a solution $(x_1,\ldots,x_n)$ of our initial system $\mathcal{F}=0$ with $x_1=\xi$, we aim to apply the above theorem to the system as defined in (\ref{formula:homogeneous_system:specialized}), however we now consider $\xi$ as an indeterminate (so called \emph{hidden variable}) rather than a fixed value. There are some subtleties with this approach. In particular, the degrees of the polynomials
$f_i^{[\xi]}$ may be different for certain values for $\xi$, which is crucial as the definition of the resultant polynomial $\res$ strongly depends on the degrees of the given polynomials. However, we can avoid such critical situations if we assume that the given polynomials $f_i$ fulfill some mild prerequisites.

\begin{lemma}\label{lem:mildconditions}
	Suppose that each polynomial $f_i$ contains a term of total degree $d_i$ \emph{that does not depend on $x_1$} and write $$f_i=\sum_{\alpha=(\alpha_2,\ldots,\alpha_n)} c_{i,\alpha}(x_1)\cdot x_2^{\alpha_2}\cdots x_n^{\alpha_n}\in\mathbb{C}[x_1][x_2,\ldots,x_n]$$
	as a polynomial in $x_2,\ldots,x_n$ with coefficients $c_{i,\alpha}\in\mathbb{C}[x_1]$. Furthermore, let
	$$
	F_i:=\sum_{\alpha=(\alpha_2,\ldots,\alpha_n)} c_{i,\alpha}(x_1)\cdot x_2^{\alpha_2}\cdots x_n^{\alpha_n}\cdot x_{n+1}^{d_i-\alpha_2-\ldots-\alpha_n}
	$$
	be its corresponding homogenization (with respect to the variables $x_2,\ldots,x_n$), then it holds:
	\begin{itemize}
		\item[(a)] For all $\xi\in\mathbb{C}$, we have $(f_i^{[\xi]})^{h}=F_i^{[\xi]}$ and $(f_i^{[\xi]})^h$ has total degree $d_i$.
		\item[(b)] Each root $x_1=\xi\in\CC$ of $R(x_1):=\res(F_1,\ldots,F_n)$ yields a solution $(x_1,\ldots,x_n)\in\CC^{n}$ of $\mathcal{F}=0$ with $x_1=\xi$ and vice versa.
	\end{itemize}
\end{lemma}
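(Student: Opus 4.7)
The plan is to prove (a) by a direct monomial-by-monomial comparison and then deduce (b) by applying Theorem~\ref{thm:resultant} to the specialized homogeneous system $F_1^{[\xi]},\ldots,F_n^{[\xi]}$, using (a) to identify this with the homogenization $(f_1^{[\xi]})^h,\ldots,(f_n^{[\xi]})^h$ of the specialized affine system. The crucial role of the hypothesis is to guarantee that $\deg f_i^{[\xi]}=d_i$ for every $\xi\in\CC$, so that homogenization commutes with specialization.

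For part (a), observe that the distinguished term of $f_i$ of total degree $d_i$ with coefficient constant in $x_1$ is preserved verbatim under $x_1\mapsto\xi$, hence $f_i^{[\xi]}$, viewed as a polynomial in $x_2,\ldots,x_n$, still has degree exactly $d_i$. Then both $(f_i^{[\xi]})^h$ and $F_i^{[\xi]}$ are obtained by multiplying each surviving monomial $c_{i,\alpha}(\xi)\,x_2^{\alpha_2}\cdots x_n^{\alpha_n}$ by the balancing factor $x_{n+1}^{d_i-\alpha_2-\cdots-\alpha_n}$ — by the definition of homogenization of a degree-$d_i$ polynomial for the former, and by specializing $x_1=\xi$ in the already-homogeneous $F_i$ for the latter — so the two polynomials agree, and the common result is homogeneous of total degree $d_i$ in $x_2,\ldots,x_{n+1}$.

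For part (b), I would use that the resultant polynomial commutes with specialization of its coefficients, so $R(\xi)=\res(F_1^{[\xi]},\ldots,F_n^{[\xi]})$, which is well-defined thanks to (a). Theorem~\ref{thm:resultant} applied to the $n$ homogeneous polynomials $F_i^{[\xi]}=(f_i^{[\xi]})^h$ of total degrees $d_1,\ldots,d_n$ in the $n$ variables $x_2,\ldots,x_{n+1}$ then tells us that $R(\xi)=0$ iff the system admits a non-trivial projective zero $(x_2:\cdots:x_{n+1})\in\PP^{n-1}$. If such a zero has $x_{n+1}\neq 0$, rescaling to $x_{n+1}=1$ produces $f_i^{[\xi]}(x_2,\ldots,x_n)=0$ for every $i$, so $(\xi,x_2,\ldots,x_n)$ is an affine solution of $\FFF$. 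Conversely, any affine solution of $\FFF$ with $x_1=\xi$ yields the projective zero $(x_2:\cdots:x_n:1)$ of the specialized homogeneous system, forcing $R(\xi)=0$.

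The main obstacle I expect is the forward direction of (b): one has to exclude the possibility that every non-trivial projective zero of $F_i^{[\xi]}$ lies at infinity, that is, has $x_{n+1}=0$, since such a zero does not rescale to an affine solution. The hypothesis partially helps, because the leading forms $F_i^{[\xi]}|_{x_{n+1}=0}$ inherit the nonzero $x_1$-independent degree-$d_i$ monomials of $f_i$ and are therefore not identically zero, making a simultaneous zero at infinity a non-generic event; full exclusion requires placing the coordinate system in sufficiently general position, which is exactly the purpose of the random rotation sketched in the overview. Granted that genericity, the forward direction of (b) follows and the lemma is established.
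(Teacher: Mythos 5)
Your part (a) is fine and matches the paper's argument. For part (b), your reduction to Theorem~\ref{thm:resultant} via specialization is on the right track, but you leave the crucial step unproven --- ruling out that every non-trivial projective zero of $F_1^{[\xi]},\ldots,F_n^{[\xi]}$ has $x_{n+1}=0$ --- deferring it instead to a genericity/rotation assumption that is not among the lemma's hypotheses. The lemma is an unconditional claim under zero-dimensionality and the degree hypothesis alone, so this is a genuine gap: as written, your proposal would establish only a generic variant of the statement, not the lemma itself.

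The paper closes precisely this gap from the stated hypotheses, without any rotation. Suppose for contradiction that $(\xi_2:\cdots:\xi_n:0)$ is a common projective zero of the specialized system, so that $(\xi_2,\ldots,\xi_n)\neq 0$ annihilates $F_i|_{x_{n+1}=0}$ for every $i$. Restricting to $x_{n+1}=0$ keeps only those terms of $f_i$ with $\alpha_2+\cdots+\alpha_n=d_i$, and since $f_i$ has total degree $d_i$, each such coefficient $c_{i,\alpha}(x_1)$ has degree zero in $x_1$, i.e.\ is a constant. Hence $F_i|_{x_{n+1}=0}$ is independent of $x_1$, so the non-trivial common zero persists for \emph{every} value of $x_1$; the paper then argues that this contradicts the standing assumption that $\FFF$ has only finitely many solutions, including those at infinity. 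In other words, the hypothesis you characterized as only ``partially helping'' --- the $x_1$-independence of the degree-$d_i$ terms --- is exactly what forces the $x_{n+1}=0$ slice to be constant in $x_1$, and combined with zero-dimensionality this is the mechanism the paper uses to exclude zeros at infinity; random rotation plays no role in this lemma.
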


\begin{proof}
	Part (a) follows directly from the fact that the total degree of $f_i^{[\xi]}$ is equal to $d_i$ for all $\xi$ as there exists a term of degree $d_i$ that does not depend on $\xi$. For (b), we first remark that the resultant of the polynomials $F_i$ is a polynomial in the coefficients of the $F_i$, and thus a polynomial in $x_1$. Since the degree of each $f_i$ does not depend on the choice of $x_1=\xi$, we also have $R(\xi)=\res(F_1|_{x_1=\xi},\ldots,F_n|_{x_1=\xi})$. Now, let $x_1=\xi$ be a complex root of $R$, then according to Theorem~\ref{thm:resultant}, there must exist a solution $(\xi_2:\ldots:\xi_{n+1})\in\mathbb{P}^{n-1}$ of the system $(F_i|_{x_1=\xi})_{i=1,\ldots,n}$. In order to prove that this solution is an affine solution (i.e. a solution of $\FFF$), we assume for contradiction that $\xi_{n+1}=0$. Plugging $x_{n+1}=0$ into the polynomials $F_i$ yields
	\[
		F_i|_{x_{n+1}=0}=\sum_{\alpha=(\alpha_2,\ldots,\alpha_n):\alpha_2+\ldots+\alpha_n=d_i} c_{i,\alpha}(x_1)\cdot x_2^{\alpha_2}\cdots x_n^{\alpha_n}.
	\]
	Hence, each of the terms $c_{i,\alpha}(x_1)$ occurring in the above sum is a constant that does not depend on $x_1$. Since $(\xi:\xi_2:\ldots:\xi_n)$ is a solution of the system $F_{i}|_{x_{n+1}=0}$, we conclude that $(x_1:\xi_2:\ldots:\xi_n)$ is a solution of $F_{i}|_{x_{n+1}=0}$ for any $x_1$. This contradicts our assumption that $\mathcal{F}$ has only finitely many solutions.
	It follows that $(\xi_2/\xi_{n+1},\ldots,\xi_n/\xi_{n+1},1)$ is a solution of $(F_i|_{x_1=\xi})_{i=1,\ldots,n}$, and thus $(\xi,\xi_2/\xi_{n+1},\ldots,\xi_n/\xi_{n+1})$ is a solution of $\mathcal{F}=0$.
	For the other direction, let $(\xi_1,\ldots,\xi_n)$ be a solution of $\mathcal{F}=0$, then $(\xi_1:\ldots:\xi_n:1)$ is an affine solution of the corresponding homogenized system, and thus $(\xi_2:\ldots:\xi_n:1)$ a solution of the system $(F_i|_{x_1=\xi})_{i=1,\ldots,n}=0$. This implies that $R(\xi_1)=0$.
\end{proof}

Obviously, the above considerations apply for any coordinate (hidden-variable) $x_k$ onto which we aim to project the solutions. The corresponding resultant polynomial $\RES{\FFF}{x_k}\in\CC[x_k]$ is called the \emph{hidden-variable resultant with respect to $x_k$}. The following theorem~\cite{DBLP:conf/issac/BrandS16} bounds the cost for computing the hidden-variable resultant in the special case where the polynomials $f_i$ have integer coefficients. The technique is based on a method due to Emiris and Pan~\cite{DBLP:journals/jc/EmirisP05} and an asymptotically fast algorithm for determinant computation due to Storjohann~\cite{DBLP:journals/jc/Storjohann05}.

\begin{theorem}[{\cite[Prop.~1]{DBLP:conf/issac/BrandS16}}]
    \label{complexity:resultant}
	Let $\FFF=(f_i)_{i=1,\ldots,n}$ be a polynomial system with integer polynomials $f_i\in\mathbb{Z}[x_1,\ldots,x_n]$ of magnitude $(d,\tau)$.
	There is a Las-Vegas algorithm to compute $\RES{\FFF}{x_k}$ in an expected number of bit operations bounded by\footnote{$\omega$ denotes the exponent in the complexity of matrix multiplication. The current record bound for $\omega$ is $\omega\le 2.3728639$ according to~\cite{DBLP:conf/issac/Gall14a}}
	$$\tilde{O}(n^{(n-1)(\omega+1)}(d+\tau)d^{(\omega+2)n-\omega-1}).$$
\end{theorem}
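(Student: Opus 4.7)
The bound is stated as~\cite[Prop.~1]{DBLP:conf/issac/BrandS16}, and the plan is to reproduce that two-step argument combining the hidden-variable resultant matrix of Emiris--Pan~\cite{DBLP:journals/jc/EmirisP05} with Storjohann's fast determinant algorithm for polynomial matrices~\cite{DBLP:journals/jc/Storjohann05}. First, I would exhibit $\RES{\FFF}{x_k}$ as (a quotient involving) the determinant of a Macaulay-type matrix $M(x_k)$ whose rows encode multiplications of the homogenisations $F_1,\ldots,F_n$ from Lemma~\ref{lem:mildconditions} by all monomials of suitable degree in the variables other than $x_k$, while $x_k$ is carried along symbolically. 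Standard Macaulay-exponent bookkeeping yields $N:=\dim M(x_k) = O((nd)^{n-1})$, entries in $\ZZ[x_k]$ of degree at most $d$, and integer coefficients of bit-size at most $\tau$. A Las-Vegas random linear change of the remaining coordinates guarantees, with probability at least $1/2$, that the auxiliary denominator minor is non-zero and no spurious solutions at infinity enter the construction.

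Second, I would feed $M(x_k)$ into Storjohann's algorithm, which computes the determinant of an $N \times N$ polynomial matrix of degree $d$ using $\tilde O(N^\omega d)$ ring operations. Tracking coefficient growth through the algorithm via Hadamard-type bounds, each ring operation costs $\tilde O(N(d+\tau))$ bit operations, so the overall bit cost is $\tilde O(N^{\omega+1} d (d+\tau))$. Substituting $N = O((nd)^{n-1})$ and absorbing the division by the extraneous denominator minor (computed by the same routine and strictly cheaper) gives the claimed bound $\tilde O(n^{(n-1)(\omega+1)} (d+\tau) d^{(\omega+2)n-\omega-1})$.

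The main obstacle is the simultaneous bookkeeping of bit, degree, and matrix-size growth through Storjohann's subroutine together with the randomisation step: one has to ensure that the perturbations needed to keep the Macaulay/Canny--Emiris construction non-degenerate are of polylogarithmic bit-size so that they do not inflate the $(d+\tau)$ factor, and that the output coefficient bounds on $\RES{\FFF}{x_k}$ stay within the stated complexity. This delicate accounting is exactly what~\cite{DBLP:conf/issac/BrandS16} carries out, and I would follow it verbatim; once the bookkeeping is granted, matching the final exponents is a routine substitution.
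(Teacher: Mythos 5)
The paper does not prove this theorem; it is imported verbatim as a citation to Prop.~1 of~\cite{DBLP:conf/issac/BrandS16}, and the only commentary given in the text is exactly the two-ingredient description you reproduce, namely the Emiris--Pan hidden-variable resultant matrix combined with Storjohann's fast determinant algorithm for polynomial matrices. Your sketch therefore matches the paper's (non-)proof: you correctly identify the Macaulay-type matrix in $\ZZ[x_k]$, the Storjohann determinant step, and the Las-Vegas randomisation needed to keep the construction non-degenerate, and you explicitly defer the detailed exponent bookkeeping to the cited reference, which is precisely what the paper itself does.

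One small caution if you were to flesh this out: the bit-cost accounting in the second step should not be done by the naive ``each ring operation costs $\tilde O(N(d+\tau))$'' heuristic you mention, since Storjohann's algorithm is analyzed directly in terms of bit operations over $\ZZ[x_k]$ with the relevant degree and height bounds built in, and a careless Hadamard-style estimate can overshoot the stated bound by polynomial factors. But since you explicitly say you would follow the bookkeeping of~\cite{DBLP:conf/issac/BrandS16} verbatim, this does not constitute a gap relative to what the paper itself offers.
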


We further remark that a root $\xi$ of $\RES{\FFF}{x_k}$ might origin from several solutions $\bz=(z_1,\ldots,z_n)$ of $\FFF=0$ sharing the same $x_k$-coordinate $x_k=\xi$. Under the requirements from Lemma~\ref{lem:mildconditions}, it holds that the multiplicity of $\xi$ as a root of $\RES{\FFF}{x_k}$ equals the sum of the multiplicities of all these solutions $\bz$. 
Also, the roots of $\RES{\FFF}{x_k}$ are exactly the projections of the finite solutions onto the $x_k$-coordinate, and vice versa. Furthermore, if there no solution at infinity, then $\RES{\FFF}{x_k}$ has degree $D_{\FFF}$ as the system has exactly $D_{\FFF}$ solutions (counted with multiplicity), which are all finite, and the roots of $\RES{\FFF}{x_k}$ are exactly the projections of these solutions onto the $x_k$-coordinate.

\begin{lemma}\label{lcresonlydependsonhighestdegree}
	Let $\FFF=(f_i)_{i=1,\ldots,n}$, with $f_i=\sum_{\alpha:|\alpha|\le d_i}c_{i,\alpha}\cdot \bx^{\alpha} \in\mathbb{C}[\bx]$ of total degree $d_i$, be a polynomial system in $n$ variables $\bx=(x_1,\ldots,x_n)$ with general coefficients $c_{i,\alpha}$.  Consider the decomposition
	$$f_i(\bx)=\underbrace{\sum_{\alpha:|\alpha|= d_i} c_{i,\alpha}\cdot \bx^{\alpha}}_{:=f_{i,d_i}(\bx)}+\underbrace{\sum_{\alpha:|\alpha|< d_i} c_{i,\alpha}\cdot \bx^{\alpha}}_{=:f_{i,<d_i}(\bx)}$$
	of each $f_i$ into a sum of terms of degree $d_i$ and into a sum of terms of degree less than $d_i$. Then, for any $k\in[n]$, it holds that the leading coefficient $\operatorname{LC}(\RES{\FFF}{x_k})$ of the (general) hidden variable resultant $\RES{\FFF}{x_k}\in\ZZ[c_{i,\alpha}][x_k]$ only depends on the coefficients $c_{i,\alpha}$ of $f_{i,d_i}$ (i.e. on the coefficients $c_{i,\alpha}$ with $|\alpha|=d_i$).
\end{lemma}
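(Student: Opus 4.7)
By relabelling coordinates it suffices to treat $k=1$. With the notation of Lemma~\ref{lem:mildconditions}, write $R(x_1):=\RES{\FFF}{x_1}=\res(F_1,\ldots,F_n)$, where each $F_i$ is homogeneous of degree $d_i$ in $(x_2,\ldots,x_{n+1})$ with coefficients in $\CC[x_1]$. For general $c_{i,\alpha}$ the roots of $R$ are precisely the $x_1$-projections of the $D_\FFF$ affine solutions of $\FFF$, and hence $\deg_{x_1}R\le D_\FFF$; the coefficient of $x_1^{D_\FFF}$ in $R$ can therefore be extracted as $\lim_{\lambda\to\infty}\lambda^{-D_\FFF}R(\lambda)$, and for generic coefficients this coefficient is non-zero, so it coincides with $\operatorname{LC}(R)$. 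The plan is to evaluate this limit as a resultant of certain ``leading parts'' of the $F_i$ that only involve the coefficients $c_{i,\alpha}$ of $f_i$ with $|\alpha|=d_i$.

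\textbf{Scaling.} Introduce the diagonal change of variables $A(\lambda):=\mathrm{diag}(1,\ldots,1,1/\lambda)\in\mathrm{GL}_n(\CC)$ acting on $(x_2,\ldots,x_{n+1})$, and set
\[
    F_i^{(\lambda)}(x_2,\ldots,x_{n+1}):=F_i(\lambda,x_2,\ldots,x_n,x_{n+1}/\lambda)=(F_i|_{x_1=\lambda})\circ A(\lambda).
\]
I will invoke the classical transformation formula $\res(G_1\circ A,\ldots,G_n\circ A)=(\det A)^{D_\FFF}\res(G_1,\ldots,G_n)$ for $n$ homogeneous forms in $n$ variables --- a standard fact that can be deduced from the Macaulay formula, or from the scalar-matrix case $A=cI$ together with the multi-homogeneity of $\res$ of degree $\prod_{j\ne i}d_j$ in the coefficients of $G_i$. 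Applied with $G_i:=F_i|_{x_1=\lambda}$, this gives
\[
    \res(F_1^{(\lambda)},\ldots,F_n^{(\lambda)})=\lambda^{-D_\FFF}\,R(\lambda).
\]

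\textbf{Limits and conclusion.} Expanding $F_i^{(\lambda)}$ and using that $\deg_{x_1} c_{i,\alpha}(x_1)\le d_i-|\alpha|$, where $|\alpha|:=\alpha_2+\cdots+\alpha_n$, the coefficient $c_{i,\alpha}(\lambda)\cdot\lambda^{|\alpha|-d_i}$ of the monomial $x_2^{\alpha_2}\cdots x_n^{\alpha_n}x_{n+1}^{d_i-|\alpha|}$ in $F_i^{(\lambda)}$ converges, as $\lambda\to\infty$, to the leading coefficient of $c_{i,\alpha}(x_1)$, which is precisely $c_{i,(d_i-|\alpha|,\alpha_2,\ldots,\alpha_n)}$---a coefficient of $f_i$ whose multi-index sums to $d_i$. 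Thus $F_i^{(\lambda)}$ converges coefficient-wise to a form $F_i^\infty$, homogeneous of degree $d_i$ in $(x_2,\ldots,x_{n+1})$, whose coefficients are exactly the top-degree coefficients $c_{i,\beta}$ of $f_i$ with $|\beta|=d_i$ (in fact, $F_i^\infty$ is the partial homogenisation of $f_{i,d_i}(1,x_2,\ldots,x_n)$ with respect to $x_{n+1}$). Since $\res$ is a polynomial---hence continuous---function of the coefficients of its arguments,
\[
    \operatorname{LC}(\RES{\FFF}{x_1})=\lim_{\lambda\to\infty}\res(F_1^{(\lambda)},\ldots,F_n^{(\lambda)})=\res(F_1^\infty,\ldots,F_n^\infty)\in\ZZ[c_{i,\alpha}\colon |\alpha|=d_i],
\]
which is the desired conclusion. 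The main obstacle is the invocation of the transformation formula for resultants under a linear change of coordinates: it is classical but requires careful bookkeeping of the homogeneity weights; all remaining steps are an explicit computation of limits.
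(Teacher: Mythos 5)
Your proof is correct, and it takes a genuinely different route from the paper's. The paper argues non-constructively: for a generic choice of the top-degree coefficients $c_{i,\alpha}$ with $|\alpha|=d_i$, the homogenized system has no solutions at infinity, so $\deg_{x_k}\RES{\FFF}{x_k}=D_\FFF$; if $\operatorname{LC}(\RES{\FFF}{x_k})$ depended non-trivially on some $c_{i,\alpha}$ with $|\alpha|<d_i$, one could specialize that single coefficient (keeping the top-degree coefficients generic) to make $\operatorname{LC}$ vanish, which would force $\deg_{x_k}\RES{\FFF}{x_k}<D_\FFF$, a contradiction. Your argument is constructive: the scaling $\lambda^{-D_\FFF}R(\lambda)=\res(F_1^{(\lambda)},\ldots,F_n^{(\lambda)})$ together with continuity of the resultant in the coefficients lets you evaluate the limit as $\lambda\to\infty$, identifying the coefficient of $x_1^{D_\FFF}$ in $R$ as $\res(F_1^\infty,\ldots,F_n^\infty)$ --- an explicit resultant of $n$ forms whose coefficients are exactly the top-degree coefficients of the $f_i$. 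Both proofs rely on a genericity input (you need $\res(F_1^\infty,\ldots,F_n^\infty)\not\equiv 0$ to conclude that this coefficient is the leading one; the paper needs generic top-degree coefficients to exclude solutions at infinity), and both lean on the $\operatorname{GL}_n$-covariance of the resultant, which is Theorem~3.5 in~\cite{cox2005using}. Your approach has the bonus of producing a closed-form expression for $\operatorname{LC}(\RES{\FFF}{x_k})$; the paper's is shorter but gives no formula. One small caveat: the reduction of the general transformation formula to the scalar case $A=cI$ via multi-homogeneity is not quite a one-liner (multi-homogeneity gives the exponent for $cI$, but passing to an arbitrary diagonal or general $A$ still needs an $\operatorname{SL}_n$-invariance input, or the Macaulay-determinant proof you also mention); citing~\cite{cox2005using} directly, as the paper does elsewhere, is the cleaner move.
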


\begin{proof}
	Let $x_{n+1}$ be a homogenizing variable and
	$$
	f_i^h=\sum_{\alpha:|\alpha|= d_i}c_{i,\alpha}\cdot \bx^{\alpha}+\sum_{\alpha:|\alpha|< d_i}c_{i,\alpha}\cdot \bx^{\alpha}\cdot x_{n+1}^{d_i-|\alpha|}
	$$
	be the corresponding homogenization of $f_i$.
	For generic choice of the coefficients $c_{i,\alpha}$ with $|\alpha|=d_i$, the above system is
	zero-dimensional and has no solution at infinity. Namely, for $x_{k+1}=0$, the system writes as $f_i^h(x_1,\ldots,x_n,0)=f_{i,d_i}$, and a generic system of $n$ homogenous polynomials in $n$ variables has no solution. Thus, there exists no solution at infinity, which also rules out the
	possibility of the system being non zero-dimensional. Now, suppose that the coefficients are
	generically chosen such that all solutions are finite. Then, the total number of solutions equals
	the B\'ezout number $D_{\FFF}$ and the degree of $\RES{\FFF}{x_k}$ equals $D_{\FFF}$. According to Theorem~\ref{thm:resultant}, $\operatorname{LC}(\RES{\FFF}{x_k})\in\ZZ[c_{i,\alpha}]$
	is a polynomial in the coefficients $c_{i,\alpha}$. Now, if $\operatorname{LC}(\RES{\FFF}{x_k})$ would depend on some
	coefficient $c_{i,\alpha}$ with $|\alpha|<d_i$, then, for generic choice of all other coefficients,
	we could choose such a $c_{i,\alpha}$ in a way such that the leading coefficient becomes zero,
	and thus $\deg\RES{\FFF}{x_k}<D_{\FFF}$, a contradiction. This shows that, for generic choice of the coefficients $c_{i,\alpha}$, the leading coefficient $\operatorname{LC}(\RES{\FFF}{x_k})$ does not depend on the coefficients of the polynomials $f_{i,<d_i}$. From this, we conclude that  $\operatorname{LC}(\RES{\FFF}{x_k})$ does not depend on the coefficients of the polynomials $f_{i,<d_i}$ in general.
\end{proof}

\begin{corollary}\label{artificialperturbation}
	Let $\FFF=(f_i)_{i=1,\ldots,n}$ be an arbitrary polynomial system as in Lemma~\ref{lcresonlydependsonhighestdegree} with $d_i=d$ for all $i$, and let
	\[
		\bar{\FFF}:\quad \bar{f_i}:=\sum_{\alpha:|\alpha|=d+1}^n c_{i,\alpha}\cdot \bx^\alpha+f_i, \quad\text{with }c_{i,\alpha}\in\ZZ\text{ for all }\alpha\text{ with }|\alpha|=d+1,
	\]
	be the system obtained by adding polynomials of the form $\sum_{\alpha:|\alpha|=d+1}^n c_{i,\alpha}\cdot \bx^\alpha$ to each $f_i$. If $\bar{\FFF}$ does not have any solution at infinity (which is the case for generic choice of the coefficients $c_{i,\alpha}$), then it holds that
	$
	\operatorname{LC}(\RES{\bar{\FFF}}{x_k})\in\ZZ_{\neq 0}.
	$
\end{corollary}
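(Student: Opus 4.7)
The plan is to combine Lemma~\ref{lcresonlydependsonhighestdegree} with Theorem~\ref{thm:resultant} and the standard consequence of having no solution at infinity, namely that the hidden-variable resultant attains its maximal degree. Since each $\bar{f_i}$ has total degree exactly $d+1$, the B\'ezout bound for $\bar{\FFF}$ equals $(d+1)^n$, and the argument splits into an \emph{integrality} part and a \emph{non-vanishing} part.

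For integrality, I would apply Lemma~\ref{lcresonlydependsonhighestdegree} to $\bar{\FFF}$: it tells us that $\operatorname{LC}(\RES{\bar{\FFF}}{x_k})$, viewed as a polynomial in the coefficients of $\bar{\FFF}$, depends only on those coefficients indexed by $\alpha$ with $|\alpha| = d+1$. By construction these are precisely the integers $c_{i,\alpha}$ introduced by the perturbation. Theorem~\ref{thm:resultant} further guarantees that the resultant is a polynomial with integer coefficients in the system's coefficients; specializing integer coefficients into an integer polynomial yields an integer, so $\operatorname{LC}(\RES{\bar{\FFF}}{x_k}) \in \ZZ$.

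For non-vanishing, I would invoke the remark preceding Lemma~\ref{lcresonlydependsonhighestdegree}: under the mild hypothesis of Lemma~\ref{lem:mildconditions}, the absence of solutions at infinity forces $\bar{\FFF}$ to have exactly $D_{\bar{\FFF}} = (d+1)^n$ affine solutions counted with multiplicity, and correspondingly $\deg \RES{\bar{\FFF}}{x_k} = (d+1)^n$. Hence the coefficient of $x_k^{(d+1)^n}$ is non-zero, which is exactly $\operatorname{LC}(\RES{\bar{\FFF}}{x_k})\neq 0$. Combined with integrality this gives $\operatorname{LC}(\RES{\bar{\FFF}}{x_k})\in\ZZ_{\neq 0}$.

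The only delicate point I expect is checking that the prerequisite of Lemma~\ref{lem:mildconditions}, namely that each $\bar{f_i}$ contains a degree-$(d+1)$ term independent of $x_k$, is genuinely available. This is an open condition on the integer perturbation $c_{i,\alpha}$ (it suffices, e.g., that $c_{i,\alpha}\neq 0$ for some $\alpha$ with $\alpha_k=0$ and $|\alpha|=d+1$) and is simultaneously satisfiable with the open condition ``no solution at infinity'', so it is harmless in the generic regime in which the corollary is invoked. Once this bookkeeping is in place, the conclusion is essentially immediate from the two named preceding results.
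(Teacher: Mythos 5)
Your two-part structure (integrality via Lemma~\ref{lcresonlydependsonhighestdegree}, non-vanishing via the degree-$(d+1)^n$ claim) is exactly the paper's, and the integrality half is complete. The non-vanishing half, however, has a gap that you yourself flag and then wave away: the corollary asserts $\operatorname{LC}(\RES{\bar\FFF}{x_k})\neq 0$ for \emph{every} integer perturbation for which $\bar\FFF$ has no solution at infinity, not only for generic ones. Routing through the remark preceding Lemma~\ref{lcresonlydependsonhighestdegree} forces you to impose the mild hypothesis of Lemma~\ref{lem:mildconditions} (each $\bar f_i$ must contain a degree-$(d+1)$ monomial free of $x_k$), and ``the two open conditions are simultaneously satisfiable'' only yields the statement for generic coefficients. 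In fact the paper itself later applies this corollary to systems of the form $\phi_i=x_i^{K+1}+\phi_i'$, whose \emph{only} degree-$(K+1)$ monomial in $\phi_i$ involves $x_i$ --- so the mild condition genuinely fails for $k=i$, and the generic-regime escape hatch does not apply there.

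The way to close the gap is to not invoke Lemma~\ref{lem:mildconditions} at all. The paper's (terse) proof relies on the standard fact that the hidden-variable resultant has full degree $D_{\bar\FFF}$ as soon as the homogenized system has no solution at infinity; this is a direct consequence of resultant theory and does not require the mild hypothesis. Concretely: Lemma~\ref{lcresonlydependsonhighestdegree} already tells you that $\operatorname{LC}(\RES{\bar\FFF}{x_k})$, as a polynomial over $\ZZ$ in the top-degree coefficients $c_{i,\alpha}$ with $|\alpha|=d+1$, is (up to sign) the Macaulay resultant of the leading forms $\bar f_{1,d+1},\ldots,\bar f_{n,d+1}$; that polynomial vanishes at a given specialization if and only if the leading forms share a nontrivial projective zero, which is exactly the condition that $\bar\FFF$ has a solution at infinity. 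Hence ``no solution at infinity'' gives $\operatorname{LC}\neq 0$ directly, with no genericity assumption, and combined with your integrality argument the conclusion $\operatorname{LC}\in\ZZ_{\neq 0}$ follows.
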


\begin{proof}
	If $\bar{\FFF}$ has no solution at infinity, then $\bar{\FFF}$ is zero-dimensional and, in addition, $\RES{\bar{\FFF}}{x_k}$ has degree $D_{\bar{\FFF}}=(d+1)^n$. From Lemma~\ref{lcresonlydependsonhighestdegree}, we further conclude that $\operatorname{LC}(\RES{\bar{\FFF}}{x_k})$ only depends on the coefficients $c_{i,\alpha}$ of the degree $(d+1)$-parts $\bar{f}_{i,d+1}$ of the polynomials $\bar{f_i}$. Hence, we have
	$\operatorname{LC}(\RES{\bar{\FFF}}{x_k})\in\ZZ_{\neq 0}$.
\end{proof}

\noindent\textit{Example:} Let $\bar{f_{i}}=\sum_{j=1}^n a_{ij}\cdot x_j^{d+1}+f_{i}$ with $f_{i}\in\mathbb{C}[\bx]_{\le d}$ polynomials of total degree at most $d$. Then, it holds that $$\RES{\bar{\FFF}}{x_k}=\pm \det(a_{ij}) \cdot x_k^{(d+1)^n}+\cdots$$
Namely, if $\det (a_{i,j})\neq 0$, then $\bar{\FFF}$ has no solution at infinity as each such solution would yield a non-trivial solution of the linear system $\sum_{j=1}^n a_{i,j}\cdot X_j=0$. Thus, $\bar{\FFF}$ is zero-dimensional in this case and $\RES{\bar{\FFF}}{x_k}$ has degree $D_{\bar{\FFF}}=(d+1)^n$. From Lemma~\ref{lcresonlydependsonhighestdegree}, we further conclude that $\operatorname{LC}(\RES{\bar{\FFF}}{x_k})$ only depends on the coefficients $a_{i,j}$ of the degree $(d+1)$-parts $\bar{f}_{i,d+1}$ of the polynomials $\bar{f_i}$. Hence, we have
$\operatorname{LC}(\RES{\bar{\FFF}}{x_k})=\operatorname{LC}(\RES{\bar{f}_{1,=d+1},\ldots,\bar{f}_{n,=d+1}}{x_k})$, and
using Theorem~2.3 and Theorem~3.5 in~\cite{cox2005using} further shows that 
\begin{align*}
\RES{\bar{f}_{1,=d+1},\ldots,\bar{f}_{n,=d+1}}{x_k}&=\det (a_{i,j})^{(d+1)^n}\cdot \RES{(x_1^{d+1},\ldots,x_n^{d+1})}{x_k}\\
&=\pm\det (a_{i,j})^{(d+1)^n}\cdot x_k^{(d+1)^n}.
\end{align*}\smallskip

It is also well known (e.g. this follows from Theorem~\ref{thm:dandrea} below) that $\RES{\FFF}{x_k}$ is contained in the ideal $\mathcal{I}:=\langle f_1,\ldots,f_n \rangle$ defined by the polynomials $f_1,\ldots,f_n$. In particular, for polynomials $f_i\in\mathbb{Z}[x_1,\ldots,x_n]$ with integer coefficients, this guarantees the existence of an integer $\lambda$, with $\lambda\neq 0$, and polynomials $g_{i}\in\mathbb{Z}[x_1,\ldots,x_n]$ with
\begin{align}\label{idealrep}
	\lambda\cdot\RES{\FFF}{x_k}=g_{1}\cdot f_1+\cdots g_n\cdot f_n.
\end{align}
Recent work~\cite{D2013} allows us to bound the magnitude of the polynomials $g_i$ as well as the size of $\lambda$.
For this, we first write $f_i=\sum_{\alpha} c_{i,\alpha}(x_k){\bx
	^{\alpha}_{\neq k}}$ as a polynomial in $\bx_{\neq k}$ with coefficients $c_{i,\alpha}\in\ZZ[x_k]$, where $\bx_{\neq k}$ denotes all but the $k$'th variable. We further introduce a variable $u_{i,\alpha}$ for every coefficient polynomial $c_{i,\alpha}$.
Let $\bu_i=(u_{i,\alpha})_{\alpha}$ be the variables corresponding to the polynomial $f_i$, and let $\bu=(\bu_1,\ldots,\bu_n)$ denote the variables for all polynomials. Then, $\FFF$ can be considered as a system consisting of $n$ polynomials in $n-1$ variables $\bx_{\neq k}$ with coefficients $\bu$. Thus, its resultant $\res(\FFF)$ is a polynomial in $\QQ[\bu]$, which is further contained in the ideal $\langle f_1,\ldots, f_n \rangle\subset \QQ[\bu,\bx_{\neq k}]$. The following theorem, which is a consequence of Theorem~4.28 in~\cite{D2013} (see also~\cite[pp.~6]{D2013}), gives bounds on the degree and height of the polynomials in the cofactor-representation of $\res(\FFF)$ in this ideal.
\begin{theorem}[\cite{D2013} Consequence of Theorem~4.28]\label{thm:dandrea}
	Given a polynomial system $\F=(\f_1,\ldots,\f_{n+1})$ with $\f_i= \sum_{\alpha}u_{i,\alpha}\bx^\alpha\in\ZZ[\bu,\bx]$ of total degree $d_i$ in $\bx=(x_1,\ldots,x_n)$. Then, for any $k\in [n]$, there exists a $\lambda\in\ZZnz$ and polynomials $\g_i\in\ZZ[\bu,\bx]$ such that
	\begin{align*}
		\lambda\cdot \res(\F)                    & = \sum_{i\in[n+1]} \g_i\cdot \f_i,                                         \\
		\deg_{\bu_{j}}(\g_i\f_i)                 & \le \prod_{\ell\neq j}d_\ell \;\text{ and}\;                               \\
		\tau_{\lambda \RES{\F}{x_k}},\tau_{\g_i} & \le (6n+10)\log(n+3)D_\F \quad \text{ for }j\in[n] \text{ and } i\in[n+1],
	\end{align*}
	where $\tau_p$ denotes the bit-size of a polynomial $p\in\ZZ[\bu,\bx]$.
\end{theorem}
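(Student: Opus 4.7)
The plan is to obtain this statement as an essentially immediate corollary of Theorem~4.28 of \cite{D2013}, which establishes an arithmetic Nullstellensatz for the resultant. That theorem produces, for any system of polynomials with integer coefficients, an identity of the form $\lambda \cdot \res = \sum_i \g_i \f_i$ with controlled bounds on the height and multidegree of the cofactors in terms of the Mahler measure and degrees of the inputs. Our task is to specialize this to the universal setting in which each monomial coefficient of $\f_i$ is an independent indeterminate $u_{i,\alpha}$, and to translate the resulting bounds into our notation.

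The existence of the cofactor representation and of the integer $\lambda\in\ZZnz$ absorbing denominators is part of the conclusion of D'Andrea's theorem. For the degree bound $\deg_{\bu_j}(\g_i \f_i) \le \prod_{\ell \neq j} d_\ell$, I would argue as follows. By Theorem~\ref{thm:resultant}, $\res(\F)$ is homogeneous in the block $\bu_j$ of degree exactly $\prod_{\ell \neq j} d_\ell$. Each $\f_i$ is linear in its own block $\bu_i$ and independent of the other coefficient variables, so matching multi-homogeneous components on both sides of the identity forces the $\bu_j$-degree of each summand $\g_i \f_i$ to be at most that of $\res(\F)$. If the raw cofactors produced by D'Andrea's construction carry higher degree components, these must cancel on the right-hand side, and we may replace $\g_i$ by its appropriate multi-homogeneous truncation without affecting the identity.

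The main obstacle, and the step we inherit directly from \cite{D2013}, is the height bound $\tau_{\g_i} \le (6n+10)\log(n+3)\cdot D_\F$. The key observation that makes D'Andrea's estimate collapse to this clean form is that in our universal setting the polynomials $\f_i\in\ZZ[\bu,\bx]$ have coefficients equal to $\pm 1$: each monomial is of the form $u_{i,\alpha}\cdot \bx^\alpha$, so the logarithmic Mahler measure of $\f_i$ vanishes. D'Andrea's bound, which is of the general shape $\tau_{\g_i} \lesssim c_n\, D_\F\,(\text{height contribution of the }\f_i)$ with combinatorial constant $c_n=(6n+10)\log(n+3)$, therefore reduces to a purely combinatorial expression in $n$ and the degrees $d_i$, yielding the stated bound. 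The same reasoning applied to the resultant itself controls $\tau_{\lambda\,\RES{\F}{x_k}}$ by the same quantity.

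The deeper content, namely the explicit constants $6n+10$ and $\log(n+3)$, comes from Arakelov-geometric techniques in D'Andrea's proof (height estimates for Chow forms together with an arithmetic B\'ezout inequality) and we do not reprove these here; the present argument is only the reformulation of that result in the notation used throughout our paper.
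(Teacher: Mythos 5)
The paper gives no proof of this statement; it is quoted as a direct consequence of \cite[Thm.~4.28]{D2013}. Your proposal takes the same route --- inheriting the identity, the integer $\lambda$, and the constants from D'Andrea, while observing that the universal setting (unit scalar coefficients in $\f_i$, multi-homogeneity of $\res(\F)$ permitting truncation of the cofactors) trivializes the height contribution of the $\f_i$ --- which is exactly the implicit specialization the authors rely on in citing that theorem.
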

%We have argued above that we can consider $\RES{\FFF}{x_k}$ as a polynomial in $\ZZ[x_k]$. Again $\RES{\FFF}{x_k}$ lies in the ideal spanned by $\langle f_1,\ldots, f_n \rangle\subset \QQ[\bx]$ and we can deduce bounds on the bit-size of the polynomials $g_{k,i}\in\ZZ[\bx]$ in the representation $\RES{\FFF}{x_k} = \sum_{i\in[n]} g_{k,i} f_i$ for any $k\in[n]$
We can now derive bounds on the degree and the bit-sizes of the polynomials $g_i$ as well as on the bit-size of $\lambda$ in (\ref{idealrep}) from the above theorem:
\begin{corollary}\label{cor:arithmetichilbert}
	Given a zero-dimensional polynomial system $\FFF=(f_1,\ldots,f_n)$ with polynomials $f_i\in\mathbb{C}[\bx]$, we can explicitly compute (see (\ref{AFF}) and (\ref{BFF})) positive integers $A_\FFF$ and $B_{\FFF}$, with $A_\FFF=\tilde{O}\big(n D_\mathcal{F}\big)$ and $B_\FFF=\tilde{O}(n\cdot D_{\FFF}+ \tau_\FFF\cdot  \max_i\frac{D_\FFF}{d_i})$, such that there exists an integer $\lambda\in\ZZnz$ and polynomials $g_{i}\in \CC[\bx]$ with
	\begin{align*}
		|\lambda|                    & \le 2^{A_\FFF},                                       \\
		\deg g_i                     & \le D_\FFF,\text{ }\tau_{g_i}\le B_{\FFF},\text{ and} \\
		\lambda\cdot \res(\FFF, x_k) & = \sum_{i=1}^n g_{i}\cdot f_i.
	\end{align*}
	If all polynomials $f_i$ have only integer coefficients, then we may further assume that the polynomials $g_{i}$ have only integers coefficients as well.
\end{corollary}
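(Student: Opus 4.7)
The plan is to derive Corollary~\ref{cor:arithmetichilbert} by specializing Theorem~\ref{thm:dandrea}. Via the hidden-variable formulation, I view each $f_i$ as a polynomial in the $n-1$ variables $\bx_{\neq k}$ with coefficients $c_{i,\alpha}(x_k)\in\CC[x_k]$ of degree at most $d_i$; this yields a system of $n$ polynomials in $n-1$ variables, which matches the setting of Theorem~\ref{thm:dandrea} with the role of $n$ played by $n-1$. Let $\F=(\f_1,\ldots,\f_n)$ denote the associated generic system in $\ZZ[\bu,\bx_{\neq k}]$, obtained by replacing every $c_{i,\alpha}(x_k)$ by a fresh indeterminate $u_{i,\alpha}$. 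Theorem~\ref{thm:dandrea} produces a nonzero integer $\lambda_0$ and polynomials $\g_i\in\ZZ[\bu,\bx_{\neq k}]$ with $\lambda_0\cdot\res(\F)=\sum_i\g_i\cdot\f_i$, together with the bounds $\deg_{\bu_j}(\g_i\f_i)\le\prod_{\ell\neq j}d_\ell\le D_\FFF/d_j$ and $\tau_{\g_i},\tau_{\lambda_0\res(\F)}=\tilde{O}(nD_\FFF)$.

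I then specialize $u_{i,\alpha}\mapsto c_{i,\alpha}(x_k)$ in this identity. Because it holds in $\ZZ[\bu,\bx_{\neq k}]$, the substitution is valid and yields $\lambda_0\cdot\res(\FFF,x_k)=\sum_{i=1}^n g_i\cdot f_i$ in $\CC[\bx]$, where $g_i\in\CC[x_k][\bx_{\neq k}]=\CC[\bx]$ denotes the image of $\g_i$. Setting $\lambda:=\lambda_0$, the bound $|\lambda|\le 2^{\tau_{\lambda_0\res(\F)}}=2^{A_\FFF}$ with $A_\FFF=\tilde{O}(nD_\FFF)$ is immediate from D'Andrea's height estimate, since the leading coefficient of $\lambda_0\res(\F)$ (as a polynomial in $\bu$) is $\lambda_0$ up to sign. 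Moreover, when all $f_i$ have integer coefficients we have $c_{i,\alpha}(x_k)\in\ZZ[x_k]$, so the specialization of $\g_i\in\ZZ[\bu,\bx_{\neq k}]$ remains in $\ZZ[x_k,\bx_{\neq k}]=\ZZ[\bx]$, giving the final claim.

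The degree and height bounds on $g_i$ come from tracking the substitution carefully. A monomial $a_\beta\bu^\beta\bx_{\neq k}^\gamma$ of $\g_i$ becomes $a_\beta\prod_{j,\alpha}c_{j,\alpha}(x_k)^{\beta_{j,\alpha}}\cdot\bx_{\neq k}^\gamma$, whose $x_k$-degree is at most $\sum_{j,\alpha}\beta_{j,\alpha}\,d_j$ and whose coefficient height is at most $\log|a_\beta|+\sum_{j,\alpha}\beta_{j,\alpha}\,\tau_\FFF+O(\log N)$, where $N$ bounds the number of monomials produced by the expansion. Using D'Andrea's bound $\sum_\alpha\beta_{j,\alpha}\le\deg_{\bu_j}(\g_i)\le D_\FFF/d_j$ and summing over at most polynomially many monomials yields the claimed $B_\FFF=\tilde{O}(nD_\FFF+\tau_\FFF\cdot\max_j D_\FFF/d_j)$, and analogous bookkeeping combined with the Macaulay-type $\bx_{\neq k}$-degree cap implicit in D'Andrea's theorem yields the total-degree bound $\deg g_i\le D_\FFF$. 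The main obstacle is this careful height bookkeeping after specialization, in particular absorbing the logarithmic number-of-terms factors into the $\tilde{O}$ and keeping the dependence on $\tau_\FFF$ tight enough to match $B_\FFF$.
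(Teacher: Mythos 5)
Your proposal takes essentially the same route as the paper: view each $f_i$ as a polynomial in $\bx_{\neq k}$ with coefficients $c_{i,\alpha}(x_k)\in\CC[x_k]$, invoke Theorem~\ref{thm:dandrea} on the corresponding generic system in $n-1$ variables, specialize $u_{i,\alpha}\mapsto c_{i,\alpha}(x_k)$, and then propagate the degree and height bounds through the substitution using $\deg u_{j,\alpha}\le d_j$ and $\deg_{\bu_j}(\g_i\f_i)\le\prod_{\ell\neq j}d_\ell$. The paper's proof is the same argument carried out with explicit counts of the number $N$ of coefficient indeterminates and of the monomials produced; your sketchier bookkeeping lands on the same $A_\FFF$ and $B_\FFF$.
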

\begin{proof}
	For each $i\in [n]$, write $f_i(\bx) = \sum_{\alpha}u_{i,\alpha}\bx_{\neq k}^\alpha$ as a polynomial in the variables $\bx_{\neq k}$ and with coefficients $u_{i,\alpha}\in\mathbb{C}[x_k]$. Theorem~\ref{thm:dandrea} now guarantees the existence of a positive $\lambda\in\ZZnz$ and polynomials $g_i\in\ZZ[\bu,\bx_{\neq k}]$
	with $\lambda\cdot \RES{\FFF}{x_k} = \sum_{i\in[n]} g_i\cdot f_i$. Notice that since $\bu$ only depends on $x_k$, we may consider each $g_i$ as an element in $\mathbb{C}[\bx]$. In addition, we have $\deg_{\bu_{j}}(g_i f_i)\le \prod_{\ell\neq j}d_\ell$, and thus $\deg_{\bx}(g_i)\le\deg_{\bx}(g_i f_i)\le D_{\FFF}=d_1\cdots d_n$ as each $u_{j,\alpha}$ has degree bounded by $d_j$. 
	We can now write each polynomial $g_i$ as $g_{i}=\sum_{\alpha:|\alpha|\le D_\FFF}P_{i,\alpha}(\bu)\cdot \bx_{\neq k}^\alpha$ with polynomials $P_{i,\alpha}=\sum_{\beta=(\beta_1,\ldots,\beta_N):|\beta|\le D_\FFF/d_i}c_{i,\alpha,\beta}\cdot \bu^\beta$. From Theorem~\ref{thm:dandrea}, we conclude that $c_{i,\alpha,\beta}$ are integers of absolute value $|c_{i,\alpha,\beta}|<2^{A_\FFF}$, where
	\begin{align}\label{AFF}
		A_\FFF:=\lceil (6n+4)\log(n+2)D_{\FFF}\rceil= O\big(n D_\mathcal{F} \log(n)\big).
	\end{align}
In addition, $N\le\sum_{i=1}^n\binom{d_i+n}{d_i}\le n\cdot \binom{d_\FFF+n}{d_\FFF}\le n(d_\FFF+n)^{n}$  denotes the number of distinct coefficients $u_{i,\alpha}$. Further notice that, for each $\beta$, $\bu^\beta$ is a product of at most $D_\FFF$ univariate polynomials in $\CC[x_k]$, each of degree at most $d_\FFF$ and of norm bounded by $2^{\tau_\FFF}$. Hence, it can be written as a sum of at most $(d_{\FFF}+1)^{D_\FFF}$ terms, each of absolute value at most $2^{\tau_{\FFF}\cdot D_\FFF/d_i}$. We conclude that the norm of $g_i$ is bounded by 
	\begin{align*}\binom{D_\FFF+N}{D_\FFF}\cdot(d_{\FFF}+1)^{D_\FFF}\cdot  2^{A_\FFF}\cdot2^{\tau_{\FFF}\cdot D_\FFF/d_i}&\le [(n(d_\FFF+n)^n+D_\FFF)\cdot (d_\FFF+1)]^{D_\FFF}\cdot 2^{A_\FFF}\cdot2^{\tau_{\FFF}\cdot D_\FFF/d_i}\\
	&\le [(n+1)(d_\FFF+n)^{n+1}]^{D_\FFF}\cdot 2^{A_\FFF}\cdot2^{\tau_{\FFF}\cdot D_\FFF/d_i}\\
	&\le [(d_\FFF+n)^{6n+4}\cdot 2^{A_\FFF}\cdot2^{\tau_{\FFF}\cdot D_\FFF/d_i}\le 2^{B_\FFF},
	\end{align*} 
	where we define
	\begin{align}\label{BFF}
		B_\FFF:=2\cdot\lceil D_\FFF\cdot (6n+4)\log(d_\FFF+n) \rceil+\tau_\FFF\cdot \max_i\frac{D_\FFF}{d_i}=\tilde{O}(n\cdot D_{\FFF}+ \tau_\FFF\cdot  \max_i\frac{D_\FFF}{d_i}).
	\end{align}
The final claim follows from the fact that $f_i\in\mathbb{Z}[\bx]$ for all $i$ implies that $u_{i,\alpha}\in\ZZ[x_k]$ for all $i,\alpha$, and thus $g_j\in\ZZ[\bx]$ for all $j$.
\end{proof}

\subsection{Generic Position via Rotation}\label{subsec:genericposition}

In the previous subsection, we have outlined how to project the solutions of a polynomial onto one of the coordinate axis. One subtlety of the approach was that certain mild conditions on the input polynomials need to be fulfilled in order to guarantee that the roots of the hidden variable resultant are exactly the projections of the (finite) solutions of the initial system; see Lemma~\ref{lem:mildconditions}. Another drawback of the approach is that distinct solutions might be projected onto the same point or onto two very nearby points on the coordinate axis, that is, the actual distance between distinct solutions is no longer preserved after the projection. We will show how to address these issues by using a random rotation of the coordinate system. We first start with the special case of dimension $2$.

\begin{lemma}\label{genericrotation}
	Let $p_{\ell}=(x_\ell,y_\ell)\in\CC^2$ be $N$ points such that $\|p_{\ell}\|\neq 0$ for all $\ell=1,\ldots,N$.
	Let $k$ be chosen uniformly at random from $[2^{L}]$. Then, with probability at least $1-\frac{N}{2^L}$, for each point
	\[
		p_\ell'=\left(\begin{matrix} x_\ell' \\ y_\ell'\end{matrix}\right):=S_k(L)\cdot \left(\begin{matrix} x_\ell \\ y_\ell\end{matrix}\right), \quad\text{with}\quad S_k(L):=\left(
		\begin{matrix}
			\frac{1-(k\cdot 2^{-L})^2}{1+(k\cdot 2^{-L})^2}    & -\frac{2\cdot (k\cdot 2^{-L})}{1+(k\cdot 2^{-L})^2} \\
			\frac{2\cdot (k\cdot 2^{-L})}{1+(k\cdot 2^{-L})^2} & \frac{1-(k\cdot 2^{-L})^2}{1+(k\cdot 2^{-L})^2}
		\end{matrix}
		\right)\in\operatorname{SO}(2),
	\]
	it holds that $\min(|x_\ell'|, |y_\ell'|)>2^{-(L+2)}\cdot \|p_{\ell}\|$ for all $\ell$.
\end{lemma}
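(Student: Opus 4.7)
The plan is to diagonalize the rotation by passing to the complex coordinates $\alpha_\ell := x_\ell + iy_\ell$ and $\beta_\ell := x_\ell - iy_\ell$. Using the identity $(1+it)/(1-it) = e^{2i\arctan(t)}$ valid for real $t$, a short expansion of $S_k(L)\cdot(x_\ell,y_\ell)^{T}$ yields
\[
    x_\ell' + iy_\ell' = \alpha_\ell\cdot e^{2i\arctan(t_k)},\qquad x_\ell' - iy_\ell' = \beta_\ell\cdot e^{-2i\arctan(t_k)},
\]
where $t_k := k\cdot 2^{-L}$. Setting $\xi_k := 4\arctan(t_k)$ and expanding $|x_\ell'|^2 = x_\ell'\overline{x_\ell'}$ (and similarly for $y_\ell'$), I obtain the trigonometric form
\[
    |x_\ell'|^2 = M_\ell + R_\ell\cos(\xi_k - \psi_\ell),\qquad |y_\ell'|^2 = M_\ell - R_\ell\cos(\xi_k - \psi_\ell),
\]
with $M_\ell := (|x_\ell|^2+|y_\ell|^2)/2 \geq \|p_\ell\|^2/2$, $R_\ell := |\alpha_\ell\bar{\beta}_\ell|/2 \leq M_\ell$, and some phase $\psi_\ell$ determined by $\alpha_\ell,\beta_\ell$. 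In particular, $|x_\ell'|^2 + |y_\ell'|^2 = |x_\ell|^2+|y_\ell|^2$ is rotation-invariant.

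It will suffice, by a union bound over $\ell = 1,\dots,N$, to show that for each fixed $\ell$ at most one $k\in[2^L]$ triggers the bad event $\min(|x_\ell'|,|y_\ell'|)\leq 2^{-(L+2)}\|p_\ell\|$. Since $t\mapsto \xi = 4\arctan t$ is a bijection $(0,1]\to(0,\pi]$ with derivative $d\xi/dt = 4/(1+t^2)\in[2,4]$, the sample values $\xi_k$ are at pairwise distances at least $2\cdot 2^{-L}$. The bad condition reads $|\cos(\xi_k - \psi_\ell)| \geq (M_\ell - 2^{-(2L+4)}\|p_\ell\|^2)/R_\ell$: if the right-hand side exceeds $1$ (equivalently $R_\ell < M_\ell - 2^{-(2L+4)}\|p_\ell\|^2$) the event is vacuous; otherwise $R_\ell \geq \|p_\ell\|^2/4$ for every $L\geq 1$. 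Using $\arccos(1-u)\leq\sqrt{2u}$, the bad values of $\xi$ then lie in intervals centred at the two candidate points $\xi\equiv\psi_\ell$ and $\xi\equiv\psi_\ell+\pi\pmod{2\pi}$, each of width at most $2\sqrt{2\cdot 2^{-(2L+4)}\|p_\ell\|^2/R_\ell}\leq 2^{-L+1/2}$, which is strictly smaller than the sample spacing $2\cdot 2^{-L}$. Hence each such interval contains at most one $\xi_k$.

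Finally, the two candidate centres $\psi_\ell$ and $\psi_\ell+\pi$ lie exactly $\pi$ apart modulo $2\pi$, while $(0,\pi]$ itself has length $\pi$, so a case distinction on $\psi_\ell\bmod 2\pi$ shows that at most one of them lies inside $(0,\pi]$. The boundary situations $\psi_\ell\in\{0,\pi\}$ only require the additional check that no sample falls in the part of an ``out-of-range'' bad interval that crosses into $(0,\pi]$; this follows from $\xi_1 = 4\arctan(2^{-L})\geq 2^{-L+1}$, which places the smallest sample safely past any bad interval centred at $\xi=0$. Combining everything yields $|B_\ell|\leq 1$, and hence $\Pr(\exists\,\ell \text{ bad})\leq \sum_\ell |B_\ell|/2^L \leq N/2^L$. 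I expect the main technical obstacle to be the quantitative width estimate in the previous paragraph: making the bad-interval width \emph{strictly} smaller than the minimum sample spacing is exactly what forces at most one bad $k$ per $\ell$, and the argument is tight in the constant hidden in the threshold $2^{-(L+2)}$.
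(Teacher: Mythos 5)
Your diagonalization route is genuinely different from the paper's. The paper extracts a real point $\bar p_\ell\in\mathbb{R}^2$ (the real or imaginary part of $p_\ell$, whichever has larger norm, giving $\|\bar p_\ell\|_2\ge\|p_\ell\|/\sqrt 2$), shows that the rotation directions are pairwise separated by angle at least $2^{-L}$, and that a disc of radius $2^{-L-2}\|p_\ell\|$ around $\bar p_\ell$ can therefore meet at most one such direction. Your version stays entirely in the complex picture and replaces the geometric disc argument by the trigonometric identity $|x_\ell'|^2=M_\ell+R_\ell\cos(\xi_k-\psi_\ell)$, $|y_\ell'|^2=M_\ell-R_\ell\cos(\xi_k-\psi_\ell)$; this identity, the bounds $M_\ell\ge\|p_\ell\|^2/2$, $R_\ell\le M_\ell$, and the spacing estimate $\xi_{k+1}-\xi_k\ge 2\cdot 2^{-L}$ are all correct.

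The gap is in the width estimate, exactly where you flagged tightness. You invoke $\arccos(1-u)\le\sqrt{2u}$, but this inequality points the wrong way: writing $\arccos(1-u)=2\arcsin\sqrt{u/2}$ and using $\arcsin x\ge x$ gives $\arccos(1-u)\ge\sqrt{2u}$, not $\le$. A valid upper bound is $\arccos(1-u)\le\pi\sqrt{u/2}$ (from $\arcsin x\le\tfrac{\pi}{2}x$ on $[0,1]$). Combining this correct bound with your loose lower bound $R_\ell\ge\|p_\ell\|^2/4$, the bad-interval full width becomes $2\arccos(c)\le 2\pi\sqrt{u/2}\le\pi\cdot 2^{-L-1/2}\approx 2.22\cdot 2^{-L}$, which \emph{exceeds} the sample spacing $2\cdot 2^{-L}$, so the at-most-one-bad-$k$ conclusion does not follow as written. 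The repair is available from your own vacuousness step: non-vacuity forces $R_\ell\ge M_\ell-2^{-(2L+4)}\|p_\ell\|^2\ge(\tfrac12-2^{-(2L+4)})\|p_\ell\|^2$, hence $u=1-c\le\tfrac{1}{2^{2L+3}-1}$, and then the full width $2\pi\sqrt{u/2}$ lies below $2\cdot 2^{-L}$ precisely because $\pi^2<16-2^{-2L+1}$ for all $L\ge 0$. With that sharper $R_\ell$-bound and the corrected $\arccos$ estimate, the rest of your argument (including the handling of the two antipodal bad centres via $\xi_1\ge 2\cdot 2^{-L}$) goes through; as written, however, the proof does not establish the claimed threshold $2^{-(L+2)}$.
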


\begin{proof}
	Notice that each matrix $S_k(L)$ is a rotation matrix with respect to the angle $\phi_k\in [0,\pi/2]$ with $\cos\phi_k=\frac{1-(k\cdot 2^{-L})^2}{1+(k\cdot 2^{-L})^2}$ and $\sin\phi_k=\frac{2\cdot (k\cdot 2^{-L})}{1+(k\cdot 2^{-L})^2}$. We further note that the function $h(t)=(\tfrac{1-t^2}{1+t^2}, \tfrac{2t}{1+t^2})$  describes the trace of a point on the quarter-circle.
	Moreover, we have $\dot h(t)=(\tfrac{-4t}{(1+t^2)^2}, \tfrac{-2t^2+2}{(1+t^2)^2})$, and since $|\dot h(t)| = \tfrac{2}{1+t^2}$ is a decreasing function in $t$, it follows that the difference between two consecutive angles $\phi_{k+1}$ and $\phi_{k}$ is decreasing in $k$.
	We thus conclude that all differences are lower bounded by $\phi_{2^L} - \phi_{2^{L}-1} = \int_{1-2^{-L}}^{1}\tfrac{2}{1+t^2}dt\ge \int_{1-2^{-L}}^{1}dt= 2^{-L}$. Now, let $L_k\subset\RR^2$ be the line passing through the origin and the point $(\cos\phi_k,\sin\phi_k)$, and let $L_k^{\perp}\subset\RR^2$ be line that passes through the origin and is orthogonal to $L_k$. In addition, for each point $p_\ell=(\Re(x_\ell)+\mathbf{i}\cdot\Im(x_\ell),\Re(y_\ell)+\mathbf{i}\cdot\Im(y_\ell))$, we define
	\[
		\bar{p}_\ell=\begin{cases} (\Re(x_\ell),\Re(y_\ell)) &\mbox{if } \Re(x_\ell)^2+\Re(y_\ell)^2\ge  \Im(x_\ell)^2+\Im(y_\ell)^2   \\
		(\Im(x_\ell),\Im(y_\ell)) & \text{otherwise}. \end{cases}.
	\]
	Then, $\bar{p}_\ell$ is a point in $\RR^2$ with $\|\bar p_\ell\|_2\ge \|p_\ell\|/\sqrt{2}$. Let $\Delta_\ell\subset\RR^2$ be the disc centered at $\bar{p}_\ell$ of radius $r_\ell= 2^{-L-2}\cdot \|p_\ell\|$. Let $q,r\in\Delta_\ell$ be any two points in $\Delta_\ell$ and $\alpha$ be the angle at the origin of the triangle given by the origin and the points $q$ and $r$. Then, it holds that
	\[
		\alpha\le 2\cdot\arctan \left(\frac{2^{-L-2}\cdot \|p_\ell\|}{\|p_\ell\|/\sqrt{2}}\right)< 2\arctan(2^{-L-1})< 2\cdot 2^{-L-1}\le 2^{-L}.
	\]
	Since the angle between any two distinct lines $L_k$ and $L_{k'}$ is lower bounded by $2^{-L}$, it thus follows that there can be at most one $k$ such that $L_k$ or $L_k^\perp$ intersects $\Delta_\ell$.
	Hence, if we pick a $k\in\{1,\ldots, 2^L\}$ uniformly at random and choose $L_k$ and $L_k^\perp$ as the axis of the coordinate system obtained by rotating the initial system by $\phi_k$, then, with probability at least $1-\frac{N}{2^L}$, the new coordinates $(\bar{x}_\ell',\bar{y}_\ell')$ of each point $\bar{p}_\ell$ will meet the condition that $\min(|\bar{x}_\ell'|,|\bar{y}_\ell'|)>2^{-L-2}\cdot \|p_\ell\|$. Hence, the same holds true for the points $S_k(L)\cdot p_\ell$.
\end{proof}
We now turn to the general $n$-dimensional case. For integers $k$ and $L$ and distinct indices $i,j\in\{1,\ldots,n\}$, we define
\begin{align}\label{rotationmatrix}
	S_{k}^{[ij]}(L):=
	\left(
	\begin{matrix}
	1      & \cdots & 0                                                  & \cdots & 0                                                   & \cdots & 0      \\
	\vdots & \vdots & \vdots                                             & \vdots & \vdots                                              & \vdots & \vdots \\
	0      & \cdots & \frac{1-(k\cdot 2^{-L})^2}{1+(k\cdot 2^{-L})^2}    & \cdots & -\frac{2\cdot (k\cdot 2^{-L})}{1+(k\cdot 2^{-L})^2} & \cdots & 0      \\
	\vdots & \vdots & \vdots                                             & \vdots & \vdots                                              & \vdots & \vdots \\
	0      & \cdots & \frac{2\cdot (k\cdot 2^{-L})}{1+(k\cdot 2^{-L})^2} & \cdots & \frac{1-(k\cdot 2^{-L})^2}{1+(k\cdot 2^{-L})^2}     & \cdots & 0      \\
	\vdots & \vdots & \vdots                                             & \vdots & \vdots                                              & \vdots & \vdots \\
	0      & \cdots & 0                                                  & \cdots & 0                                                   & \cdots & 1
	\end{matrix}
	\right)\in\operatorname{SO}(n),
\end{align}
to be a rotation matrix that operates on the $i$-th and $j$-th coordinate only. We further define the set of rotation matrices
\begin{align}\label{formula:rot_matrices}
	\mathcal{S}_N:=\left\{\prod_{i,j\in[n]^2:i<j} S_{k_{ij}}^{[ij]}(L) : k_{ij}\in[2^{L}]\text{ for all }i,j\right\}, \text{ where } L:= 4\lceil \log (2n^2 N)\rceil.
\end{align}

\begin{lemma}\label{rotationgeneral}
	Let $N$ be a positive integer and $\bp_{\ell}\in\CC^n$ be $N'$, with $N'\le N$, points such that $\|\bp_{\ell}\|\neq 0$ for all $\ell=1,\ldots,N'$. $\mathcal{S}_N$ and $L$ are defined as in (\ref{formula:rot_matrices}). Then, it holds 
	\begin{itemize}
	\item[(a)] Choosing integers $k_{ij}\in[2^{L}]$ for every pair $i,j$ uniformly at random yields, with probability at least $3/4$, a rotation matrix
	$S\in \SSS_N$ such that,
	for each point $\bp_{\ell}':=S(L)\cdot \bp_{\ell}$, it holds that  $\min_i |p_{\ell,i}'|\ge(2 n^2N )^{-16 n}\cdot\|\bp_\ell\|$.
	\item[(b)]
	There is an integer $\lambda$ of bit-size $\tilde O( n^2 \log N)$ such that the entries of $\lambda S$ and $\lambda S^{-1}$ are integer numbers of bit-size $\tilde O( n^2 \log N)$ as well.
	\end{itemize}
\end{lemma}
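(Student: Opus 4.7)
For part~(a), my plan is to apply Lemma~\ref{genericrotation} inductively, one rotation at a time, by treating the $\binom{n}{2}$ elementary rotations $S_{k_{ij}}^{[ij]}(L)$ in the lexicographic order $(1,2),(1,3),\ldots,(1,n),(2,3),\ldots,(n-1,n)$. At each step only two coordinates of every image point are affected, so Lemma~\ref{genericrotation} applied to the planar projections guarantees, with probability at least $1-N/2^L$ over the random choice of $k_{ij}\in[2^L]$, that the minimum of the two new absolute values exceeds $2^{-(L+2)}$ times the maximum of the two old ones. A union bound over all rotations and all $N'\le N$ points gives a total failure probability of at most $\binom{n}{2}N/2^L\le n^2N/2^{L+1}$, and with $L = 4\lceil\log(2n^2N)\rceil$ this is bounded by $1/(2(2n^2N)^3)$, hence far below $1/4$.

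Conditional on success at every step, the next step is to bound each final coordinate inductively by tracking how the $\infty$-norm of $\bp_\ell$ propagates through the rotation chain. Fix a coordinate $j^*$ with $|p_{\ell,j^*}|=\|\bp_\ell\|$. Whenever coordinate $j^*$ has a current lower bound $c$, any subsequent rotation pairing it with another coordinate $i$ transfers at least $2^{-(L+2)}c$ to coordinate $i$ while leaving coordinate $j^*$ with the same lower bound. In the worst-case input, where $\bp_\ell$ has only a single nonzero coordinate, the signal has to propagate via the chain of rotations linking $j^*$ to the remaining $n-1$ coordinates, costing a factor $2^{-(L+2)}$ at each of at most $n-1$ steps, so $\min_i|p'_{\ell,i}| \ge 2^{-(n-1)(L+2)}\|\bp_\ell\|$. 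A direct computation then shows that for the chosen $L$ the inequality $(n-1)(L+2) \le 16n\log(2n^2N)$ holds, yielding the claimed bound $(2n^2N)^{-16n}\|\bp_\ell\|$. The main obstacle in this step is that a rotation applied to two simultaneously small coordinates conveys no useful lower bound, so one has to exploit the chosen ordering to ensure that the signal from $j^*$ actually reaches every coordinate within at most $n-1$ successful rotations.

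Part~(b) is a straightforward bit-size computation. Each nontrivial $2\times2$ block of $S_{k_{ij}}^{[ij]}(L)$ has entries of the form $(2^{2L}\pm k_{ij}^2)/(2^{2L}+k_{ij}^2)$ and $\pm 2k_{ij}\cdot 2^L/(2^{2L}+k_{ij}^2)$, so a single rotation matrix admits the common integer denominator $d_{ij}:=2^{2L}+k_{ij}^2$ of bit-size at most $2L+1$. Multiplying the $\binom{n}{2}$ matrices, the combined integer denominator $\lambda:=\prod_{i<j}d_{ij}$ has bit-size at most $\binom{n}{2}(2L+1)=O(n^2L)=\tilde{O}(n^2\log N)$, and since every entry of the orthogonal matrix $S$ has absolute value at most $1$, the integer entries of $\lambda S$ are bounded by $\lambda$ in magnitude, hence of bit-size $\tilde{O}(n^2\log N)$. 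As $S^{-1}=S^\top$ has the same entries as $S$ up to transposition, the same bound applies verbatim to $\lambda S^{-1}$.
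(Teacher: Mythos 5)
Your overall strategy for part (a) matches the paper's: apply Lemma~\ref{genericrotation} to each of the $\binom{n}{2}$ elementary rotations, take a union bound, and then track how the $\infty$-norm propagates inductively. However, the key inductive step as you state it is wrong. You assert that a rotation pairing $j^*$ with another coordinate $i$ "transfers at least $2^{-(L+2)}c$ to coordinate $i$ while leaving coordinate $j^*$ with the same lower bound." This does not follow from Lemma~\ref{genericrotation}: after the rotation, the only guarantee is $|p'_{j^*}|\ge 2^{-(L+2)}\max(|p_i|,|p_{j^*}|)$, which, when $|p_i|\le|p_{j^*}|$, is a factor $2^{-(L+2)}$ \emph{weaker} than the previous lower bound $c$ on $|p_{j^*}|$. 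Concretely, in the worst case the rotation may nearly swap the two coordinates, so the guaranteed lower bound on $|p_{j^*}|$ genuinely degrades. The exponent $(n-1)(L+2)$ you claim does in fact hold for the lexicographic ordering, but proving it requires a sharper invariant that tracks \emph{all} coordinates simultaneously (e.g.\ after applying rotation $(i,j)$ one maintains $a_i,a_j\ge 2^{-(j-1)(L+2)}\|\bp_\ell\|$ for the case $j^*=1$, and a case analysis for general $j^*$), not the one-step "signal preservation" argument you give. The paper itself skips these details and simply states the slightly looser exponent $n(L+2)$; either exponent is compatible with the final bound $(2n^2N)^{-16n}\|\bp_\ell\|$ for the chosen $L$.

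Part (b) is correct and follows the paper's reasoning closely. Your use of $S^{-1}=S^\top$ is a mild simplification of the paper's explicit factorization $S^{-1}=\prod_{i<j}S^{[ij]}_{-k_{ij}}(L)$, and the bit-size bound via $\lambda=\prod_{i<j}(2^{2L}+k_{ij}^2)$ and $\|S\|_\infty\le 1$ is the same as in the paper.
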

\begin{proof}
	The proof follows almost immediately from Lemma~\ref{genericrotation}. Namely, with probability at least $1-N/2^L$, both entries $p_{\ell,i}'$ and $p_{\ell,j}'$ of each point $\bp_{\ell}':=S^{[ij]}_{k_{ij}}(L)\cdot \bp_{\ell}$ will have absolute value at
	least $2^{-(L+2)}\cdot\max(|p_{\ell,i}|,|p_{\ell,j}|)$. Since at least one of the coordinates of $\bp_{\ell}$ has absolute value $\|\bp_{\ell}\|$, we conclude that, with probability $(1-N/2^{L})^{\binom{n}{2}}>(1-N/2^{L})^{\frac{n^2}{2}}>1-\frac{n^2/2}{2^{L}/N}>3/4$,
	each coordinate of each point $\bp_{\ell}'=S(L)\cdot \bp_{\ell}$ has absolute value at least
	\[
		2^{-n\cdot(L+2)}\cdot \|\bp_{\ell}\| \ge 2^{-n\cdot(4(\log(2n^2N) + 1) + 2)}\cdot \|\bp_{\ell}\| \ge (2^{16 \log(2n^2 N)})^{-n}\cdot \|\bp_{\ell}\| =  (2 n^2N )^{-16 n}\cdot \|\bp_{\ell}\|.
	\]
	It remains to show the existence of an integer $\lambda$ of bit-size $\tilde O( n^2 \log N)$ such that the entries of $\lambda S$ and $\lambda S^{-1}$ are of that bit-size as well. 
    Each entry of a matrix $S_{k_{ij}}^{[ij]}(L)$ is rational number with denominator $2^{2L} + k_{ij}^2$ of bit-size $O(L)$. The matrix $S$ is a product of $O(n^2)$ many such matrices, thus for $\lambda = \prod_{i,j\in[n]^2:i<j} (2^{2L} + k_{ij}^2) \le (2^{2L+1})^{n^2} = 2^{\tilde O(n^2\log N)}$ it holds that $\lambda S$ is integer. 
    Notice that $S$ is contained in $\operatorname{SO}(n)$, which implies that its entries have absolute value at most 1. It  thus follows that the integer entries of $\lambda S$ are of bit-size $\tilde O(n^2\log N)$ as well.
	In addition, the inverse of $S_{k_{ij}}^{[ij]}(L)$ is simply given by $S^{[ij]}_{-k_{ij}}(L)$, and thus $S^{-1}=\prod_{i,j\in[n]^2:i<j} S_{-k_{ij}}^{[ij]}(L)$, which yields comparable bounds for the entries of $S^{-1}$ as for $S$.
\end{proof}
We will later make use of the above result when considering the set of non-zero solutions of a polynomial system $\mathcal{F}=0$. In general, some of these solutions might project (via resultant computation with respect to some variable $x_k$) onto zero or onto values close to zero. However, in our algorithm, we are aiming for projections that are of comparable size as the size of the corresponding solutions. In order to achieve this, we first consider a random rotation of the system given by some rotation matrix $S$ from the set $\SSS_N$, with $N:=D_\FFF$ the B\'ezout bound on the total number of solutions. This yields the ``rotated system'' $\FFF':=\FFF\circ S^{-1}$ whose solutions are exactly the rotations of the initial solutions by means of the rotation matrix $S$. Then, with high probability, each of the coordinates of the solutions of $\FFF'=0$ are of absolute value comparable to the norm of the solutions of $\FFF=0$. In addition, it is also likely that the rotated system fulfills the condition from Lemma~\ref{lem:mildconditions} for each coordinate.

\begin{lemma}\label{lem:rotationsystem}
	Let $\FFF=(f_1,\ldots, f_n)$ be a polynomial system as in~\eqref{formula:original_system}, $S\in \SSS_{D_\FFF}$ be a randomly chosen matrix, and let
	% \begin{align}\label{formula:rot_matrices}
	%  \mathcal{S}_{D_\FFF}:=\left\{\prod_{i,j\in[n]^2} S_{k_{ij}}^{[ij]}(L) : k_{ij}\in[2^{L}]\text{ for all }i,j\in[n]^2\right\}, \text{ where } L:= 4\lceil \log (2n^2 D_{\FFF})\rceil ,
	% \end{align}
	% with $S_{k_{ij}}^{[ij]}(L)$ as defined in Lemma~\ref{rotationgeneral}. Moreover, let
	$\FFF':=\FFF\circ S^{-1}$ be the corresponding rotated system. Then, with probability larger than $1/2$, it holds:
	\begin{enumerate}[(a)]
		\item For each $k\in[n]$, each of the polynomials $f_i\circ S^{-1}\in\mathcal{F}'$ contains a monomial of degree $d_i$ that does not depend on $x_k$. \label{lemmarot+a}
		\item For each solution $\bz\in\CC^n\backslash 0$ of $\FFF=0$, it holds that $\min_i |z'_i|\ge (2n^2D_\FFF )^{-16n}\cdot \|\bz\|$, where $\bz'=(z'_1,\ldots,z'_n):=S\cdot \bz$ is the corresponding (rotated) solution of $\FFF'=0$.\label{lemmarot+b}
	\end{enumerate}
\end{lemma}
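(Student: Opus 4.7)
I would establish parts (a) and (b) separately and combine them with a union bound, arranging each individual failure event to occur with probability at most $1/4$.

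Part (b) is essentially a direct application of Lemma~\ref{rotationgeneral}(a). The Bezout bound gives at most $D_\FFF$ distinct solutions of $\FFF = 0$, so in particular at most $N := D_\FFF$ non-zero solutions. Applying Lemma~\ref{rotationgeneral}(a) to these solutions as the points $\bp_\ell$ immediately shows that a uniformly random $S \in \mathcal{S}_{D_\FFF}$ satisfies $\min_i |z_i'| \ge (2n^2D_\FFF)^{-16n} \cdot \|\bz\|$ for every non-zero solution $\bz$ simultaneously with probability at least $3/4$.

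For part (a), decompose $f_i = f_{i,d_i} + (\text{lower-degree terms})$, where $f_{i,d_i}$ is the homogeneous degree-$d_i$ part of $f_i$. Since the linear substitution $\bx \mapsto S^{-1}\bx$ preserves the total degree of every monomial, the top-degree part of $f_i \circ S^{-1}$ is exactly $f_{i,d_i} \circ S^{-1}$, and every degree-$d_i$ monomial of $f_i \circ S^{-1}$ arises from it. Hence, for each pair $(i,k)$, the condition of part (a) is equivalent to the non-vanishing, as a polynomial in $x_1, \ldots, \widehat{x}_k, \ldots, x_n$, of $(f_{i,d_i} \circ S^{-1})|_{x_k=0}$. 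Geometrically, this fails iff the hyperplane $S^{-1} H_k$ (with $H_k := \{\bx : x_k = 0\}$) lies entirely inside the vanishing locus $V(f_{i,d_i})$. Because $f_{i,d_i} \not\equiv 0$ is homogeneous of degree $d_i$, its zero set can contain at most $d_i \le d_\FFF$ hyperplanes, one per linear factor of $f_{i,d_i}$, giving at most $n^2 d_\FFF$ bad hyperplanes in total across all $(i,k)$.

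Thus it remains to bound the probability that a random $S$ steers $S^{-1} H_k$ onto one of those bad hyperplanes. Since $S^{-1} H_k$ has normal equal to the $k$-th row of $S$, the bad event translates into the $k$-th row of $S$ being (near-)parallel to one of finitely many prescribed directions. This near-alignment is precisely what the elementary two-dimensional rotations composing $\mathcal{S}_{D_\FFF}$ are designed to avoid: by a grid-angle argument analogous to the proof of Lemma~\ref{genericrotation}, within each factor $S^{[ij]}_{k_{ij}}(L)$ the fraction of parameter choices $k_{ij}$ placing the resulting line within angular distance $O(2^{-L})$ of a fixed direction is $O(2^{-L})$. With $L = 4\lceil \log(2n^2 D_\FFF) \rceil$, a union bound over the $n^2 d_\FFF$ bad events drives the failure probability of part (a) below $1/4$; combined with part (b) by a final union bound, the overall success probability strictly exceeds $1/2$. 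The main obstacle in the plan is part (a): Lemma~\ref{rotationgeneral} lower-bounds coordinates of rotated \emph{points}, whereas here we must keep rotated \emph{hyperplanes} away from a finite list of bad hyperplanes, and the transfer via normal vectors requires careful accounting of how small angular deviations propagate through a product of $\binom{n}{2}$ elementary rotations.
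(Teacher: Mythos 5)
Your treatment of part (b) is exactly the paper's: both you and the paper simply invoke Lemma~\ref{rotationgeneral} with $N=D_\FFF$.

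Your strategy for part (a) is genuinely different from the paper's and the geometry is sound up to a point: you correctly reduce condition (a) to $(f_{i,d_i}\circ S^{-1})|_{x_k=0}\not\equiv 0$, correctly observe that failure is equivalent to $V(f_{i,d_i})$ containing the complexified hyperplane $S^{-1}H_k$, and correctly bound the number of such hyperplanes (equivalently, linear factors of $f_{i,d_i}$) by $d_i$, giving at most $n^2 d_\FFF$ bad directions overall. The paper instead treats the relevant coefficient $C(\mathbf{k})$ of $f_i\circ S(\mathbf{k})^{-1}$ as a rational function in the parameters $\mathbf{k}=(k_{ij})$, proves it is not identically zero by observing that otherwise the top-degree form $\hat f$ would vanish on a full-dimensional subset of the sphere, and then applies the Schwartz--Zippel lemma on the discrete grid $\{1,\ldots,2^L\}^{\binom{n}{2}}$.

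The gap in your proposal is exactly the step you flag as the ``main obstacle,'' and it is a real one rather than a technicality. The angular-grid argument in Lemma~\ref{genericrotation} bounds, for a single two-dimensional rotation $S^{[ij]}_{k_{ij}}(L)$, the fraction of parameters putting a fixed \emph{point} near a coordinate axis. It does not directly control how the $k$-th \emph{row} of the full product $S=\prod_{i<j} S^{[ij]}_{k_{ij}}(L)$ is distributed, because that row is a nontrivial function of all $\binom{n}{2}$ parameters simultaneously, not of a single $k_{ij}$; the clean equally-spaced-angles picture does not survive the composition. Moreover the bad event is \emph{exact} proportionality of a real row of $S$ to one of the (possibly complex) linear factors of $f_{i,d_i}$, not near-alignment, so what you actually need is a statement of the form ``the set of parameter tuples $\mathbf{k}$ for which some row of $S(\mathbf{k})$ hits one of $n^2 d_\FFF$ fixed directions is a small fraction of the grid.'' This is a vanishing condition of a polynomial (or rational function) in $\mathbf{k}$, and once you write it that way you are back to a Schwartz--Zippel-style count with an explicit degree bound in $\mathbf{k}$ --- which is precisely what the paper's proof supplies and your sketch leaves unproved. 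To close the gap you would need to verify that the relevant polynomial in $\mathbf{k}$ is not identically zero (your hyperplane-counting argument can serve this purpose) and supply the degree bound in $\mathbf{k}$ needed for the Schwartz--Zippel step; the geometric intuition alone does not yield the quantitative probability bound.
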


\begin{proof}
	Since $D_\FFF$ constitutes an upper bound on the number of solutions of $\FFF=0$, it follows from Lemma~\ref{rotationgeneral} (with $N=\mathcal{D}_\FFF$) that, with probability at least $3/4$, the inequality in (b) is fulfilled. It thus suffices to prove that, with probability larger than $2/3$, the condition in (a) is fulfilled for each coordinate $x_k$. For this, let
	\[
		f(\bx)=\sum_{\alpha}c_{\alpha}\bx^{\alpha}\in\CC[x_1,\ldots,x_n]
	\]
	be a polynomial of total degree $d$, and let $S(\mathbf{k})^{-1}=\left(a_{rs}(\mathbf{k})\right)_{rs}$ be the matrix depending on the values $\mathbf{k}:=(k_{ij})_{i,j}$. Notice that each entry $a_{rs}(\mathbf{k})$ is a rational function in $\mathbf{k}$ with numerators and denominators of total degree (in $\mathbf{k}$) at most $2n^2$.
	Further notice that $S(\mathbf{k})^{-1}$ maps the point $(1,0,\ldots,0)$ to the first column of $S(\mathbf{k})^{-1}$ and that a full-dimensional subset $T$ of the strictly positive part $\{\bx\in\RR^n:\|\bx\|_2=1\text{ and }\bx>0\}$ of the $(n-1)$-dimensional sphere $\mathbb{S}^{n-1}\subset \mathbb{R}^n$ is reached via a suitable choice of $\mathbf{k}\in\mathbb{R}^{n^2}$.
	Composing $f$ and $S(\mathbf{k})^{-1}$ now yields
	\[
		F(\bx,\mathbf{k})=\sum_{\alpha=(\alpha_1,\ldots,\alpha_n)}c_{\alpha}\cdot (a_{11}(\mathbf{k})\cdot x_1+\cdots+a_{1n}(\mathbf{k})\cdot x_n)^{\alpha_1}\cdots (a_{n1}(\mathbf{k})\cdot x_1+\cdots+a_{nn}(\mathbf{k})\cdot x_n)^{\alpha_n},
	\]
	and the coefficient $C(\mathbf{k})$ of the monomial $x_1^d$ is thus given by
	\[
		C(\mathbf{k})=\sum_{\alpha:|\alpha|=d} c_{\alpha}\cdot a_{11}(\mathbf{k})^{\alpha_1}\cdots a_{n1}(\mathbf{k})^{\alpha_n}.
	\]
	We first argue that $C(\mathbf{k})$ does not vanish identically. Let $\hat{f}:=\sum_{\alpha:|\alpha|=d} c_{\alpha}\cdot x_1^{\alpha_1}\cdots x_n^{\alpha_n}$ be the corresponding homogenous polynomial of degree $d$ such that $\hat{f}(a_{11}(\mathbf{k}),\ldots,a_{n1}(\mathbf{k}))=C(\mathbf{k})$. Assume that $C(\mathbf{k})=0$ for all $\mathbf{k}$, then this implies that $\hat{f}$ vanishes on each point in $T$.
	Since the vanishing set of any non-zero homogenous polynomial in $n$ variables has dimension at most $n-2$, we conclude that $\hat{f}$ is the zero-polynomial, and thus $c_{\alpha}=0$ for all coefficients of $\hat f$. This contradicts our assumption on $f$.

	Hence, it follows that $C(\mathbf{k})$ is a non-zero rational function in $\mathbf{k}$.  In addition, each term $c_{\alpha}\cdot a_{11}(\mathbf{k})^{\alpha_1}\cdots a_{n1}(\mathbf{k})^{\alpha_n}$ has a numerator of total degree at most $2n^2d$ in $\mathbf{k}$ and a denominator of the form $\prod_{i,j}(2^{2L}+k_{i,j}^2)^{e_{i,j}}$, with $e_{i,j}\in\mathbb{N}$, of degree at most $2n^2d$ in $\mathbf{k}$.
	This shows that $C(\mathbf{k})$ can be written as a rational function in $\mathbf{k}$ of total degree $2n^2d+ n^4d\le 2n^4d$ as $\prod_{i,j=1:i<j}^n (2^{2L}+k_{i,j}^2)^{n^2d}$ constitutes a common denominator of all terms.
	According to the Schwartz-Zippel lemma, we thus conclude that choosing $k_{i,j}$ uniformly at random from $\{1,\ldots,2^{4\lceil \log(2n^2D_\FFF)\rceil}\}$ guarantees with probability at least $\rho:=1-2n^4d\cdot 2^{-4\lceil \log(2n^2D_\FFF)\rceil}$ that $C(\mathbf{k})\neq 0$.
	In the case where $f=f_i$ is one of the polynomials from $\FFF$, we thus obtain a probability of at least
	\[
		\rho_i:=1-2n^4d_i\cdot 2^{-4\lceil \log(2n^2D_\FFF)\rceil}
		\ge 1-\frac{2n^4d_i}{(2n^2D_{\FFF})^4}\ge 1-\frac{1}{8n^4}
	\]
	such that $f_i$ contains a term of the form $c\cdot x_1^{d_i}$ with a non-zero constant $c$. Since the same argument applies to any variable $x_k$ and to any of the $n$ polynomials $f_i$, the claim follows.
\end{proof}

From the above lemma, we conclude that by choosing a suitably random rotation matrix from the set $\SSS_{D_\FFF}$, we can ensure with high probability that there is a one-to-one correspondence between the (finite) solutions of $\FFF$ and the roots of the resultant polynomial $\res(\FFF,x_k)$, which are the projections of the solutions on the $x_k$-axis. In addition, the absolute value of each projection compares well to the absolute value of the corresponding solution. In what follows, we will use the following definition of the set of \emph{admissible rotation matrices with respect to a given system $\FFF$}, i.e., matrices $S\in\SSS_{D_\FFF}$ such that the statements \eqref{lemmarot+a} and \eqref{lemmarot+b} from the above Lemma~\ref{lem:rotationsystem} hold.

\begin{definition}\label{goodmatrices}\emph{(Admissible Matrices)}
	For a given polynomial system $\FFF$ we say that a rotation matrix $S\in \SSS_{D_\FFF}$ is \emph{admissible with respect to $\FFF$} if the statements \eqref{lemmarot+a} and \eqref{lemmarot+b} from Lemma~\ref{lem:rotationsystem} hold.
	We further denote by
	\[
		\SSS_\FFF:=\{S\in\SSS_{D_\FFF}:S\text{ is admissible with respect to }\FFF\}\subset\SSS_{D_\FFF}
	\]
	the set of \emph{admissible matrices with respect to $\FFF$}.
\end{definition}

Notice that, even though it is difficult (probably as difficult as computing all solutions of $\FFF$) to determine whether a certain matrix in $\SSS_{D_\FFF}$ is admissible with respect to $\FFF$, the previous lemma shows that at least half of the matrices in $\SSS_{D_\FFF}$ are admissible.

%!TEX root=./systems.tex
\providecommand{\ubg}{\gamma}

\section{The Algorithm}\label{sec:algorithm}
We first sketch our algorithm $\cert$ and then prove its correctness. We refer the reader to the pseudo-code in Algorithm~\ref{algo:truncate} for details regarding $\cert$. The algorithm can be roughly split into 3 main steps:\\

\begin{algorithm}[t!]
\BlankLine
 \Input{Zero-dimensional system $\FFF=(f_1,\ldots, f_n)$, polydisc $\mathbf{\Delta}=\mathbf{\Delta}_r(\bm)$, and an integer $K\in\{0,\ldots,d_\FFF\}$.}
 \Output{An integer $k\in\NN\cup\{-1\}$. If $k\ge 0$, the polydisc $\mathbf{\Delta}$ contains exactly $k$ solutions of $\FFF$ (counted with multiplicity). If $k=-1$, nothing can be said.}
 \BlankLine

 \nonl\texttt{//} \textit{* Shift and Truncation *}\texttt{//}\;
 $L:=\lceil\log\frac{r}{32n(K+1)^n}\rceil$\;
 Compute a $(K+1)\cdot L$-bit approximation
 $\Phi'
 =(\phi_1',\ldots,\phi_n')
 $ of $\FFF[\bm]_{\le K}=(f_1[\bm]_{\le K}, \ldots, f_n[\bm]_{\le K}). $	% \text{ of } \quad
% such that
% $
% \|\phi_i-f_i[m]_{\le k}\|\le2^{-(k+1)L}.
% $
 \BlankLine
 \medskip\nonl\texttt{//} \textit{* Adding a degree $(K+1)$-perturbation ;  as mentioned, this step seems to be only necessary in theory. In practice, we recommend to directly proceed with $\Phi:=\Phi'$. *}\texttt{//}\;
 $\Phi(\bx)
 :=(\phi_1,\ldots,\phi_n), \text{ with }\phi_i:= x_i^{K+1}+\phi'_i$

 \medskip\nonl\texttt{//} \textit{* Solving the truncated system *}\texttt{//}\;
Compute a list $(\mathbf{\Delta}_1,k_1),\ldots, (\mathbf{\Delta}_\ell, k_\ell)$ of disjoint polydiscs $\mathbf{\Delta}_i=\mathbf{\Delta}_{r_i}(\bm_i)$ of radius at most $2^{-L}$ and corresponding multiplicities $k_i$ such that each $\mathbf{\Delta}_i$ contains exactly $k_i$ solutions of $\Phi$, and each solution of $\Phi$ is contained within one $\mathbf{\Delta}_i$.\;\label{line:oracle}
 % Let $\mathbf{\Delta}:=\mathbf{\Delta}_r(0)$, with\label{line:oracle}
 %$r:=2^{-L+6}n(K+1)^n$. \label{line:oracle} \;
 $k:=\sum_{i:\|\bm_i\| < \frac{r}{2n}} k_i$\;
 $k^+:=\sum_{i:\|\bm_i\| < 2nr} k_i$\;
 \medskip

\If{$k=k^+$  \label{line:ifrootscontained}}{
  \medskip\nonl\texttt{//} \textit{* Projection step *}\texttt{//}\;
  % \Repeat{
  % }{
  Pick $S\in\SSS_{D_\Phi}$ uniformly at random and compute the rotated system
  $\Phi^*:=\Phi\circ S^{-1}=(\phi_i\circ S^{-1})_i$.\;
  % \For{$i\in [n]$}{
  \For{$\ell=1,\ldots, n$}{\nonl
   $(b_\ell^-, k^-_\ell, \LB_\ell^-) := \TTT_*(\Delta_{\frac{r}{\sqrt{n}}}(0), \res(\Phi^*, x_\ell))$\nonl\\
   $(b_\ell^+, k^+_\ell ,\LB_\ell^+) := \TTT_*(\Delta_{\sqrt{n}r}(0), \res(\Phi^*, x_\ell))$.\;
  }
  \If{$\bigwedge_{\ell\in[n]}b_\ell^-\wedge \bigwedge_{\ell\in[n]}b_\ell^+$ \label{line:ifTktest}}{
   %  \atcp{\normalfont \textit{With probability $1/2$, the random}}\nonl
   %  \atcp{\normalfont \textit{rotation matrix $S$ yields to success.}}\nonl
   \medskip\nonl\texttt{//} \textit{* Bound Computation and Comparison *}\texttt{//}\;
   $\begin{array}{ll}
    \UB(\bm,r) & :=r^{K+1} \cdot d_\FFF^n 2^{\tau_\FFF + 2} [M(\bm)\cdot (n+d_\FFF)^2]^{d_\FFF}\\  
    B_{\Phi^*}   & :=2\cdot\lceil D_{\Phi^*}\cdot (6n+4)\cdot\log(d_{\Phi^*}+n) \rceil + \tau_{\Phi^*}\cdot \frac{D_{\Phi^*}}{k+1}\\
    \LB(\bm,r) & :=\min_\ell\min(\LB_\ell^-,\LB_\ell^+)\cdot \left(n\cdot\binom{D_{\Phi^*} + n}{D_{\Phi^*}}\cdot 2^{B_{\Phi^*}}\right)^{-1}
\end{array}$\label{line:lowerbound}
   
   \If{$\UB(\bm, r)\le \LB(\bm, r)$}{
    \Return{$k$}
   }
  }
 }
 \Return{$-1$}
    \BlankLine
 \caption{$\cert(\FFF,\mathbf{\Delta},K)$}
 \label{algo:truncate}
\end{algorithm}

\noindent\textbf{\emph{Step 1: Shifting and Truncation.}} Given a polynomial system $\FFF:=(f_1,\ldots,f_n)$, a polydisc $\mathbf{\Delta}=\mathbf{\Delta}_r(\bm)$, and an integer $K\in\{0,\ldots,d_\FFF\}$, we define a ``precision'' $$L:=\lceil\log\frac{r}{32n(K+1)^n}\rceil.$$ Then, in a first step, we compute a $(K+1)\cdot L$-bit approximation
\[
 \Phi'(\bx)
 :=(\phi_1',\ldots,\phi_n'), \text{ with }\phi_i'\in\QQ[\bx]_{\le K},
\]
of $\FFF[\bm]_{\le K}(\bx)$, i.e., we compute $\phi_i'$ such that $\|\phi_i'-f_i[\bm]_{\le K}\|<2^{-(K+1)L}$ and $2^{(K+1)L}\cdot \phi_i'\in\ZZ[\bx]$ for all $i$. Recall that the centered polynomial system $\FFF[\bm](\bx)$ was defined as
\[
 \FFF[\bm](\bx):=(f_{1}[\bm](\bx),\ldots,f_{n}[\bm](\bx))=(f_1(\bm+\bx),\ldots,f_n(\bm+\bx)),
\]
for $\bm\in\CC^n$ and that the truncation
\begin{align*}\label{truncsystem}
 \FFF[\bm]_{\le K}(\bx)
 :=(f_1[\bm]_{\le K},\ldots,f_n[\bm]_{\le K}),
\end{align*}
of the centered system $\FFF[\bm]$, is defined by simply omitting all terms of $f_{i}[\bm](\bx)$ of total degree more than $K$, as defined in Section~\ref{subsec:defnot}. We further define
\[
 \Phi(\bx)
 :=(\phi_1,\ldots,\phi_n), \text{ with }\phi_i:=x_i^{K+1}+\phi'_i,
\]
the system obtained by adding the term $x_i^{K+1}$ of degree $K+1$ to the polynomial $\phi_i'$. This step seems to be odd at first sight, however, it ensures certain properties of $\Phi$. In particular, $\Phi$ is guaranteed to have no zeros at infinity (and thus being zero-dimensional as well) according to Corollary~\ref{artificialperturbation} and our considerations in the corresponding example. This further implies that $\Phi=0$ has exactly $D_{\Phi}=(K+1)^n$ finite solutions counted with multiplicity. Also, our choice of $\Phi$ allows us to bound the leading coefficient of $\res(\Phi,x_\ell)$ for all $\ell=1,\ldots,n$, which turns out to be useful in the analysis of our approach.\medskip

\noindent\emph{Remark.} In practice, the latter step does not seem to be necessary in most cases, and thus we recommend to simply proceed with $\Phi:=\Phi'$ and to check $\Phi'$ for being zero-dimensional. Also, when implementing our algorithms, we observed that proceeding with $\Phi'$ instead of $\Phi$ only improves the overall performance.\\

\noindent\textbf{\emph{Step 2: Solving $\Phi$.}} We will later prove that, under the assumption that $L$ is sufficiently large (or equivalently $\mathbf{\Delta}$ is sufficiently small), and $\bz$ is a $k$-fold solution of the initial system with $\|\bm-\bz\|<2^{-L}$, the system $\Phi$ (as well as $\Phi'$ for generic choice of its coefficients) yields a cluster of $k$ (not necessarily distinct) solutions with norm less than $4\cdot 2^{-L}$, whereas all other solutions have norm larger than $\delta_0\gg 2^{-L}$. Here, $\delta_0$ is a constant that depends on the polynomial system but not on $L$; see~Theorem~\ref{thm:countroots} for the exact definition of $\delta_0$ and further details.
We first check whether there exists a cluster of solutions of $\Phi$ near the origin that is well separated from all other solutions of $\Phi$. For this, we use a certified method (e.g.~\cite{DBLP:conf/issac/BrandS16}) to compute all solutions of $\Phi$. Here, by computing all solutions, it is meant to compute a set of disjoint discs, each of size less than $2^{-L}$, together with the number of solutions contained in each disc such that the union of all discs contains all complex solutions. For the more involved problem of computing isolating regions of comparable size, the following theorem applies.

\begin{theorem}\cite[Thm.~9, 10]{DBLP:conf/issac/BrandS16}\label{cost:solving}
There is a Las Vegas algorithm to compute isolating regions of size less than $2^{-\rho}$ for all complex solutions of a zero-dimensional polynomial system $\FFF=(f_1,\ldots, f_n)$, with integer polynomials $f_i\in\ZZ[\bx]$, using 
\[
    \tilde{O}(n^{(n-1)(\omega+1)+1}(nd_\FFF+\tau_\FFF){d_\FFF}^{(\omega+2)n-\omega-1}+n\cdot {d_\FFF}^n\cdot \rho)
\]
bit operations in expectation. 
\end{theorem}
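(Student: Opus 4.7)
\medskip
\noindent\emph{Proof sketch.} Since the statement is quoted from~\cite{DBLP:conf/issac/BrandS16}, the plan is to give a brief reconstruction of how one would obtain it by combining the ingredients that have already been set up in Section~\ref{sec:preliminaries}. The strategy splits naturally into two phases: a symbolic projection phase that depends on $n,d_\FFF,\tau_\FFF$ only, and a numerical refinement phase whose cost scales linearly in the target precision $\rho$.

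First, I would put the system in generic position by composing with a random rotation matrix $S\in \SSS_{D_\FFF}$. By Lemma~\ref{lem:rotationsystem} this succeeds with probability $>1/2$ (so at most a constant expected number of trials is needed, giving the Las~Vegas flavour), and by Lemma~\ref{lem:rotation_coeffsize} together with Lemma~\ref{rotationgeneral}(b) the coefficients of $\FFF\circ S^{-1}$ remain integral after clearing the common denominator of $S$, with bit-size essentially unchanged up to polylogarithmic factors. After this preprocessing one invokes Theorem~\ref{complexity:resultant} \emph{for each} coordinate $x_\ell$ ($\ell=1,\dots,n$), producing the univariate hidden-variable resultants $R_\ell:=\res(\FFF\circ S^{-1},x_\ell)\in\ZZ[x_\ell]$. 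The total cost of this symbolic step is $n$ times the bound from Theorem~\ref{complexity:resultant}, i.e.
\[
\tilde{O}\bigl(n^{(n-1)(\omega+1)+1}(d_\FFF+\tau_\FFF)d_\FFF^{(\omega+2)n-\omega-1}\bigr),
\]
which matches the first summand in the claimed bound. By Lemma~\ref{lem:mildconditions} and the rotation lemmas, every affine solution of $\FFF$ is encoded, with the correct multiplicity, by the roots of $R_\ell$ on each axis.

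For the refinement phase I would apply a certified univariate root-isolation routine to each $R_\ell$ at target precision $2^{-\rho'}$ for some $\rho'=\rho+O(n\log d_\FFF)$, so that the $n$-tuples obtained by coordinate-wise pairing are of diameter below $2^{-\rho}$. Each $R_\ell$ has degree at most $D_\FFF=d_\FFF^n$ and coefficient bit-size $\tilde O(n d_\FFF+\tau_\FFF)\cdot d_\FFF^{n-1}$ (from the standard resultant bounds, subsumed by what is used in the symbolic step). Near-optimal univariate solvers isolate the roots of such a polynomial to precision $2^{-\rho}$ in $\tilde O(D_\FFF\cdot(\tau_{R_\ell}+\rho))$ bit operations; summed over the $n$ coordinates this contributes $\tilde O(n\cdot d_\FFF^n\cdot\rho)$ on top of terms already absorbed into the symbolic bound. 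Finally one matches the $n$ lists of projections: since the rotation ensures that each solution has pairwise distinct coordinates across axes (up to a Schwartz--Zippel argument as in Lemma~\ref{lem:rotationsystem}), the matching can be done in time polynomial in $n$ and $D_\FFF$ by sorting and verifying candidate tuples, which is again dominated by the previous terms.

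The main obstacle, and the part that actually requires the machinery of~\cite{DBLP:conf/issac/BrandS16}, is getting the exponent on $n$ right in the resultant computation: a naive Macaulay-matrix determinant would cost far more, and one has to rely on the Emiris--Pan formulation together with Storjohann's fast determinant algorithm to obtain the $n^{(n-1)(\omega+1)}$ factor inside the $\tilde O$. Everything else (rotation, matching, univariate refinement) fits within the stated bound without further effort.
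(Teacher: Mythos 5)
The statement you are proving is quoted as an external result from \cite{DBLP:conf/issac/BrandS16}; the paper you are reading does not prove it, so there is no internal argument to compare your sketch against. Treated on its own terms, the overall two-phase split of your reconstruction (symbolic projection, then $\rho$-dependent refinement) is plausible, and your first phase correctly leans on Theorem~\ref{complexity:resultant} with the rotation lemmas. But there is a genuine gap in the second phase, in the ``matching'' step.

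You propose to isolate the roots of the $n$ hidden-variable resultants $R_\ell$ independently and then pair them coordinatewise, asserting that this ``can be done in time polynomial in $n$ and $D_\FFF$ by sorting and verifying candidate tuples.'' This does not work as stated: after independently isolating at most $D_\FFF$ roots on each of the $n$ axes, you face up to $D_\FFF^n$ candidate tuples, with no way to decide which tuples are actual solutions without further algebraic structure. Distinctness of coordinates (even if it held, which is not exactly what Lemma~\ref{lem:rotationsystem} guarantees) does not yield a bijection between the $n$ sorted lists, and naive verification of each candidate would blow past the claimed complexity. This matching problem is exactly what a complete solver must resolve, and the standard way out --- used in \cite{Rouillier1999} and, as the present paper indicates, in \cite{DBLP:conf/issac/BrandS16} --- is a \emph{rational univariate representation}: after a generic linear change of variables, compute not just a univariate eliminating polynomial but also parameterizing polynomials expressing each $x_\ell$ as a rational function of a separating element; then a single univariate isolation at precision $\tilde O(\rho)$ immediately gives all $n$ coordinates of each solution, with no combinatorial matching at all. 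The $n\cdot d_\FFF^n\cdot\rho$ term in the statement is the cost of refining this one parameterized representation, not of refining $n$ independent lists. You also wave at ``bit-size essentially unchanged up to polylogarithmic factors'' after rotation, which does not explain the $n d_\FFF$ (as opposed to $d_\FFF$) inside the first summand; this is a smaller issue, but it too traces back to the change of coordinates and the parameterization polynomials having slightly larger height than the original system. In short, the fast resultant computation is only half of the machinery of \cite{DBLP:conf/issac/BrandS16}; the other half --- the parameterization that replaces your matching step --- is essential and missing from your sketch.
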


Since $2^{(K+1)L}\cdot \phi_i$ is a polynomial of degree $K+1$ with integer coefficients of magnitude $\tau=O(KL+n+\tau_\FFF+d_\FFF\LOG(\bm))$, we conclude from the above theorem that the cost for solving the system $\Phi=0$ is bounded by 
\begin{align}\label{cost_solving_truncated_system}
\tilde{O}(n^{(n-1)(\omega+1)+1}(nK+KL+\tau+d_\FFF\LOG(\bm))\cdot (K+1)^{(\omega+2)n-\omega-1})
\end{align}
bit operations in expectation. Finally, we check whether the polydisc $\mathbf{\Delta^-}:=\mathbf{\Delta}_{r/(2n)}(0)$ contains the same number $k'$ of solutions of $\Phi$ as the enlarged polydisc $\mathbf{\Delta^+}:=\mathbf{\Delta}_{2nr}(0)$.
Notice that, from the above remark, this holds true if $\bm$ is an $L$-bit approximation of a $k$-fold zero of $\FFF$ for large enough $L$ as then $4\cdot 2^{-L}<r/(2n)$ and $\delta_0>2nr$.
If the zeros of $\Phi$ do not fulfill the latter condition, we return $-1$. Otherwise, we proceed.\medskip 

\noindent\emph{Remark.} We remark that computing the solutions of $\Phi$ is typically much more affordable than computing the solutions of the initial system $\FFF$ directly, in particular, in the case where $n$ is small and $K\ll d$. Notice that, for $n$ of constant size, the cost for solving the initial system directly scales like $d^{(\omega+2)n-\omega-1}\tau_\FFF$, whereas the cost for solving the truncated system scales like $(KL+\tau_\FFF+d\LOG(\bm))\cdot (K+1)^{(\omega+2)n-\omega-1}$. Hence, for $L$ and $m$ of moderate size, the running times might differ by factor of size $\approx (d/(K+1))^{(\omega+2)n-\omega-1}$.\\ 
%We remark at this point that our experiments partially support this vague theoretical analysis.\\

%Also, in our case, it is actually not necessary to compute isolating regions of all solution. Instead, it suffices to compute only a clustering of the solutions that are of absolute value smaller than $2nr$, with $r:=2^{-L+6}n(k+1)^2$.

\noindent\textbf{\emph{Step 3: Passing from $\Phi$ to $\FFF$.}}
In the final step, we aim to certify that $\FFF[\bm]$ has the same number of zeros  (i.e.~$k$ counted with multiplicity) in $\mathbf{\Delta}:=\mathbf{\Delta}_{r}(0)$ as $\Phi$. In order to do so, we aim to apply the following generalization of Rouch\'e's Theorem to $\Phi$ and $\FFF[\bm]$, see~\cite[Thm.~2.1]{DBLP:journals/tcs/VerscheldeH94} or~\cite[Thm.~1]{Lloyd75} for a proof.

\begin{theorem}[Multidimensional Rouch\'e]\label{Rouche}
    Let $\FFF=(f_1,\ldots,f_n)$ and $\GGG=(g_1,\ldots,g_n)$, with $f_i,g_i\in\CC[\bx]$ for all $i$, define polynomial mappings from $\CC^n$ to $\CC^n$. If, for a given bounded domain $\mathbf{D}\subset\CC^n$, we have
    \[
        \|\FFF(\bx)-\GGG(\bx)\|<\|\FFF(\bx)\|\quad\text{for all }\bx\in\partial \mathbf{D},
    \]
    where $\partial \mathbf{D}$ is the boundary of $\mathbf{D}$, then $\FFF$ and $\GGG$ have finitely many zeros in $\mathbf{D}$ and the number of zeros (counted with multiplicities) of $\FFF$ and $\GGG$ in $\mathbf{D}$ is the same.
\end{theorem}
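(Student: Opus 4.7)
The plan is to prove Theorem~\ref{Rouche} by a homotopy argument combined with degree theory, generalizing the classical one-dimensional proof of Rouch\'e's theorem via the argument principle. I would introduce the continuous family of polynomial maps $\mathbf{H}_t(\bx) := (1-t)\FFF(\bx) + t\GGG(\bx)$ for $t \in [0,1]$, which interpolates between $\mathbf{H}_0 = \FFF$ and $\mathbf{H}_1 = \GGG$. The first step is to verify that $\mathbf{H}_t$ has no zero on $\partial \mathbf{D}$ for any $t \in [0,1]$: for $\bx\in\partial\mathbf{D}$ the triangle inequality yields
\[
    \|\mathbf{H}_t(\bx)\| \;\ge\; \|\FFF(\bx)\| - t\cdot \|\FFF(\bx)-\GGG(\bx)\| \;\ge\; \|\FFF(\bx)\| - \|\FFF(\bx)-\GGG(\bx)\| \;>\; 0,
\]
where the final strict inequality is the hypothesis. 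Consequently $\mathbf{H}_t/\|\mathbf{H}_t\|$ is a continuous family of maps from $\partial \mathbf{D}$ into $\CC^n\setminus\{0\}$, and in particular $\mathbf{H}_t$ is a legitimate input to a topological degree construction throughout the homotopy.

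The next ingredient is the Brouwer degree of a map, applied after identifying $\CC^n$ with $\RR^{2n}$. For any polynomial map $\mathbf{G}\colon\overline{\mathbf{D}}\to\CC^n$ with no zero on $\partial\mathbf{D}$, I would use the classical identity that the degree $\deg(\mathbf{G},\mathbf{D})$ equals the algebraic root count $\sum_{\bz\in\mathbf{G}^{-1}(0)\cap\mathbf{D}}\mu(\bz,\mathbf{G})$. This identification rests on the fact that a holomorphic map is locally orientation-preserving, so that each local index contributes a \emph{positive} integer equal to the algebraic multiplicity. A convenient way to establish this is to apply a small generic perturbation that splits every multiple zero into a cluster of simple zeros, at each of which the Jacobian is invertible and the local index is $+1$, and then to invoke continuity of the degree to transport the count back to $\mathbf{G}$. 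The finiteness of $\mathbf{G}^{-1}(0)\cap\mathbf{D}$ itself is automatic: the set is bounded and analytic, and a positive-dimensional compact complex-analytic subvariety of $\CC^n$ cannot exist by the maximum modulus principle.

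Homotopy invariance of the degree now closes the argument. Since $t\mapsto \deg(\mathbf{H}_t,\mathbf{D})$ is integer-valued and continuous on $[0,1]$, it is constant, so
\[
    \sum_{\bz\in\FFF^{-1}(0)\cap\mathbf{D}}\mu(\bz,\FFF) \;=\; \deg(\FFF,\mathbf{D}) \;=\; \deg(\GGG,\mathbf{D}) \;=\; \sum_{\bz\in\GGG^{-1}(0)\cap\mathbf{D}}\mu(\bz,\GGG),
\]
which gives both the finiteness and the equality of root counts asserted in the theorem.

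The main obstacle is the second ingredient above, namely the identification of the topological degree with the algebraic multiplicity sum in several complex variables. In dimension one this is the argument principle and is essentially immediate; in higher dimensions it requires either a careful perturbation argument within Brouwer degree theory (as sketched) or a multidimensional residue formula generalizing $\frac{1}{2\pi i}\oint f'/f$. Neither route is deep, but both are technical enough that a clean write-up is best delegated to standard references such as Lloyd's monograph on degree theory, which is exactly what the authors cite.
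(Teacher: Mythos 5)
The paper does not prove Theorem~\ref{Rouche} at all; immediately before the statement it simply points to \cite[Thm.~2.1]{DBLP:journals/tcs/VerscheldeH94} and \cite[Thm.~1]{Lloyd75}, so there is no internal proof to compare against. Your sketch is the standard degree-theoretic argument that those references carry out, and it is correct in outline: the convex homotopy $\mathbf{H}_t=(1-t)\FFF+t\GGG$ stays nonvanishing on $\partial\mathbf{D}$ by the triangle inequality, homotopy invariance of the Brouwer degree gives $\deg(\FFF,\mathbf{D})=\deg(\GGG,\mathbf{D})$, and the identification of the degree of a holomorphic map with the algebraic root count (positivity of the local index, perturbation to simple zeros, or a multidimensional residue) closes the loop. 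Two small points worth tightening in a write-up. First, the finiteness argument needs \emph{compactness}, not just boundedness: $\mathbf{G}^{-1}(0)\cap\overline{\mathbf{D}}$ is compact and, by the hypothesis on $\partial\mathbf{D}$, equals $\mathbf{G}^{-1}(0)\cap\mathbf{D}$; it is this compactness (together with the maximum principle excluding positive-dimensional components) that forces the zero set to be finite. Second, the identification of degree with multiplicity is the real content of the theorem in $n>1$ variables, and your perturbation sketch silently uses that the perturbed zeros all have Jacobian of positive determinant as real maps — that is the holomorphic-orientation fact, and it deserves to be stated explicitly rather than folded into ``locally orientation-preserving.'' Neither point is a gap in the logic, only in the exposition, and your approach is exactly the one the cited sources take.
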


In order to apply the above theorem to $\FFF:=\Phi$ and $\GGG:=\FFF[\bm]$, we derive an upper bound $\UB(\bm,r)$ on the absolute error
\begin{align*}
    \sup_{\bx\in\CC^n:\|\bx\|=r}\|\FFF[\bm](\bx)-\Phi(\bx)\|&=\sup_{\bx\in\CC^n:\|\bx\|=r}  \max_{i\in[n]} |f_i[\bm](\bx)-\phi_i(\bx)|\\
    &\le \sup_{\bx\in\CC^n:\|\bx\|=r}  \max_{i\in[n]} (|f_i[\bm](\bx)-\phi_i'(\bx)|+|x_i|^{K+1})\\
    &\le r^{K+1}+\sup_{\bx\in\CC^n:\|\bx\|=r}  \max_{i\in[n]} |f_i[\bm](\bx)-\phi_i'(\bx)|
\end{align*}
when passing from $\Phi$ to $\FFF[\bm]$ as well as a lower bound $\LB(\bm,r)$ on the norm of $\Phi(\bx)$ on the boundary of the polydisc $\mathbf{\Delta}$. The construction of $\UB(\bm,r)$ is rather straightforward using Lemma~\ref{lemma:truncated approximation}. That is, we may choose
\begin{align}\label{def:UB}
    \UB(\bm,r):=r^{K+1} \cdot {d_\FFF}^n\cdot 2^{\tau_\FFF + 2}\cdot [M(\bm)\cdot (n+d_\FFF)^2]^{d_\FFF} 
\end{align}

In contrast, the construction of $\LB(\bm, r)$ is more involved:
We already mentioned that if $L$ is large enough, then there are $k$ zeros $\bz_1,\ldots,\bz_k$ of $\Phi$ that have norm less than $4\cdot 2^{-L}$, whereas all other zeros have norm $\delta_0\gg 2^{-L}$.
Hence, under this assumption, picking\footnote{In practice, we recommend to consider $S=\operatorname{id}_n$ and thus $\Phi^*=\Phi$ as the initial choice as this turns out to be sufficient in most cases.} a random rotation matrix $S$ from $S_{D_\Phi}=S_{(K+1)^n}$ and considering a corresponding rotation of the coordinate system, guarantees (see Lemma~\ref{lem:rotationsystem}), with probability larger than $1/2$, that the
projection of any zero of the ``rotated system'' 
$$\Phi^*=(\phi_1^*,\ldots,\phi_n^*):=\Phi\circ S^{-1}$$
on any coordinate axis, except for the $k$ solutions $S\cdot \bz_i$, yields a value that is large compared to $r$. Hence, in this case, the hidden-variable resultant $R_\ell^*:=\operatorname{Res}(\Phi^*,x_\ell)$ of $\Phi^*$ has $k$ roots of absolute value less than $r$, whereas all other roots of $R_\ell^*$ have absolute value $\gg r$. 
Notice that each $\phi_j^*\in\QQ[\bx]$ is a polynomial of degree $K+1$ with rational coefficients, and 
according to Lemma~\ref{generalbounds} and Lemma~\ref{lem:rotation_coeffsize}, we have 
\[
    \|\phi_j^*\|=2^{\tilde{O}(n+\tau_\FFF+d_\FFF\LOG(\bm))}.
\]
Lemma~\ref{rotationgeneral} further yields the existence of an integer $\lambda$ of absolute value $2^{\tilde{O}(n^3\log K)}$ with $\lambda\cdot S^{-1}\in \ZZ^{n\times n}$. Hence, we conclude that each term of degree $K+1$ of $\lambda^{K+1}\cdot\phi^*_j$ has integer coefficients, and~\cite[Theorem 3.1]{cox2005using} further yields that
\[
\res(\lambda^{K+1}\cdot\Phi^*,x_\ell)=\lambda^{n(K+1)^{n}}\cdot\res(\Phi^*,x_\ell).
\]
It thus follows that
\[
|\operatorname{LC}(\RES{\Phi^*}{x_\ell})|= \lambda^{-n(K+1)^{n}}\cdot |\operatorname{LC}(\RES{\lambda^{K+1}\Phi^*}{x_\ell})|\ge \lambda^{-n(K+1)^{n}}=2^{-\tilde{O}((K+1)^n)},
\]
where we use Corollary~\ref{artificialperturbation} to show that $|\operatorname{LC}(\RES{\lambda^{K+1}\Phi^*}{x_\ell})|$ is a positive integer, hence larger than or equal to $1$.
Since $\RES{\Phi^*}{x_\ell}$ is contained in the ideal generated by the polynomials $\phi_j^*$, we may write
\begin{align}\label{cofactorrep}
 \RES{\Phi^*}{x_\ell}=g_{\ell,1}\cdot \phi_1^*+\cdots+g_{\ell,n}\cdot \phi_n^*
\end{align}
with polynomials $g_{\ell,j}\in\QQ[\bx]$ of total degree bounded by $D_{\Phi^*}=(K+1)^n$.
%Since $\Phi$ is a $(k+1)\cdot L$-bit approximation of $\FFF[\bm]_{\le k}$, it holds that $2^{(k+1)L+1}\cdot \phi_i\in\mathbb{Z}[\bx]$ for all $i$ \RB{It s not true for every $(k+1)\cdot L$-bit approximation, right?}, and thus we can use
Corollary~\ref{cor:arithmetichilbert} further yields the following upper bound on the size of the coefficients of the $g_{\ell,j}$'s:
\begin{align*}
 \log\|g_{\ell,j}\| & \le B_{\Phi^*}% \\ \nonumber
 % &=\lceil (6n+4)\log(n+1)D_{\Phi}\rceil+\lceil\log (n(n+d_\Phi)^{d_\Phi})\rceil+\lceil \log ((d_\Phi 2^{\tau_\Phi})^{D_\Phi})\rceil\\ \nonumber
 =\tilde{O}(D_{\Phi^*}\cdot n+\tau_{\Phi^*}\cdot \frac{D_{\Phi^*}}{K+1}))%\\
 =\tilde{O}((K+1)^n+(K+1)^{n-1}\cdot(\tau_\FFF+d\LOG(\bm))).%\label{bound:gijs_2},
\end{align*}
Using Lemma~\ref{generalbounds}, part~\ref{parta}, this further yields a corresponding upper bound
\begin{align}\label{bound:B}
 \ubg:=\binom{n + D_{\Phi^*}}{D_{\Phi^*}}\cdot 2^{B_{\Phi^*}}=2^{\tilde{O}((K+1)^n+(K+1)^{n-1}\cdot(\tau_\FFF+d\LOG(\bm)))}
\end{align}
such that
$\max_{\ell,j}\sup_{\bx:\|\bx\|\le 1}|g_{\ell,j}(\bx)|\le \ubg$.\medskip

\noindent\emph{Remark.} The reader might wonder why we do not compute the above cofactor representation (\ref{cofactorrep}) directly and then derive bounds on the size of $\max_{i,j}\sup_{\bx:\|\bx\|=1}|g_{\ell,j}(\bx)|$ using interval arithmetic, but instead use Corollary~\ref{cor:arithmetichilbert}? The simple reason is that, at least in practice, computing the polynomials $g_{i,j}$ turns out to be considerably more costly than computing the resultant polynomials $R_\ell^*(x)=\RES{\Phi^*}{x_\ell}$ only. 
In contrast, our approach of computing the bound $\ubg$ does not require to compute the polynomials $g_{i,j}$, and thus comes at almost no additional cost. We further remark at this point that we will use the bounds from Corollary~\ref{cor:arithmetichilbert} in our complexity analysis of the algorithm.\medskip

In the next step, we compute lower bounds $\LB_\ell^-$ and $\LB_\ell^+$ for $|R_\ell(x)^*|$ on the boundary of the two discs $D^-:=\Delta_{r/\sqrt{n}}(0)\subset\CC$ and $D^+:=\Delta_{r\cdot\sqrt{n}}(0)\subset\CC$, respectively. For this, we use the so-called $\TTT_k$-test, an approach that has recently been proposed in an algorithm for complex root isolation~\cite{DBLP:journals/corr/BeckerS0Y15}. 
%Put simply, it checks whether the $k$-th coefficient $c_{i,k}:=R_\ell^{(k)}(0)\cdot r^k/k!$ of the polynomial $R_\ell(r\cdot x)$ dominates all other coefficients.

\begin{lemma}[\cite{DBLP:journals/corr/BeckerS0Y15}]\label{pellet}
 Let $f\in\CC[x]$ be a uni-variate polynomial of degree $d$ and let $\Delta:=\Delta_r(0)\subset\CC$ be the disc with radius $r$ centered at $0$. The so-called $T_k$-test returns a pair
 \begin{align}\label{Tktest}
  \mathcal{T}_{k}(\Delta,f)=(b, \LB)=
  \Big( \frac{|f^{(k)}(0)|r^{k}}{k!} -\frac{3}{2}\cdot\sum_{i\neq k} \frac{|f^{(i)}(0)|r^{i}}{i!} >0,
  \frac{1}{3}\cdot \frac{|f^{(k)}(0)|r^{k}}{k!}\Big).
 \end{align}
 If $b=$\textsc{True}, we say that $\mathcal{T}_{k}(\Delta,f)$ succeeds. If $\mathcal{T}_{k}(\Delta,f)$ succeeds, $\Delta$ contains exactly $k$ roots counted with multiplicity and $$5\cdot\LB>|f(x)|>\LB\quad\text{for all }x\in\partial\Delta_r(0).$$ In addition, if $\Delta_{r/(16d)}(0)$ as well as $\Delta_{16d^4r}(0)$ contain exactly $k$ roots, then $\TTT_k(\Delta,f)$ succeeds. We further define 
  \begin{align}\label{Tstartest}
  \mathcal{T}_\star(\Delta,f)=
  \begin{cases}
   \Big( \textsc{True},k,
  \frac{1}{3}\cdot \frac{|f^{(k)}(0)|r^{k}}{k!}\Big)&\text{ if }\mathcal{T}_k(\Delta,f) \text{ succeeds for some }k,\\
  (\textsc{False},-1)&\text{ otherwise.}
  \end{cases}
  \end{align}
\end{lemma}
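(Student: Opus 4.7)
The plan is to handle the three assertions in sequence. The first two follow from a single application of the complex Rouché theorem to $f$ and the monomial $c_k x^k$ (where $c_i := f^{(i)}(0)/i!$). On the boundary $|x| = r$, the success condition $|c_k|r^k > \tfrac{3}{2}\sum_{i \neq k}|c_i|r^i$ gives
\[
|f(x) - c_k x^k| \;\le\; \sum_{i \neq k} |c_i|r^i \;<\; \tfrac{2}{3}|c_k|r^k \;=\; \tfrac{2}{3}|c_k x^k|,
\]
so in particular $|f(x) - c_k x^k| < |c_k x^k|$ on $\partial\Delta$. Rouché's theorem then implies that $f$ has as many zeros in $\Delta$ (counted with multiplicity) as $c_k x^k$, namely $k$; note that $c_k \neq 0$ since $|c_k|r^k > 0$. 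The same two-sided estimate, combined with the triangle inequality, yields $\tfrac{1}{3}|c_k|r^k < |f(x)| < \tfrac{5}{3}|c_k|r^k$ on $\partial \Delta$, which is the sandwich $\LB < |f(x)| < 5\,\LB$.

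For the converse, I would factor $f(x) = c_d \prod_{j=1}^d (x - \alpha_j)$ with $|\alpha_1| \le \ldots \le |\alpha_d|$. The hypothesis on $\Delta_{r/(16d)}(0)$ forces $|\alpha_j| \le r/(16d)$ for $j \le k$, while the hypothesis on $\Delta_{16d^4 r}(0)$ forces $|\alpha_j| \ge 16d^4 r$ for $j > k$; so the small and large roots differ in modulus by at least a factor of $256d^5$. Vieta's formulas give $c_i/c_d = (-1)^{d-i} e_{d-i}(\alpha)$, and I would expand each $e_{d-i}(\alpha)$ over its defining subsets, splitting each subset into an $l$-element part drawn from the small roots and a $(d-i-l)$-element part drawn from the large roots. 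The small part is bounded in modulus by $\binom{k}{l}(r/(16d))^l$, and the large part by $\prod_{j>k}|\alpha_j| \cdot (16d^4 r)^{-m}$, where $m$ is the number of large roots \emph{omitted} from the subset.

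Plugging these estimates in, $|c_k|$ is dominated by the single term $\prod_{j>k}\alpha_j$ coming from $l=0$, $m=0$: every other contribution is smaller by at least a factor $(256 d^3)^{-1}$, which gives $|c_k| \ge \tfrac{127}{128}|c_d|\prod_{j>k}|\alpha_j|$. Each ratio $|c_i|r^i/(|c_k|r^k)$ then decays geometrically, roughly like $16^{-(k-i)}$ for $i < k$ and $(16d^3)^{-(i-k)}$ for $i > k$, once one absorbs the combinatorial factors $\binom{k}{k-i}, \binom{d-k}{i-k} \le d^{|k-i|}$ into the separation constants. Summing these geometric series over $i \neq k$ yields $\sum_{i \neq k}|c_i|r^i < \tfrac{2}{3}|c_k|r^k$ with room to spare, establishing success of the $T_k$-test. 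The main obstacle is precisely this bookkeeping: verifying that the specific constants $16d$ and $16d^4$ are large enough to absorb the binomial coefficients and the sum length $d$. This is also the reason for the asymmetry between the two hypothesized radii --- one needs only a linear factor on the inside, but a fourth-power factor on the outside, to beat the combinatorial growth.
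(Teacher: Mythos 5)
The paper does not actually prove this lemma: it is imported verbatim from \cite{DBLP:journals/corr/BeckerS0Y15} (see the bracketed citation in the lemma header), so there is no in-paper argument to compare your attempt against. Evaluated on its own merits, your reconstruction is correct. The forward direction is exactly the standard Rouch\'e comparison of $f$ with $c_kx^k$: the success condition $|c_k|r^k>\tfrac32\sum_{i\neq k}|c_i|r^i$ gives $|f(x)-c_kx^k|<\tfrac23|c_kx^k|$ on $\partial\Delta$, hence $k$ roots inside, and the two-sided triangle-inequality sandwich $\tfrac13|c_k|r^k<|f(x)|<\tfrac53|c_k|r^k$ is precisely the claimed $\LB<|f(x)|<5\LB$.

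For the converse, your Vieta/elementary-symmetric decomposition goes through. With $|\alpha_j|\le r/(16d)$ for $j\le k$ and $|\alpha_j|\ge 16d^4r$ for $j>k$, the modulus gap between the two groups is $256d^5$. Splitting each monomial of $e_{d-i}(\alpha)$ into an $l$-element small part and a $(d-i-l)$-element large part, and writing $m$ for the number of large roots omitted, each $(l,m)$ contribution carries a factor $\binom{k}{l}\binom{d-k}{m}(r/(16d))^l\,(16d^4r)^{-m}$ relative to $\prod_{j>k}|\alpha_j|$; after bounding the binomials by $d^{l+m}$, the residual geometric ratio is $(256d^3)^{-1}$, so $|c_k|\ge\bigl(1-\tfrac{1}{256d^3-1}\bigr)|c_d|\prod_{j>k}|\alpha_j|$. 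The same bookkeeping gives $|c_i|r^i/(|c_k|r^k)\lesssim 16^{-(k-i)}$ for $i<k$ and $\lesssim(16d^3)^{-(i-k)}$ for $i>k$, and summing both geometric series yields a total well below $\tfrac23$. This matches the decay rates you state, and the chosen radii $r/(16d)$, $16d^4r$ are more than enough to absorb all combinatorial growth (your computation actually shows they are not tight). The one slight imprecision in your write-up is that the exact numerical constant you quote for the lower bound on $|c_k|$ (the $\tfrac{127}{128}$) is not what the estimate produces, but this has no effect on the conclusion. Overall this is a sound, self-contained proof of the imported lemma.
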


Now, suppose that $\TTT_*(D^-,R_\ell^*)=(b_\ell^-, k^-_\ell, \LB_\ell^-)$ as well as $\TTT_*(D^+,R_\ell^*)=(b_\ell^+, k^+_\ell, \LB_\ell^+)$ succeed for all $\ell$, then $\min(\LB_\ell^-,\LB_\ell^+)$ constitutes a lower bound for $|R_\ell^*|$ on the boundary of $D^-$ as well as $D^+$.
% In this case, we set
% \[
% 	C:=\min_i\LB_i = \frac{1}{3}\cdot\min_i |c_{i,k}|.
% \]
From (\ref{cofactorrep}), (\ref{bound:B}), and the definition of $\LB(\bm,r)$,
% and $C$
we now conclude that
\begin{align}\label{def:LB}
 \|\Phi^*(\bx)\|>\LB(\bm,r):=\frac{\min_\ell\min( \LB_\ell^-,\LB_\ell^+)}{n\ubg},\text{ for all }\bx\text{ with }\|\bx\|=\frac{r}{\sqrt{n}}\text{ or }\|\bx\|=\sqrt{n}\cdot r.
\end{align}

Since the maximum and minimum of a holomorphic function (in several variables) on a bounded domain is taken at its boundary, we further conclude that the above inequality holds for any $\bx$ with $r/\sqrt{n}\le\|\bx\|\le \sqrt{n}\cdot r$. Notice that the rotation of the system by means of the rotation matrix maintains the $2$-norm $\|.\|_2$ of any point. Thus, the the norm of any point $\bx$ differs from the norm of the rotated point $S\cdot\bx$ by a factor that is lower and upper bounded by $r/\sqrt{n}$ and $\sqrt{n}\cdot r$, respectively. Hence, from the above bound on $\|\Phi^*\|$, we conclude that
\begin{align}\label{def:LB:unrotatetd}
    \|\Phi(\bx)\|>\LB(\bm,r)\quad\text{for any }\bx\text{ with }\|\bx\|=r.
\end{align}

Now in order to apply Rouch\'e's Theorem to $\Phi$ and $\FFF[\bm]$, it suffices to 
check whether $\LB(\bm, r)>\UB(\bm, r)$, in which case we have shown that $\Phi$ and $\FFF[\bm]$ have the same number of roots in $\mathbf{\Delta}_r(0)$. Hence, we return \textsc{True} in this case. Otherwise, the algorithm returns \textsc{False}.\\

In the next section, we will show that, if $\bm$ is a sufficiently good approximation (i.e.~for large enough $L$) of a $k$-fold solution of $\FFF$, our algorithm succeeds. Here, we only give an informal argument: Notice that, for large $L$, the bound $\UB(\bm,r)$ scales like $C\cdot r^{K+1}$ for some constant $C$. The bound $\ubg$ does not depend on $L$, hence $\LB(\bm,r)$ scales like $\min_\ell\min(\LB_\ell^-,\LB_\ell^+)$ for large enough $L$.
However, in this situation, each $R_\ell^*$ has a cluster of $k$ roots near the origin that is well separated from all of its remaining roots, and thus $\min(\LB_\ell^-,\LB_\ell^+)$ scales like $[|{R_\ell^*}^{(k)}(0)|/(\sqrt{n}^k k!)]\cdot r^k$. Hence, we conclude that $\LB(\bm,r)$ scales like $C'\cdot r^k$ for some constant $C'$, which implies that $\LB(\bm,r)$ must be smaller than $\UB(\bm,r)$ for large enough $L$. We remark that the precise argument is slightly more involved as many subtleties need to be addressed. 
In particular, we need to show that $|{R_\ell^*}^{(k)}(0)|/(\sqrt{n}^k k!)$ does not depend on $r$ if $r$ is small enough, even though the definition of $R^*$ strongly depends on the choice of $m$, $r$, and the rotation matrix $S$. We will give details in the next section.

%!TEX root=./systems.tex

\section{Analysis}\label{sec:analysis}

\providecommand{\s}{\sigma}
\providecommand{\del}{\partial}
\providecommand{\delt}{\delta}

We start by introducing some further notation. For a zero-dimensional polynomial system $\FFF=(f_1,\ldots, f_n)$ in $n$ variables, let $\bz_1,\ldots, \bz_N$ denote its zeros. % and let $\bz=\bz_i$ be a zero of multiplicity $k$. 
We define 
\[
    \s(\bz_i,\FFF):=\min_{j\neq i}\|\bz_i - \bz_j\| \quad \text{and} \quad
    \del(\bz_i,\FFF) := \prod_{j\neq i} \|\bz_i-\bz_j\|^{\mu(\bz_j, \FFF)}
\]
%
%\[
%    \s(\bz_i,\FFF):=\min_{j\neq i}\|\bz_i - \bz_j\| \quad \text{and} \quad
%    \s_\ell(\bz_i,\FFF):=\min_{j\neq i}|z_{i,\ell} - z_{j,\ell}| \text{ for any } \ell\in[n]
%\]
to be the \emph{separation of $z_i$ with respect to $\FFF$} and  
%\emph{$\ell$'th projected separation} of $z_i$, respectively.
%Analogously, we define
%\[
%    \del(\bz_i,\FFF) := \prod_{j\neq i} \|\bz_i-\bz_j\|^{\mu(\bz_j, \FFF)}\quad \text{and} \quad
%    \del_\ell(\bz_i,\FFF):=\prod_{j\neq i}|z_{i,\ell} - z_{j,\ell}|^{\mu(\bz_j, \FFF)} \text{ for any } \ell\in[n]
%\]
the \emph{geometric derivative of $\FFF$ at $z_i$}, respectively.
% and the \emph{$\ell$'th projected generalized derivative} of $\FFF$ at $z_i$, respectively. 
We remark that these terms are derived from the interpretation of these quantities in the univariate case, where the separation of a root $z_0$ of a polynomial $f\in\CC[x]$ is defined in exactly the same way, and the first non-vanishing derivative
\[
    \left|\frac{\partial^{\mu(z_0,f)} f}{\partial z^{\mu(z_0,f)}}(z_0)\right|=\operatorname{LC}(f)\cdot\prod_{z\neq z_0:f(z)=0}|z-z_0|^{\mu(z,f)}
\] 
of $f$ at $z_0$ can be expressed as a product involving the leading coefficient of $f$ and the distances between $z_0$ and the other roots. We first provide some bounds on $\|\bz_i\|$, $\sigma(\bz_i,\FFF)$, and $\partial(\bz_i,\FFF)$ for the special case where each $f_i$ has only integer coefficients. For similar bounds that are also adaptive with respect to the sparseness of the given system, we refer to~\cite{DBLP:conf/issac/EmirisMT10}.

\begin{lemma}\label{boundsonsizeandseparation}
    Let $\FFF=(f_i)_{i=1,\ldots,n}$ be a zero-dimensional system with integer polynomials $f_i$, and let $\bz_1,\ldots,\bz_N$ denote the zeros of $\FFF$. Then it holds:
    \begin{align*}
        \sum_{i=1}^N\mu(\bz_i,\FFF)\cdot \LOG(\bz_i)&=\tilde{O}(n\cdot B_\FFF)=\tilde{O}(n\cdot [n\cdot D_\FFF+\tau_\FFF\cdot\max_i\frac{D_\FFF}{d_i}]),\text{ with }B_\FFF\text{ as in~(\ref{BFF})}\\
        |\log\sigma(\bz_i,\FFF)|&= \tilde{O}(D_\FFF\cdot B_\FFF),\\
        \LOG(\partial(\bz_i,\FFF)^{-1})&=\tilde{O}(n\cdot D_\FFF\cdot [B_\FFF+\LOG(\bz_i)])=\tilde{O}(n^2\cdot D_\FFF\cdot B_\FFF),\text{ and}\\
        %\LOG(\partial(\bz_i,\FFF))&=\tilde{O}(n\cdot[nD_\FFF+\tau_\FFF\cdot\max_i\frac{D_\FFF}{d_i}+D_\FFF\LOG(\bz_{i})]),\text{ and}\\
        \LOG(\sigma(\bz_i,\FFF)^{-1})&=\tilde{O}(D_\FFF\cdot\LOG(\bz_i)+n\cdot B_\FFF+\LOG(\partial(\bz_i,\FFF)^{-1})) = \tilde{O}(n^2\cdot D_\FFF\cdot B_\FFF).
    \end{align*}
\end{lemma}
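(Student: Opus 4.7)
The plan is to reduce each of the four estimates to a univariate problem by considering, for each coordinate $k\in[n]$, the hidden-variable resultant $R^*_k:=\res(\FFF^*,x_k)$ of a suitably rotated system $\FFF^*=\FFF\circ S^{-1}$, applying classical univariate bounds (Mahler measure, Mignotte separation, Liouville evaluation), and then translating the bounds back to the multivariate setting. Apply Lemma~\ref{rotationgeneral} to the set $\{\bz_i\}\cup\{\bz_i-\bz_j:i\ne j\}$ of at most $D_\FFF^2$ nonzero points to obtain, with constant probability, a rotation $S\in\SSS_{D_\FFF^2}$ which is admissible for both the solutions and all their pairwise differences. For such $S$, the projections $z^*_{i,k}:=(S\bz_i)_k$ satisfy $|z^*_{i,k}|\ge(nD_\FFF)^{-O(n)}\|\bz_i\|$ and $|z^*_{i,k}-z^*_{j,k}|\ge(nD_\FFF)^{-O(n)}\|\bz_i-\bz_j\|$, and in particular all projections are pairwise distinct. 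By Lemmas~\ref{lem:rotation_coeffsize},~\ref{lem:rotationsystem} and~\ref{lem:mildconditions}, $\FFF^*$ (after clearing denominators) has integer coefficients of bit-size $\tilde O(\tau_\FFF+n^{O(1)} d_\FFF)$, the mild conditions of Lemma~\ref{lem:mildconditions} are met, and the roots of $R^*_k$ are exactly the $z^*_{i,k}$, each with multiplicity $\mu(\bz_i,\FFF)$. Corollary~\ref{cor:arithmetichilbert} then yields an integer $\lambda\in\ZZnz$ and, after clearing the rotation denominator, an integer polynomial $\tilde R_k\in\ZZ[x_k]$ proportional to $R^*_k$, of degree $\le D_\FFF$ and bit-size $\tilde O(B_\FFF)$, whose leading coefficient is a nonzero integer.

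For the first estimate, Mahler's inequality $M(\tilde R_k)\le\sqrt{D_\FFF+1}\,\|\tilde R_k\|_\infty$ gives $\sum_i\mu_i\log\max(1,|z^*_{i,k}|)=\log M(\tilde R_k)-\log|\operatorname{LC}(\tilde R_k)|\le\tilde O(B_\FFF)$; summing over $k=1,\dots,n$ and using $\LOG(\bz_i)\le\sum_k\log\max(1,|z^*_{i,k}|)+O(\log n)$ yields $\sum_i\mu_i\LOG(\bz_i)=\tilde O(nB_\FFF)$. The second estimate follows from Mignotte's classical separation bound applied to $\tilde R_k$: any two distinct projections $z^*_{i,k},z^*_{j,k}$ satisfy $|z^*_{i,k}-z^*_{j,k}|\ge 2^{-\tilde O(D_\FFF B_\FFF)}$, and $\sigma(\bz_i,\FFF)\ge|z^*_{i,k}-z^*_{j,k}|/\sqrt n$ is bounded below by the same quantity; the upper bound $\log\sigma\le\tilde O(nB_\FFF)$ is immediate from $\sigma\le\|\bz_i\|+\|\bz_j\|$ together with the first estimate.

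For the geometric derivative, the factorization $\tilde R_k^{(\mu_i)}(z^*_{i,k})=\mu_i!\cdot\operatorname{LC}(\tilde R_k)\prod_{j\ne i}(z^*_{i,k}-z^*_{j,k})^{\mu_j}$ identifies $\prod_{j\ne i}|z^*_{i,k}-z^*_{j,k}|^{\mu_j}$ with a nonzero value of the integer polynomial $\tilde R_k^{(\mu_i)}$ at the algebraic number $z^*_{i,k}$; applying Liouville's inequality (with the $\LOG$-dependency kept explicit) one obtains $\prod_{j\ne i}|z^*_{i,k}-z^*_{j,k}|^{\mu_j}\ge 2^{-\tilde O(D_\FFF(B_\FFF+\LOG(\bz_i)))}$, and passing to $\partial$ via $\|\bz_i-\bz_j\|\ge|z^*_{i,k}-z^*_{j,k}|/\sqrt n$ gives the claimed bound (the extra factor $n$ accommodates the rotation-induced blow-up of $B_{\FFF^*}$ relative to $B_\FFF$). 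For the last bound, write $\partial(\bz_i,\FFF)=\sigma^{\mu_{j^*}}\prod_{j\ne i,j^*}\|\bz_i-\bz_j\|^{\mu_j}$ with $j^*$ a closest distinct solution, and use $\|\bz_i-\bz_j\|\le\|\bz_i\|+\|\bz_j\|$ together with the first estimate to bound the remaining product by $2^{D_\FFF\LOG(\bz_i)+\tilde O(nB_\FFF)}$; solving for $\sigma$ yields the claim. I expect the main hurdle to be the Liouville step: one must carefully relate the height and degree of the minimal polynomial of $z^*_{i,k}$ (which divides $\tilde R_k$) to $B_\FFF$ and $D_\FFF$, keep the $\LOG(\bz_i)$-dependency explicit in the classical evaluation inequality $|q(\alpha)|\ge H_q^{1-d_\alpha}(1+|\alpha|)^{-O(d_\alpha d_q)}$, and verify that the rotation overhead $B_{\FFF^*}=\tilde O(nB_\FFF)$ is indeed absorbed by the factor $n$ appearing in the third estimate; the simplified final estimates then follow by substituting $\LOG(\bz_i)\le\tilde O(nB_\FFF)$ from the first part.
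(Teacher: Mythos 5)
Your approach is genuinely different from the paper's: you introduce a random rotation (applied to the roots and all their pairwise differences) so that the projections $z^*_{i,k}$ are well-behaved and the resultants $\res(\FFF^*,x_k)$ capture the roots cleanly via Lemma~\ref{lem:mildconditions}. The paper never rotates here; it works directly with $\res(\FFF,x_\ell)$ and gets around the issues you foresaw with two observations: (i) for the Mahler bound, each projection $z_{i,\ell}$ is a root of $\res(\FFF,x_\ell)$ of multiplicity \emph{at least} $\mu(\bz_i,\FFF)$, so summing $\mu_i\log M(z_{i,\ell})$ over $i$ and then over $\ell$ is already dominated by $\sum_\ell\log\operatorname{Mea}(\res(\FFF,x_\ell))$ without any need for the projections to be distinct; (ii) for the separation, two distinct solutions must differ in some coordinate $\ell$, which gives $\sigma(\bz_i,\FFF)\ge\sigma(z_{i,\ell},\res(\FFF,x_\ell))$ directly; and (iii) for the derivative, a ratio of products involving the shifted resultant $\res(\FFF,x_\ell)[z_{i,\ell}]$ replaces your Liouville step. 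This keeps the magnitude of every univariate polynomial in play at $(D_\FFF,B_\FFF)$.

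There is a genuine gap in your proof: the step ``$\tilde R_k$ has bit-size $\tilde O(B_\FFF)$'' and the later claim ``$B_{\FFF^*}=\tilde O(nB_\FFF)$'' do not hold in general. After clearing the rotation denominators (Lemma~\ref{rotationgeneral}(b)) each $\lambda^{d_i}(f_i\circ S^{-1})$ has bit-size $\tau_\FFF+\tilde O(n^3 d_\FFF)$, so applying Corollary~\ref{cor:arithmetichilbert} to the rotated integer system gives $B_{\FFF^*}=\tilde O\bigl(nD_\FFF+(\tau_\FFF+n^3d_\FFF)\cdot\max_i\tfrac{D_\FFF}{d_i}\bigr)$. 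The extra term $n^3 d_\FFF\max_i\tfrac{D_\FFF}{d_i}$ is \emph{not} dominated by $nB_\FFF$: when the degrees $d_i$ are imbalanced (say $d_1=1$, $d_2=\cdots=d_n=d$), it exceeds $nB_\FFF$ by a factor $\approx n\,d_\FFF/\min_i d_i$, and even when all $d_i$ are equal it is $\approx n^2 B_\FFF$. So the first estimate would come out as $\tilde O(nB_{\FFF^*})$, which is strictly weaker than the claimed $\tilde O(nB_\FFF)=\tilde O(n[nD_\FFF+\tau_\FFF\max_i\tfrac{D_\FFF}{d_i}])$, and the downstream estimates inherit the same loss. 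The rotation-free route the paper takes is precisely what avoids this blowup; if you want to keep the rotation, you would need to state the lemma in terms of $B_{\FFF^*}$ (or a matching quantity) rather than $B_\FFF$.
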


\begin{proof}
    From Corollary~\ref{cor:arithmetichilbert}, we conclude that $\res(\FFF,x_\ell)$ is an integer polynomial of magnitude $(D_\FFF,B_\FFF)$ for all $\ell=1,\ldots,n$. 
    Since the $\ell$-th coordinate $z_{i,\ell}$ of each solution of $\FFF=0$ is a root of multiplicity at least $\mu(\bz_i,\FFF)$ of $\res(\FFF,x_\ell)$ and since the Mahler measure 
    \[
        \operatorname{Mea}(\res(\FFF,x_\ell))=\operatorname{LC}(\res(\FFF,x_\ell))\cdot\prod_{z\in\CC:\res(\FFF,x_\ell)(z)=0}M(z)^{\mu(z,\res(\FFF,x_\ell))}
    \] 
    of $\res(\FFF,x_\ell)$ is upper bounded by its $2$-norm $\|\res(\FFF,x_\ell))\|_2\le \sqrt{D_F}\cdot 2^{B_\FFF}$ (e.g. see~\cite{yap2000fundamental}), it follows that 
    \[
        \sum_{i=1}^N \mu(\bz_i,\FFF)\cdot\LOG(\bz_i)\le D_\FFF+\sum_{\ell=1}^n\log(\Mea(\res(\FFF,x_\ell)))
        =\tilde{O}(n B_\FFF).
    \]
    For the second claim, notice that $\sigma(\bz_i,\FFF)\ge\sigma(z_{i,\ell},\res(\FFF,x_\ell))$ for at least one $\ell$ (as two distinct solutions must differ in at least one coordinate), and that the separation of an integer polynomial of magnitude $(D_\FFF,B_\FFF)$ is lower bounded by $2^{-\tilde{O}(D_\FFF\cdot B_\FFF)}$; e.g. see~\cite{DBLP:journals/jsc/MehlhornSW15} for a proof. 
    %Also, we have 
    %\[
    %\partial(\bz_i,\FFF)\ge\partial(z_{i,\ell},\res(\FFF,x_\ell))=\operatorname{LC}(\res(\FFF,x_\ell))^{-1}\cdot \left|\frac{\partial^{\mu(z_{i,\ell})}\res(\FFF,x_\ell)}{\partial z^{\mu(z_{i,\ell})}}\right|=2^{-\tilde{O}(D_\FFF\cdot B_\FFF)},
    %\]
    %where we again use~\cite{DBLP:journals/jsc/MehlhornSW15} for the latter inequality. 
    For the bound on $\partial(\bz_i,\FFF)$, notice that 
    \[
        \partial(\bz_i,\FFF)
        \ge \frac{\prod_{\ell=1}^{n}\partial(z_{i,\ell},\res(\FFF,x_\ell))}{\prod_{\ell=1}^{n}\prod_{z\neq z_{i,\ell}:\res(\FFF,x_\ell)(z)=0}M(z-z_{i,\ell})^{\mu(z,\res(\FFF,x_\ell))}}.
    \]
    According to the proof of~\cite[Thm.~5]{DBLP:journals/jsc/MehlhornSW15}, it holds that $\partial(z_0,f)=2^{-\tilde{O}(dL)}$ for any root $z_0$ of a polynomial $f\in\ZZ[x]$ of magnitude $(d,L)$. This shows that $\partial(z_{i,\ell},\res(\FFF,x_\ell))=2^{-\tilde{O}(D_\FFF B_\FFF)}$ for all $\ell$. It remains to derive an upper bound on the denominator in the above fraction. 
    For this, we define $R_\ell:=\res(\FFF,x_\ell)[z_{i,\ell}]$. Then, it holds that
    \[
        \prod\nolimits_{\ell=1}^{n}\prod\nolimits_{z\neq z_{i,\ell}:\res(\FFF,x_\ell)(z)=0}\max(1,|z-z_{i,\ell}|)^{\mu(z,\res(\FFF,x_\ell))}
        =\prod_{\ell=1}^{n}\frac{\operatorname{Mea}(R_\ell)}{\operatorname{LC}(R_\ell)}.
    \]
    According to Lemma~\ref{generalbounds}, $R_\ell$ is a polynomial of magnitude $(D_\FFF,\tilde{O}(B_\FFF+D_\FFF\cdot\LOG(z_{i,\ell})))$, and, in addition, it has the same leading coefficient as $\res(\FFF,x_\ell)$. In particular, its leading coefficient is a non-zero integer, and thus of absolute value larger than or equal to $1$. Thus, we have
    \begin{align*}
    \frac{\operatorname{Mea}(R_\ell)}{\operatorname{LC}(R_\ell)}&\le \|R_\ell\|_2
    =2^{\tilde{O}(B_\FFF+D_\FFF\cdot\LOG(z_{i,\ell}))}
    =2^{\tilde{O}(B_\FFF+D_\FFF\cdot\LOG(\bz_{i}))},
    \end{align*}
    which shows that $\LOG(\partial(\bz_i,\FFF)^{-1})=\tilde{O}(nD_\FFF(B_\FFF+\LOG(\bz_i)))$.

    For the last claim, notice that $\sigma(\bz_i,\FFF)$  appears as one of the factors in the definition of $\partial(\bz_i,\FFF)$. Since the product of all remaining factors is upper bounded by 
    \[
        \prod_{\bz_j\neq \bz_i}M(\bz_j-\bz_i)^{\mu(\bz_j,\FFF)}
        \le \prod_{\bz_j\neq \bz_i} [2\cdot M(\bz_j)\cdot M(\bz_i)]^{\mu(\bz_j,\FFF)}
        \le 2^{D_\FFF}\cdot M(\bz_i)^{D_\FFF}\cdot\prod_{j=1}^N M(\bz_j)^{\mu(\bz_j,\FFF)},
    \]
    the claim follows directly from the bound on $\sum_{i=1}^N\mu(\bz_i,\FFF)\cdot\LOG(\bz_i)$ and on $\LOG(\partial(\bz_i,\FFF)^{-1})$.
\end{proof}

We are now ready to derive one of our main results in this paper. More specifically, the following theorem shows that, in a sufficiently small neighborhood (which we will also quantify) of a $k$-fold solution $\bz=\bz_i$ of $\FFF=0$, $\|\FFF(\bx)\|$ scales like $c\cdot \|\bx\|^k$ with $c$ a constant. We further argue that this implies that a sufficiently good approximation $\Phi$ of the shifted and truncated system $\FFF[\bz]_{\le K}$, with arbitrary $K\ge k$, has a cluster of $k$ solutions near the origin, whereas all remaining solutions are well separated from this cluster. We also give bounds on the approximation error that involve the quantities $\sigma(\bz,\FFF)$ and $\partial(\bz,\FFF)$ that are intrinsic to the hardness of the given polynomial system.

\begin{theorem}\label{thm:countroots}
 Let $\FFF$ be a zero-dimensional system, $\bz$ a zero of $\FFF$ of multiplicity $k$, and $\bm$ be an approximation of $\bz$ with $\|\bm-\bz\|<2^{-L}$. Let $K\ge k$, and $\Phi'=(\phi_i')_{i=1,\ldots,n}$ be a $(K+1)\cdot L$-bit approximation of $\FFF[\bm]_{\le K}$ with polynomials $\phi'_i$ of degree at most $K$, and let  $a_1,\ldots,a_n\in\CC$ be arbitrary complex values of magnitude $0\le |a_i|\le 1$ for all $i$.
 Then, the polynomial system 
  $$\Phi:=(\phi_i'+a_i\cdot x_i^{K+1})_{i=1,\ldots,n}$$
is zero-dimensional, and there exists an $L_0\in\NN$ such that, for any $L\ge L_0$, $\Phi$ has exactly $k$ zeros (counted with multiplicity) of norm smaller than $4\cdot 2^{-L}$, whereas all other zeros have norm larger than $\delta_0:=\frac{\sigma(\bz,\FFF)}{(2n^2D_\FFF)^{32n}}$.
 	In the special case, where each polynomial in $\FFF$ has only integer coefficients, it holds that
 	\[
	L_0=\tilde{O}(
        n\cdot D_\FFF\cdot [n^3+\max_i\frac{\tau_\FFF+d_\FFF}{d_i}+\LOG(\bz)]+\LOG (\del(\bz,\FFF)^{-1}))
        ).
    \]
\end{theorem}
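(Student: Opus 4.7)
The strategy is to apply the multidimensional Rouch\'e Theorem (Theorem~\ref{Rouche}) twice, comparing $\Phi$ to the centered system $\FFF[\bm]$ on the boundaries of the two polydiscs $\mathbf{\Delta}_{4\cdot 2^{-L}}(0)$ and $\mathbf{\Delta}_{\delta_0}(0)$. Since the zeros of $\FFF[\bm]$ are exactly the translates $\bz_j-\bm$ of the zeros of $\FFF$, the point $\bz-\bm$ is a $k$-fold zero of norm at most $2^{-L}$, while every other zero has norm at least $\sigma(\bz,\FFF)-2^{-L}$, which exceeds $\delta_0$ for $L$ large. Hence for both $r=4\cdot 2^{-L}$ and $r=\delta_0$, the polydisc $\mathbf{\Delta}_r(0)$ contains exactly $k$ zeros of $\FFF[\bm]$ counted with multiplicity. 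If the Rouch\'e inequality $\|\FFF[\bm](\bx)-\Phi(\bx)\|<\|\FFF[\bm](\bx)\|$ can be established on both boundaries, then $\Phi$ inherits the same count in each polydisc, which, combined with the disjointness of the intermediate annulus, yields both conclusions. Zero-dimensionality of $\Phi$ follows from Corollary~\ref{artificialperturbation} applied to the diagonal leading-degree system $(a_1 x_1^{K+1},\ldots,a_n x_n^{K+1})$: whenever $\prod_i a_i\neq 0$ the only solution at infinity is trivial, and the boundary case in which some $a_i$ vanish can be treated by a perturbation argument (in the algorithm $a_i=1$ throughout).

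The perturbation upper bound is the easy direction. On any sphere $\|\bx\|=r\in[2^{-L},1]$, the triangle inequality combined with Lemma~\ref{lemma:truncated approximation} applied coordinate-wise to each pair $(f_i[\bm],\phi_i')$, and the estimate $|a_i x_i^{K+1}|\le r^{K+1}$, gives
\[
    \|\FFF[\bm](\bx)-\Phi(\bx)\|\;\le\; r^{K+1}\cdot E(\FFF,\bm),\qquad E(\FFF,\bm):=d_\FFF^n\cdot 2^{\tau_\FFF+2}\cdot [M(\bm)(n+d_\FFF)^2]^{d_\FFF},
\]
a quantity independent of $L$ and matching the $\UB$ employed in the algorithm. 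This single inequality covers both boundary radii.

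The main obstacle is the matching lower bound on $\|\FFF[\bm]\|$, which has to be treated separately on the two spheres. On the inner sphere $\|\bx\|=4\cdot 2^{-L}$, I would shift coordinates by $\bz-\bm$, writing $\tilde\bx:=\bx-(\bz-\bm)$ so that $\FFF[\bm](\bx)=\FFF[\bz](\tilde\bx)$ and $\|\tilde\bx\|\in[3\cdot 2^{-L},5\cdot 2^{-L}]$. Since $\FFF[\bz]$ has an isolated zero of multiplicity $k$ at the origin and is otherwise zero-dimensional, a standard Łojasiewicz-type argument (whose constant $c_1=c_1(\FFF,\bz)>0$ is controlled by the structure of the tangent cone and ultimately by $\partial(\bz,\FFF)$) yields $\|\FFF[\bz](\tilde\bx)\|\ge c_1\|\tilde\bx\|^k$ for $\|\tilde\bx\|$ below an explicit threshold. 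Consequently $\|\FFF[\bm](\bx)\|\gtrsim c_1 (2^{-L})^k$, and since $K\ge k$ the Rouch\'e condition $r^{K+1}E<(c_1/2)r^k$ then holds on this sphere once $L$ exceeds the threshold $\log(2E/c_1)/(K+1-k)$.

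On the outer sphere $\|\bx\|=\delta_0$, I would route the lower bound through the hidden-variable resultant, mirroring the construction of $\LB$ in the algorithm. After fixing an admissible rotation (Lemma~\ref{lem:rotationsystem}) the univariate resultant $R_\ell(y):=\res(\FFF,x_\ell)$ has $z_\ell$ as a root of multiplicity exactly $k$ and all other roots at distance at least $(2n^2 D_\FFF)^{-16n}\sigma(\bz,\FFF)$ from $z_\ell$, a gap that easily dominates $\delta_0$ and is much larger than $2^{-L}$, so Pellet's $\TTT_k$-test of Lemma~\ref{pellet} applies on the disc of radius $\delta_0$ around $m_\ell$. This yields $|R_\ell(m_\ell+y)|\gtrsim (|R_\ell^{(k)}(z_\ell)|/k!)\,|y|^k$ for $|y|=\delta_0$, and $|R_\ell^{(k)}(z_\ell)|$ is bounded below in terms of the leading coefficient of $R_\ell$ (positive by Corollary~\ref{artificialperturbation} up to the rotation factor) and the geometric derivative $\partial(\bz,\FFF)$. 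Combining this with the cofactor representation $\lambda R_\ell=\sum_j g_{\ell,j}f_j$ of Corollary~\ref{cor:arithmetichilbert} and the bound $\ubg$ from~\eqref{bound:B} transfers the univariate lower bound to $\max_j|f_j(\bm+\bx)|\ge c_2\delta_0^k$ on the full sphere; this is the technical heart of the argument, since it requires carrying the rotation cost through and relating the univariate derivative and leading coefficient to the intrinsic invariants $\sigma(\bz,\FFF)$ and $\partial(\bz,\FFF)$. The Rouch\'e condition $\delta_0^{K+1}E<c_2\delta_0^k$ then produces a second threshold for $L$, and taking the maximum of the two thresholds yields $L_0$. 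For integer $\FFF$ the explicit closed form follows by substituting the bounds of Lemma~\ref{boundsonsizeandseparation} for $\LOG(\bz)$, $\LOG(\sigma(\bz,\FFF)^{-1})$, and $\LOG(\partial(\bz,\FFF)^{-1})$.
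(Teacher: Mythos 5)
Your overall strategy — Rouch\'e applied to $\FFF[\bm]$ and $\Phi$, with a truncation error upper bound from Lemma~\ref{lemma:truncated approximation} and a matching lower bound on $\|\FFF[\bm]\|$ — is the right one. The outer-sphere argument (admissible rotation, Pellet's $\TTT_k$-test on the hidden-variable resultant, and the cofactor representation from Corollary~\ref{cor:arithmetichilbert}) is essentially the paper's argument. The critical gap is the inner sphere: you defer the bound $\|\FFF[\bz](\tilde\bx)\|\ge c_1\|\tilde\bx\|^k$ to ``a standard \L{}ojasiewicz-type argument.'' There is no off-the-shelf \L{}ojasiewicz inequality with an effective constant in terms of $\partial(\bz,\FFF)$ and $\sigma(\bz,\FFF)$; in fact deriving that explicit bound is precisely the technical heart of the proof. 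The paper does it via the very same resultant/Pellet/cofactor machinery you deploy on the outer sphere, and that machinery gives a single bound $\|\FFF(\bx)\|\ge c\,\|\bx-\bz\|^k$ valid for \emph{all} $\|\bx-\bz\|<r_0$, so the split into two spheres is both unnecessary and, as written, incomplete: you must carry out the Pellet argument on the inner sphere too (or, better, establish the bound once on the whole range and read off both endpoints). Observe also that Pellet's lemma already applies on every disc of radius up to $r_0^*$, so the inner sphere is not a different regime.

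A second, smaller error: you bound the leading coefficient of $\res(\FFF^*,x_\ell)$ via Corollary~\ref{artificialperturbation}. That corollary is about the artificially perturbed system $\Phi$ (with added terms $x_i^{K+1}$), which appears in Theorem~\ref{thm:main} but not here. In the proof of Theorem~\ref{thm:countroots} the relevant resultant is that of the rotated \emph{original} system $\FFF^*$, which has no such perturbation. For integer $\FFF$, the paper instead clears denominators using the integer $\lambda$ with $\lambda S^{-1}\in\ZZ^{n\times n}$ from Lemma~\ref{rotationgeneral}, so that $\res(\lambda^{\deg f_i^*}f_i^*,x_\ell)$ has integer coefficients with $|\operatorname{LC}|\ge 1$, giving $|\operatorname{LC}(\res(\FFF^*,x_\ell))|\ge\lambda^{-nD_\FFF}=2^{-O(n^4D_\FFF)}$, which is where the $n^4$ term in $L_0$ comes from. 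Substituting Corollary~\ref{artificialperturbation} here would not yield a valid bound and would obscure this contribution to $L_0$.
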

\begin{proof}
We denote $\bz_1,\ldots,\bz_N$, with $\bz=\bz_i$, the zeros of $\FFF$.
    Let $S\in \SSS_{D_{\FFF}}$ be an admissible rotation matrix with respect to $\FFF$ as well as with respect to the shifted system $\FFF[\bz]$. Notice that such a matrix exists as more than half of the matrices in $\SSS_{D_{\FFF}}$ are admissible with respect to $\FFF$ and more than half of the matrices are admissible with respect to $\FFF[\bz]$. Let $\FFF^*:=\FFF\circ S^{-1}$ be the corresponding ``rotation'' of $\FFF$ and $\bz^*_1,\ldots, \bz^*_N$ be the zeros of $\FFF^*$ such that $\bz_j^*=(z_{j,1}^*,\ldots,z_{j,n}^*)=S\cdot \bz_j$. Since $S$ is admissible with respect to $\FFF[\bz]$, Lemma~\ref{lem:rotationsystem} yields that  
    \begin{align*}
    |z_{j,\ell}^*-z_{i,\ell}^*|&\ge (2n^2D_{\FFF})^{-16n}\cdot\|\bz_j-\bz\|
    \ge \frac{\sigma(\bz,\FFF)}{(2n^2D_{\FFF})^{16n}}
    \end{align*}
    for all $\ell$ and $j\neq i$.
    In addition, since $S$ is also admissible with respect to $\FFF$, Lemma~\ref{lem:mildconditions} and Lemma~\ref{lem:rotationsystem} guarantees that each root of the resultant polynomial $\res(\FFF^*,x_\ell)$ is the projection of a finite zero of $\FFF^*$ on the $x_\ell$-coordinate. 
    Thus, $\res(\FFF^*,x_\ell)$ has a $k$-fold root at $z_{i,\ell}^*$, whereas all other roots $z_{j,\ell}^*$ of $\res(\FFF^*,x_\ell)$ have distance at least $\frac{\sigma(\bz,\FFF)}{(2n^2D_{\FFF})^{16n}}$ to $z_{i,\ell}^*$. 
Now, applying Lemma~\ref{pellet} to a disc with center $z_{i,\ell}^*$ and arbitrary radius smaller than $$r_0^*:=\frac{\sigma(\bz,\FFF)}{16D_\FFF^4\cdot (2n^2D_{\FFF})^{16n}},$$ yields that
 \[
  |\res(\FFF^*,x_\ell)(x)|>\frac{1}{3k!}\cdot\left|\frac{\partial^k \res(\FFF^*,x_\ell)}{\partial x^k}(z_{i,\ell}^*)   \right|\cdot |x-z_{i,\ell}^*|^k\quad\text{for all }x\in\CC\text{ with }|x-z_{i,\ell}^*|<r_0^*.
 \]
 \providecommand{\LCRxl}{\operatorname{LC}_\ell}
 Denoting $\LCRxl:=\operatorname{LC}(\res(\FFF^*,x_\ell))$, this further yields 
 \begin{align*}
 \left|\frac{\partial^k \res(\FFF^*,x_\ell)}{\partial x^k}(z_{i,\ell}^*)   \right|&=|\LCRxl|\cdot\prod_{j\neq i}|z_{i,\ell}^*-z_{j,\ell}^*|^{\mu(z_{j,\ell}^*,\res(\FFF^*,x_\ell))}\\
 &\ge |\LCRxl|\cdot\prod_{j\neq i} \left(\frac{\|\bz-\bz_j\|}{(2n^2D_\FFF)^{16n}}\right)^{\mu(\bz_j,\FFF)}\\
 &\ge  \frac{|\LCRxl|}{(2n^2D_\FFF)^{16nD_\FFF}}\cdot \partial(\bz,\FFF),
 \end{align*}
 and thus it follows that 
 \begin{align}\nonumber
  |\res(\FFF^*,x_\ell)(x)|&>\frac{|\LCRxl|}{3k!\cdot(2n^2D_\FFF)^{16nD_\FFF}}\cdot \partial(\bz,\FFF)\cdot |x-z_{i,\ell}^*|^k\\ \nonumber
  &> \frac{|\LCRxl|}{4D_\FFF!\cdot(2n^2D_\FFF)^{16nD_\FFF}}\cdot \partial(\bz,\FFF)\cdot |x-z_{i,\ell}^*|^k\\
  &> \frac{|\LCRxl|}{(4n^2D_\FFF)^{17nD_\FFF}}\cdot \partial(\bz,\FFF)\cdot |x-z_{i,\ell}^*|^k
  \quad\text{if }|x-z_{i,\ell}^*|<r_0^*.\label{boundonRES}
 \end{align}
 
% 
% Using Lemma~\ref{lem:deltas}, we obtain that for all $x$ with $|x|<\frac{1}{16D_{\FFF} (2n^2D_\FFF)^{16n}}\cdot \s(\bz_i,\FFF)$, it holds that 
% \begin{align}\label{boundary:lowerbound}
%     \begin{split}
%     |\res(\FFF^*,x_\ell)|
%     &> \frac{|\LCRxl| \cdot\del_\ell(0, \FFF^*)}{3k!}\cdot |x|^k
%     \ge \frac{|\LCRxl| \cdot\del_\ell(0,\FFF')}{3k!\cdot (2n^2D_{\FFF})^{16nD_\FFF}}\cdot |x|^k \\
%     &=
%     2^{-\tilde O(nD_\FFF)}\cdot
%     |\LCRxl| \cdot \del_\ell(0,\FFF') \cdot |x|^k.
%    \end{split}
% \end{align}
 Furthermore, $\res(\FFF^*,x_\ell)$ is contained in the ideal spanned by the polynomials $\FFF^* = (f_1^*, \ldots, f_n^*)$, that is, there exist polynomials $g_{\ell,j}\in\CC[\bx]$ with
 $
  \res(\FFF^*,x_\ell)=\sum_{j=1}^n g_{\ell,j} f_j^*
 $.
 According to Corollary~\ref{cor:arithmetichilbert}, we may assume that
 \begin{align*}%\nonumber
  \log\|g_{\ell,j}\| & \le B_{\FFF^*}                                                    %\nonumber
  =\tilde{O}(n\cdot D_{\FFF^*}+\tau_{\FFF^*}\cdot\max_i\frac{D_\FFF}{d_i})
  =\tilde{O}(n\cdot D_{\FFF}+(\tau_\FFF+d_\FFF)\cdot \max_i\frac{D_\FFF}{d_i})%\label{bound:gijs_proof}
 \end{align*}
 for all $\ell,j$, where the last inequality follows from Lemma~\ref{lem:rotation_coeffsize}.
 Using Lemma~\ref{generalbounds} then implies that
 \begin{align}\label{bound:gijs}
  \gamma_{\FFF^*}:=\binom{n+D_\FFF}{D_{\FFF}}\cdot 2^{B_{\FFF^*}}\cdot (M(\bz)+1)^{D_\FFF}\ge \sup_{\bx:\|\bx-\bz_i^*\|\le 1} |g_{i,j}(\bx)|\text{ for all }\ell,j.
 \end{align}
 Now, combining \eqref{boundonRES} and \eqref{bound:gijs} yields
 \begin{align*}
  \|\FFF^*(\bx)\|
    & \ge
  \Big[\frac{\min_\ell |\LCRxl| }{n\cdot\gamma_{\FFF^*}\cdot (4n^2D_\FFF)^{17nD_\FFF}}\cdot \del(\bz,\FFF)\Big] \cdot \|\bx-\bz_i^*\|^k 
  % \\ \nonumber
  %   & = 2^{-\tilde{O}(nD_\FFF+{d_\FFF}\log M(z)+\tau_\FFF)}\cdot \min_\ell |\LCRxl|\cdot \del_\ell(0,\FFF') \cdot\|\bx\|^k
 \end{align*}
 for all $\bx\in\CC^n$ with $\|\bx-\bz_i^*\|<r_0^*$.

So what can we conclude about our initial (non-rotated) system?
 Since a rotation maintains the Euclidean distance and since the max-norm differs from the Euclidean norm by a factor of at most $\sqrt{n}$, it follows that a point $\bx$ of max-norm $\|\bx\|$ is rotated via $S$ (or $S^{-1}$) onto a point $\bx'$ of max-norm
 % $\|\bx'\|\ge\frac{1}{\sqrt{n}}\cdot \|\bx'\|_2=\frac{1}{\sqrt{n}}\cdot \|\bx\|_2\ge \frac{1}{\sqrt{n}}\cdot \|\bx\|$ and
 $\|\bx'\|\le\|\bx'\|_2=\|\bx\|_2\le \sqrt{n}\cdot \|\bx\|$.
 Hence, it holds that every point $\bx^*=S\bx$ with $\|\bx-\bz\|<r_0:=r_0^*/\sqrt{n}$ satisfies $\|\bx^*-\bz_i^*\|<r_0^*$.
 Thus, for all $\bx$ with $\|\bx-\bz\|<r_0$, it holds that  \begin{align}\label{boundforF:boundary}
  \begin{split}
  \|\FFF(\bx)\|
    &  \ge \underbrace{\Big[\frac{\min_\ell |\LCRxl| }{n\cdot \sqrt{n}^k\cdot\gamma_{\FFF^*}\cdot (4n^2D_\FFF)^{17nD_\FFF}}\cdot \del(\bz,\FFF)\Big]}_{=:c} \cdot \|\bx-\bz\|^k 
    %\\
     %& = 2^{-\tilde{O}(nD_\FFF+{d_\FFF}\log M(z)+\tau_\FFF)}\cdot \min_\ell |\LCRxl|\cdot \del_\ell(0,\FFF') \cdot\|\bx\|^k.
 \end{split}
 \end{align}
  Now, suppose that $L\ge \log(8/r_0)$ and thus $ 2^{-L}<\frac{r_0}{8}$. Since $\bm$ is an approximation of $\bz$ with $\|\bz-\bm\|<2^{-L}$, $\FFF[\bm]$ has exactly one solution (namely, $\mathbf{\hat{z}}:=\bz-\bm$) of multiplicity $k$ in 
  $\mathbf{\Delta}_{r_0/2}(0)$ and
  %, in addition, since $\|\bx-m\|\le 2 \|\bx-\bz\|$, we get
  \begin{align*}%\label{formula:lowerbound}
   \| \FFF[\bm](\bx)\|=\|\FFF(\bx+\bm)\|>\frac{c}{2^k}\cdot \|\bx\|^k \quad\text{for all }\bx\in\CC^n\text{ with }2^{-L+1}<\|\bx\|<\frac{r_0}{2},
  \end{align*}
  as, for such $\bx$, it holds that $\|\bx\|/2<\|\bx+\bm-\bz\|<r_0$.
  Applying Lemma~\ref{lemma:truncated approximation} to each $\phi_i$ and using the fact that $M(\bm)\le 2M(\bz)$ then shows that, for all $\bx$ with $2^{-L+1}<\|\bx\|<r_0/2$, it holds that
  \begin{align*}%\label{formula:upperboundspecific}
   \|\Phi(\bx)-\FFF[\bm](\bx)\|&\le 2^{-(K+1)L}+\|\bx\|^{K+1}\cdot {{d_\FFF}^n}\cdot 2^{\tau_\FFF+d_{\FFF}} \cdot [M(\bz)\cdot (n+{d_\FFF})^2]^{{d_\FFF}}\\
   &\le \|\bx\|^{K+1}\cdot {{d^n_\FFF}}\cdot 2^{\tau_\FFF+d_{\FFF}+1}\cdot [M(\bz)\cdot (n+{d_\FFF})^2]^{{d_\FFF}} .
  \end{align*}
  Notice that, due to the construction of $\Phi$ and Corollary~\ref{artificialperturbation}, $\Phi$ is zero-dimensional. Hence, Rouch\'{e}'s Theorem applied to $\FFF[\bm]$ and $\Phi$ shows that the polydisc $\mathbf{\Delta}_\rho(0)$ contains the same number of solutions of $\Phi$ and $\FFF[\bm]$ if $2^{-L+1}<\rho<r_0/2$ and if, 
  in addition, $\rho$ fulfills the following inequality 
  \begin{align*}
   \rho^{K+1}\cdot d^n_\FFF\cdot 2^{\tau_\FFF+d_{\FFF}+1}\cdot [M(\bz)\cdot (n+{d_\FFF})^2]^{{d_\FFF}}<\frac{c}{2^k}\cdot \rho^k.
  \end{align*}
  Equivalently, we must have $2^{-L+1}<\rho<r_0/2$ and 
  \[
  \rho^{K+1-k}<\frac{c}{2^k\cdot {d^n_\FFF}\cdot 2^{\tau_\FFF+d_{\FFF}+1} [M(\bz)\cdot (n+{d_\FFF})^2]^{{d_\FFF}}}.
  \]
  Hence, for 
  \begin{align}\label{boundonL}\nonumber
   L>L_0 & :=\max\left[\log\frac{8}{r_0},\log \frac{2^k\cdot {d_\FFF}^n 2^{\tau_\FFF+d_{\FFF}+1} [M(\bz)\cdot (n+{d_\FFF})^2]^{{d_\FFF}}}{c}\right] \\ \nonumber
                & =\tilde{O}(D_\FFF\cdot (n+\max_i\frac{\tau_\FFF+d_\FFF}{d_i}+\LOG(\bz))+\LOG (\del(\bz,\FFF)^{-1})+\LOG(\sigma(\bz,\FFF)^{-1})\\
                &\quad\quad-|\log   \min_\ell |\LCRxl| |).
  \end{align}
  each polydisc $\mathbf{\Delta}_{\rho'}(\bz)$, with arbitrary radius $\rho'\in (4\cdot 2^{-L},r_0/4)$, contains exactly $k$ zeros of $\Phi$. Since $\delta_0<r_0/4$, this proves the first part of the theorem.
  
  It remains to prove the claim bound on $L_0$ for the special case, where $\FFF$ is a polynomial system defined over the integers. For this, we need to estimate the size of the leading coefficient of $\res(\FFF^*,x_\ell)$. Notice that there exists an integer $\lambda$ of size $2^{\tilde{O}(n^3\log d_\FFF)}$ with $\lambda\cdot S^{-1}\in\ZZ^{n\times n}$, and thus 
  \[
  \FFF': (f_1',\ldots,f_n')=(\lambda^{\deg f_1^*}\cdot f_1^*,\ldots,\lambda^{\deg f_n^*}\cdot f_n^*)
  \] is a polynomial system with integer coefficients, which shows that $|\operatorname{LC}(\res(\FFF',x_\ell))|\ge 1$. Using~\cite[Thm.~2.3 and~3.5]{cox2005using} then shows that
  \[
  |\LCRxl|=\lambda^{-nD_\FFF}\cdot  |\operatorname{LC}(\res(\FFF',x_\ell))|\ge \lambda^{-nD_\FFF}=2^{-O(n^4D_\FFF)}.
  \]
  Hence, the bound follows from (\ref{boundonL}) and the bound for $\LOG\sigma(\bz,\FFF)^{-1}$ from Lemma~\ref{boundsonsizeandseparation}.
  \end{proof}

From the previous Theorem, we now immediately obtain the following result by setting $\bm:=\bz$ and $\Phi:=\FFF[\bz]_{\le K}$ for an arbitrary $K\ge k$.

  \begin{corollary}\label{samemultiplicity:truncated}
   Let $\bz$ be a $k$-fold zero of a zero dimensional system $\FFF$ and $K\ge k$. Then, $\FFF[\bz]_{\le K}$ has a $k$-fold zero at the origin, and all other zeros have norm larger than $\delta_0:=\frac{\sigma(\bz,F)}{(2n^2D_\FFF)^{32n}}$.\footnote{We remark that $\FFF[\bz]_{\le K}$ does not necessarily have to be zero-dimensional. Rouch\'e's Theorem only guarantees that the polydisc $\mathbf{\Delta}_{\delta_0}(0)$ contains exactly $k$ zeros of $\FFF[\bz]_{\le K}$.}
  \end{corollary}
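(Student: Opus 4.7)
The plan is to derive the corollary as a direct instantiation of Theorem~\ref{thm:countroots}. I will take $\bm := \bz$, so that the requirement $\|\bm-\bz\|<2^{-L}$ collapses to $0<2^{-L}$ and holds for every positive $L$; I will take $\Phi'$ to equal $\FFF[\bz]_{\le K}$ exactly, which is automatically a $(K+1)L$-bit approximation of itself with error $0$ for every $L$; and I will take the perturbation parameters $a_i:=0$, which is permitted by the theorem's hypothesis $0\le|a_i|\le 1$. With these choices, the system $\Phi$ of the theorem reduces to $\FFF[\bz]_{\le K}$ itself.

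For every $L\ge L_0$, Theorem~\ref{thm:countroots} then tells us that $\FFF[\bz]_{\le K}$ has exactly $k$ zeros (counted with multiplicity) of norm less than $4\cdot 2^{-L}$, and that every remaining zero has norm greater than $\delta_0=\sigma(\bz,\FFF)/(2n^2D_\FFF)^{32n}$. Since $\FFF[\bz]_{\le K}$ is a fixed polynomial system independent of $L$, I may let $L\to\infty$: the $k$ zeros trapped in the shrinking polydisc $\mathbf{\Delta}_{4\cdot 2^{-L}}(0)$ must all accumulate at the origin and, being roots of a polynomial system, coalesce there. Hence the origin is a zero of $\FFF[\bz]_{\le K}$ of multiplicity exactly $k$, while every other zero lies outside $\mathbf{\Delta}_{\delta_0}(0)$, which is precisely what the corollary asserts.

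The main subtlety, and the point I would be most careful about, is that the conclusion of Theorem~\ref{thm:countroots} also asserts that $\Phi$ is globally zero-dimensional, and this part of the conclusion is obtained in its proof via Corollary~\ref{artificialperturbation}, which genuinely requires $a_i\ne 0$. With $a_i=0$ the system $\FFF[\bz]_{\le K}$ may indeed fail to be zero-dimensional, as the footnote attached to the corollary explicitly acknowledges. The argument nevertheless goes through because the only use of $\Phi$ in the theorem's proof is through the multidimensional Rouch\'e comparison (Theorem~\ref{Rouche}) between $\FFF[\bm]$ and $\Phi$ on the bounded polydiscs $\mathbf{\Delta}_\rho(0)$ appearing there. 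The hypothesis of Theorem~\ref{Rouche} is just the strict boundary inequality $\|\FFF[\bz]-\FFF[\bz]_{\le K}\|<\|\FFF[\bz]\|$ on $\partial\mathbf{\Delta}_\rho(0)$, which is established exactly as in the theorem's proof using Lemma~\ref{lemma:truncated approximation} and the local lower bound $\|\FFF[\bz](\bx)\|\gtrsim \|\bx\|^k$ derived there. Rouch\'e then automatically yields that each such polydisc contains only finitely many zeros of $\FFF[\bz]_{\le K}$, which is all the local count needs. Thus no genuinely new argument is required beyond a careful degenerate-case inspection of the theorem.
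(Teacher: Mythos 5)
Your proposal is correct and takes essentially the same approach as the paper, which also obtains the corollary as an immediate instantiation of Theorem~\ref{thm:countroots} with $\bm:=\bz$ and $\Phi:=\FFF[\bz]_{\le K}$. The paper's one-line proof leaves the degenerate $a_i=0$ case implicit, whereas you correctly flag and resolve the only real subtlety — that zero-dimensionality of $\Phi$ is not needed because the multidimensional Rouch\'e theorem already delivers finiteness of the zero count on the bounded polydisc — which is precisely the content of the footnote attached to the corollary.
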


  We can now show that Algorithm~\ref{algo:truncate} terminates and yields a correct result assuming that $L$ is large enough and the oracle, which provides an approximation $m$ of the solution $\bz$, returns a correct answer.

  \begin{theorem}\label{thm:main}
   If \emph{$\cert(\FFF, \mathbf{\Delta},K)$} returns an integer $k\ge 0$, then the polydisc $\mathbf{\Delta}=\mathbf{\Delta}_r(\bm)$ contains exactly $k$ solutions of $\FFF$ counted with multiplicity.
   Vice versa, suppose that $\bz$ is a solution of $\FFF$ of multiplicity $k$ and $K\ge k$, then there exists a positive integer $L^*$ of size
   \[
   L^*= 
\tilde{O}(L_0+(K+1)^n\cdot \log\frac{1}{\delta_0}+d_\FFF\cdot\LOG(\bz)).
    \]
    with $L_0$ and $\delta_0$ as in Theorem~\ref{thm:countroots}, such that $\cert(\FFF, \mathbf{\Delta},K)$ returns $k$ with probability at least $1/2$ if $r\le 2^{-L^*}$ and $\|\bz-\bm\|<\frac{r}{64n(K+1)^n}$. If $\FFF$ has only integer coefficients, it holds:
\begin{align*}
L^*&=\tilde{O}(D_\FFF\cdot \max_i\frac{d_\FFF+\tau_\FFF}{d_i}+D_\FFF\cdot\LOG(\bz)+\LOG(\partial(\bz,\FFF)^{-1})+(K+1)^n\cdot \LOG(\sigma(\bz,\FFF)^{-1}))\\
&=\tilde{O}((K+1)^n\cdot D_\FFF\cdot [D_\FFF+\tau_\FFF\cdot\max_{\ell}\frac{D_\FFF}{d_\ell}]).
\end{align*}
  \end{theorem}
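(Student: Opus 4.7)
).}

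The first step is to establish correctness: if $\cert$ returns $k \ge 0$, then $\mathbf{\Delta}_r(\bm)$ contains exactly $k$ zeros of $\FFF$ counted with multiplicity. The algorithm returns $k$ only when three conditions are met. First, $k = k^+$, so the polydiscs $\mathbf{\Delta}_{r/(2n)}(0)$ and $\mathbf{\Delta}_{2nr}(0)$ contain the same $k$ zeros of $\Phi$; in particular, the intermediate polydisc $\mathbf{\Delta}_r(0)$ also contains exactly $k$ zeros of $\Phi$. Second, all $\TTT_*$-tests succeed, which via \eqref{cofactorrep}, Corollary \ref{cor:arithmetichilbert}, and the fact that the maximum modulus of a holomorphic map on an annulus is attained on the boundary, yields the lower bound $\|\Phi(\bx)\| > \LB(\bm,r)$ for all $\bx$ with $\|\bx\|=r$ (equation \eqref{def:LB:unrotatetd}). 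Third, $\UB(\bm,r) \le \LB(\bm,r)$, and by Lemma \ref{lemma:truncated approximation} together with \eqref{def:UB} we have $\|\Phi(\bx) - \FFF[\bm](\bx)\| \le \UB(\bm,r)$ on the same boundary. Applying the multidimensional Rouch\'e Theorem (Theorem \ref{Rouche}) to $\Phi$ and $\FFF[\bm]$ on $\mathbf{\Delta}_r(0)$ then shows that they have the same number of zeros there. Finally, translating by $\bm$ identifies zeros of $\FFF[\bm]$ in $\mathbf{\Delta}_r(0)$ with zeros of $\FFF$ in $\mathbf{\Delta}_r(\bm)$, completing the first half.

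For the success direction, I would fix $L := \lceil \log (32n(K+1)^n/r)\rceil$ as in the algorithm, so $2^{-L} \in [r/(64n(K+1)^n),\, r/(32n(K+1)^n)]$, and use $\|\bm - \bz\| < r/(64n(K+1)^n) \le 2^{-L}$. Theorem \ref{thm:countroots} applied to $\Phi'$ (and hence to $\Phi = \Phi' + (x_i^{K+1})_i$) guarantees, whenever $L \ge L_0$, that $\Phi$ has exactly $k$ zeros of norm $< 4 \cdot 2^{-L} \le r/(8n(K+1)^n) < r/(2n)$, while all other zeros have norm $> \delta_0 = \sigma(\bz,\FFF)/(2n^2 D_\FFF)^{32n}$. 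Choosing $L^*$ large enough that also $\delta_0 > 2nr$ (i.e.\ $r \le 2^{-L^*}$ implies $r < \delta_0/(2n)$), the clustering verifies $k = k^+$ in the algorithm.

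Next I would analyse the projection step. By Lemma \ref{lem:rotationsystem}, with probability $\ge 1/2$ the rotation $S$ drawn from $\SSS_{D_\Phi}$ is admissible with respect to $\Phi$, so that for every $\ell$, the resultant $R_\ell^* := \res(\Phi^*, x_\ell)$ has precisely $k$ roots of absolute value $\le \sqrt{n}\cdot 4 \cdot 2^{-L}$ and all its remaining roots have absolute value at least $(2n^2 D_\Phi)^{-16n}\cdot \delta_0$. For $L^*$ chosen large enough that this smaller cluster and the larger gap satisfy the sufficient condition of Lemma \ref{pellet} both for the disc of radius $r/\sqrt n$ and of radius $r\sqrt n$ (a condition involving additional factors of $16 D_{\Phi^*}$ and $16 D_{\Phi^*}^4$, hence an additive $O((K+1)^n \log \delta_0^{-1})$ term in $L^*$), both $\TTT_*$-tests succeed, giving valid lower bounds $\LB_\ell^\pm \ge \tfrac{1}{3k!}|{R_\ell^*}^{(k)}(0)|\cdot (r/\sqrt n)^k$. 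The constant $|{R_\ell^*}^{(k)}(0)|/(k!)$ can be bounded below by replaying the chain of inequalities from the proof of Theorem \ref{thm:countroots}: the leading coefficient of $R_\ell^*$ is controlled from below by Corollary \ref{artificialperturbation} (this is exactly why the term $x_i^{K+1}$ was added to define $\Phi$), the product of distances from the $k$-fold cluster root to the remaining roots is bounded below by $\partial(\bz,\FFF)/(2n^2 D_\FFF)^{O(nD_\FFF)}$, and the admissibility factors are under control. Combined with \eqref{bound:B}, this yields $\LB(\bm,r) \ge C \cdot r^k$ with $C$ independent of $r$ but depending on $\sigma(\bz,\FFF)$, $\partial(\bz,\FFF)$, and the coefficient bounds through $\ubg$.

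Comparing with $\UB(\bm,r) = O(r^{K+1})$ from \eqref{def:UB}, and using $K+1 > k$, the final inequality $\UB(\bm,r) \le \LB(\bm,r)$ reduces to a lower bound on $\log(1/r)$, hence to the stated formula for $L^*$ after collecting (i) the term $L_0$ from Theorem \ref{thm:countroots} (existence of the cluster), (ii) the term $(K+1)^n \log(1/\delta_0)$ from the Pellet sufficient condition and the bound $\log \ubg = \tilde O((K+1)^n + (K+1)^{n-1}(\tau_\FFF + d_\FFF \LOG(\bz)))$ in \eqref{bound:B}, and (iii) the term $d_\FFF \LOG(\bz)$ coming from the $M(\bm)^{d_\FFF} \le (2M(\bz))^{d_\FFF}$ factor in $\UB$. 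The integer case then follows by substituting the bounds on $\sigma(\bz,\FFF)^{-1}$, $\partial(\bz,\FFF)^{-1}$, and $\LOG(\bz)$ supplied by Lemma \ref{boundsonsizeandseparation}. The main obstacle in this plan is the quantitative comparison in the last paragraph: one must ensure that the lower bound $C r^k$ on $\LB$ absorbs both the large factor $\ubg$ (which is where the $(K+1)^n$ in $L^*$ originates) and the loss from the admissibility rotation, while keeping all dependencies on the local parameters $\sigma(\bz,\FFF)$ and $\partial(\bz,\FFF)$ explicit; the rest is bookkeeping of the inequalities already assembled in the earlier sections.
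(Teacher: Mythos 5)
Your proposal follows the same architecture as the paper's proof: correctness via the multidimensional Rouch\'e theorem comparing $\Phi$ with $\FFF[\bm]$, and the success direction via Theorem~\ref{thm:countroots} to establish the cluster, Lemma~\ref{lem:rotationsystem} for admissibility of the random rotation, Lemma~\ref{pellet} to certify that the $\TTT_*$-tests succeed, and the final comparison $\LB(\bm,r)\gtrsim C\,r^k$ against $\UB(\bm,r)\lesssim C'\,r^{K+1}$. That said, there is one step where the paper is more careful than your sketch and where your phrasing would not survive as written. You propose to lower-bound $\LB_\ell^-=\tfrac{1}{3k!}|{R_\ell^*}^{(k)}(0)|(r/\sqrt n)^k$ by bounding $|{R_\ell^*}^{(k)}(0)|/k!$ as ``the leading coefficient times the product of distances from the $k$-fold cluster root to the remaining roots.'' But $R_\ell^*=\res(\Phi^*,x_\ell)$ has no exact $k$-fold root at (or near) the origin — only a cluster of $k$ simple-or-higher roots — so the identity $|R^{(k)}(z_0)|/k!=|\operatorname{LC}(R)|\prod_j|z_0-\rho_j|$, which you are implicitly invoking, does not apply; the coefficient of $x^k$ is a symmetric function of \emph{all} the roots and needs a separate estimate that controls the contributions from the small roots. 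The paper sidesteps this entirely by using the \emph{other} guarantee of Lemma~\ref{pellet}, namely $\LB_\ell^\pm>|R_\ell^*(x)|/5$ on the respective boundary circles, and then lower-bounds the evaluation $|R_\ell^*(x)|=|\operatorname{LC}(R_\ell^*)|\prod_i|x-\rho_i|$ factor by factor (each of the $k$ near-origin roots contributes at least $r/(2\sqrt n)$, each far root at least $\delta_0(2n^2(K+1)^n)^{-16n}/2$). If you switch to this evaluation-based bound your plan closes without further issue; otherwise, a direct bound on the $x^k$-coefficient of $R_\ell^*$ is possible but requires the extra symmetric-function argument you have not supplied. Beyond that point, the collection of terms into $L^*$ and the invocation of Lemma~\ref{boundsonsizeandseparation} for the integer case match the paper.
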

  \begin{proof}
   For the first part, we proceed similarly as in the proof of Theorem~\ref{thm:countroots}, however, we work with the system $\Phi$ instead of the initial system $\FFF$. From Line~\ref{line:ifrootscontained} in the algorithm, we already know that $\Phi$ has exactly $k$ solutions with ($\|.\|$-) norm less than $\frac{r}{n}$, whereas all other solutions have norm at least $nr$. 
   Now, when considering a random rotation matrix $S\in\SSS_{D_\Phi}$, the corresponding rotated system $\Phi^*=\Phi \circ S^{-1}$ has exactly $k$ solutions with norm less than $\frac{r}{\sqrt{n}}$, whereas all other solutions have norm at least $\sqrt{n}\cdot r$. We may now further write
   \begin{align}\label{cofactorPHIstar}
    \res(\Phi^*,x_\ell)=\gamma_{\ell,1}\cdot \phi_1^*+\cdots + \gamma_{\ell,n}\cdot \phi_n^*
   \end{align}
   with polynomials $\gamma_{\ell,j}\in\CC[\bx]$. From Part~\ref{partd} of Lemma~\ref{generalbounds} and Corollary~\ref{cor:arithmetichilbert} we conclude that
   \[
   	\binom{D_{\Phi^*}+n}{D_{\Phi^*}}\cdot B_{\Phi^*}
	\ge \sup_{\bx:\|\bx\|\le 1}|\gamma_{\ell,j}(\bx)|\text{ for all }\ell,j.
	\]
   In addition, since $\TTT_*(\Delta_{\frac{r}{\sqrt{n}}}(0), \res(\Phi^*, x_\ell))=(\textsc{True},k^-_\ell,\LB_\ell^-)$ and $\TTT_*(\Delta_{\sqrt{n}r}(0), \res(\Phi^*, x_\ell))=(\textsc{True},k^+_\ell,\LB_\ell^+)$ for all $\ell$ (Line~\ref{line:ifTktest}), we conclude from Lemma~\ref{pellet} that %TODO: $k$-Stuff.
   \[
   	|\res(\Phi^*,x_\ell)(\bx)|>\min(\LB_\ell^-,\LB_\ell^+)\text{ for all }i\text{ and all }\bx\text{ with }|x_\ell|=\frac{r}{\sqrt{n}}\text{ or }|x_\ell|=\sqrt{n} r.
	\]
   Hence, using (\ref{cofactorPHIstar}), this shows that
   \[
    \|\Phi^*(\bx)\|\ge \frac{\min_{\ell=1,\ldots,n}\min(\LB_\ell^-,\LB_\ell^+)}{n\cdot\binom{D_{\Phi^*}+n}{D_{\Phi^*}}\cdot B_{\Phi^*}}\text{ for all }\bx\text{ with }\|\bx\|=\frac{r}{\sqrt{n}}\text{ or }\|\bx\|=\sqrt{n}r.
   \]
   Since a holomorphic mapping cannot take its minimum or maximum in the interior of some domain, it thus follows that the above inequality even holds for any $\bx$ with $\frac{r}{\sqrt{n}}\le \|\bx\|\le \sqrt{n}r$. It thus follows that
   \[
    \|\Phi(\bx)\|\ge \frac{\min_{\ell=1,\ldots,n}\min(\LB_\ell^-,\LB_\ell^+)}{n\cdot\binom{D_{\Phi^*}+n}{D_{\Phi^*}}\cdot B_{\Phi^*}}\text{ for all }\bx\text{ with }\|\bx\|=r.
   \]
    According to (\ref{def:UB}), $\UB(\bm,r)$ constitutes an upper bound on the error $\| \FFF[\bm](\bx)-\Phi(\bx)\|$ for any $\bx$ with $\|\bx\|\le 1$. Hence, in particular, we also have
   \[
    \| \FFF[\bm](\bx)-\Phi(\bx)\|\le \UB(\bm,r)\text{ for all }\bx\text{ with }\x\text{ with }\|\bx\|=r.
   \]
   Hence, using Rouch\'e's Theorem, we conclude that $\FFF[\bm]$ and $\Phi$ have the same number of solutions in the polydisc $\mathbf{\Delta}_r(0)$.

   It remains to prove the second claim. For this, suppose that $\FFF$ has a $k$-fold solution at $\bz$ with $\|\bm-\bz\|<\frac{r}{64n(K+1)^n}$ and that
\begin{align}\label{cond:L}
   \begin{split}
       L:=\lceil \log\frac{32n(K+1)^n}{r}\rceil>L_1
       &:=\max(L_0,\log \frac{1}{\delta_0}+\log [2048\cdot n^{2}\cdot (2n^2D_\FFF(K+1)^n)^{32n}])\\
       &=\tilde{O}(L_0+\log\frac{1}{\delta_0}+n^2\log D_\FFF),
    \end{split}
\end{align} 
with $\delta_0=\frac{\sigma(\bz,\FFF)}{(2n^2D_\FFF)^{32n}}$ as defined in Theorem~\ref{thm:countroots}. Let $\Phi$ be an approximation of $\FFF[\bm]_{\le K}$ as defined in Theorem~\ref{thm:countroots}. 
   Then, $\Phi$ has $k$ solutions $\bz_1,\ldots,\bz_k$ of norm $\|\bz_i\|<4\cdot 2^{-L}<r/(2n)$, whereas all remaining solutions, denoted by $\bar{\bz}_1,\ldots,\bar{\bz}_m$, 
   have norm $\|\bar\bz_j\|> \delta_0>2nr$. We thus conclude that the if-condition is satisfied in Line~\ref{line:ifrootscontained}.
   Now, when choosing a random rotation matrix $S\in\SSS_{D_\Phi}$, the solutions $\bz_i$ near the origin are mapped to solutions $\bz_i^*:=S\circ \bz_i$ of $\Phi^*=\Phi\circ S^{-1}$ with norm $\| \bz_i^*\|<4\sqrt{n}\cdot 2^{-L}\le\frac{r}{16\sqrt{n}(K+1)^n}$, whereas the remaining solutions are mapped to solutions $\bar{\bz}^*_j$ of $\Phi^*$ with norm $\|\bar{\bz}^*_j\|>\delta_0/\sqrt{n}>2\sqrt{n}r$.
   % >1024\cdot n^{3/2}(k+1)^{10}\cdot D_\FFF\cdot 2^{-L} 
   In addition, with probability more than $1/2$, we have 
   \[
    |z^*_{j,\ell}|
    \ge \|\bar{\bz}_{j}\| \cdot (2n^2D_{\Phi^*})^{-16n}
    >\delta_0\cdot (2n^2(K+1)^n)^{-16n}
    %> \delta_0\cdot (2n^2D_{\Phi^*})^{-16n}
    > 16\cdot\sqrt{n}\cdot (K+1)^{4n} r
     \]
     for all $j\in[m]$ and all $\ell\in[n].$ 
   This implies that each of the resultant polynomials $\res(\Phi^*,x_\ell)$ has $k$ roots of absolute value less than $\frac{r}{16(K+1)^n\sqrt{n}}$, whereas all remaining roots are of absolute value larger than $16\cdot\sqrt{n}\cdot (K+1)^{4n} r$. 
   Notice that each polynomial $\res(\Phi^*,x_\ell)$ has degree $(K+1)^n$, and thus  Lemma~\ref{pellet} guarantees success in Line~\ref{line:ifTktest} of the algorithm. 
   Now, recall the lower bounds $\LB_\ell^-$ and $\LB^+_\ell$
for $|\res(\Phi^*,x_\ell)|$ on the boundary of $\Delta_{r/\sqrt{n}}(0)$ and $\Delta_{\sqrt{n}r}(0)$, respectively, as computed in Line~\ref{line:lowerbound}.
For arbitrary $x\in\CC$ with $|x|=r/\sqrt{n}$, we have
 \begin{align*}
 \LB_\ell^-
 & \ge
 \frac{ |\res(\Phi^*,x_\ell)(x)|}{5}>\frac{|\operatorname{LC}(\res(\Phi^*,x_\ell))|}{5}\cdot \left(\frac{r}{2\sqrt{n}}\right)^k\cdot  \left(\frac{\delta_0\cdot (2n^2(K+1)^n)^{-16n}}{2}   \right)^{(K+1)^n-k}\\
 &>|\operatorname{LC}(\res(\Phi^*,x_\ell))|\cdot  (2n^2(K+1)^n)^{-32n(K+1)^n} \cdot r^{k}\cdot \delta_0^{(K+1)^n}.
 \end{align*}
Using the fact $\LB_\ell^+ \ge  \frac{ |\res(\Phi^*,x_\ell)(x)|}{5}$ for all $x$ with $|x|=2\sqrt{n}r$, an analogous computation shows that $\LB_\ell^+$ fulfills the same bound, that is,
 \[
 \LB_\ell^+\ge
 |\operatorname{LC}(\res(\Phi^*,x_\ell))|\cdot  (2n^2(K+1)^n)^{-32n(K+1)^n} \cdot r^{k}\cdot \delta_0^{(K+1)^n}.
 \]
From Lemma~\ref{artificialperturbation} and our construction of $\Phi^*$, the leading coefficient of each polynomial $\res(\Phi^*,x_\ell)$ is a non-zero integer, hence we obtain that
\begin{align*}
\LB(\bm,r)=\frac{\min_{\ell=1,\ldots,n}\min(\LB_\ell^-,\LB_\ell^+)}{n\cdot\binom{D_{\Phi^*}+n}{D_{\Phi^*}}\cdot 2^{B_{\Phi^*}}}% &>
&\ge\underbrace{\frac{\delta_0^{(K+1)^n}}{n(2n^2(K+1)^n)^{32n(K+1)^n}\cdot\binom{D_{\Phi^*}+n}{D_{\Phi^*}}\cdot 2^{B_{\Phi^*}}}}_{=:C}\cdot r^{k}.
\end{align*}
Notice that, for small $r$, $\operatorname{LB}(\bm,r)$ scales like $C\cdot r^k$, with a constant $C$ that does not depend on $r$. 
The upper bound
\[
\UB(\bm,r)  =r^{K+1} \cdot \underbrace{{d_\FFF}^n 2^{\tau_\FFF + 2} [M(\bm)\cdot (n+d_\FFF)^2]^{d_\FFF}}_{=:C'}
\]
scales like $C'\cdot r^{K+1}$, and thus our algorithm succeeds if $r$ fulfills the condition in (\ref{cond:L}) (i.e. $\lceil \log\frac{32n(K+1)^n}{r}\rceil\ge L_1$) and $r^{K-k+1}<\frac{C}{C'}$. Both condition are fulfilled if
\begin{align*}
\log(1/r)&\ge L^*:=\max(L_1,
\log\frac{C'}{C})= 
\tilde{O}(L_0+(K+1)^n\cdot \log\frac{1}{\delta_0}+d_\FFF\cdot\LOG(\bz)).
\end{align*}
The claimed bound on $L^*$ for the special case where $\FFF$ is defined over the integers follows directly from the corresponding bound on $L_0$ from Theorem~\ref{thm:countroots} and our bounds on $\LOG(\bz)$, $\LOG(\sigma(\bz,\FFF)^{-1})$, and $\LOG(\partial(\bz,\FFF)^{-1})$ from Lemma~\ref{boundsonsizeandseparation}.
  \end{proof}
%!TEX root=./systems.tex
\section{Application: Computing the Zeros of a Bivariate System}\label{sec:experiments}

In this section, we report on an application of our technique in the context of elimination methods for the bivariate case. More precisely, we incorporate the algorithm $\cert$ as an inclusion predicate in the \textsc{Bisolve} algorithm~\cite{DBLP:journals/tcs/BerberichEKS13,DBLP:journals/jc/KobelS15}. Comparing \textsc{Sage} implementations of the original \textsc{Bisolve} algorithm and its modified variant, we empirically show that the idea of truncating the original system with respect to the multiplicity of the solution yields a considerable performance improvement.  
\textsc{Bisolve} is a classical elimination method for computing the real~\cite{DBLP:journals/tcs/BerberichEKS13} or complex~\cite{DBLP:journals/jc/KobelS15} zeros within a given polydisc $\mathbf{\Delta}=\Delta_1\times\Delta_2\subset \CC^2$ of a bivariate system 
\[
  \FFF:f_1(x_1,x_2)=f_2(x_1,x_2)=0,\text{ with polynomials }f_1,f_2\in \ZZ[x_1,x_2].
\] 
It achieves the best known complexity bound (i.e. $\tilde{O}(d_\FFF^6+d_\FFF^5\cdot \tau_\FFF)$ bit operations for computing all complex solution) that is currently known for this problem, and its implementation shows superior performance when compared to other complete and certified methods. 
As we aim to modify the \textsc{Bisolve} algorithm at some crucial steps, we start with a brief description of the original version.

\paragraph{\textsc{Bisolve} in a Nutshell.}  
In an initial \emph{projection phase}, \textsc{Bisolve} computes a set $C$ of \emph{candidate regions} using resultant computation and univariate root finding. More specifically, we first compute the hidden-variable resultants $R_\ell(x):=\res(\FFF, x_\ell)$ for $\ell=1,2$. Then, for each root $z_{\ell,i}$ in $\Delta_\ell$, we compute an isolating disc $\Delta_{\ell,i}$ such that $\TTT_\star(\Delta_{\ell,i},R_\ell)=(\text{\textsc{True}},k_{\ell,i},\LB_{\ell,i})$. 
That is, the $\TTT_\star$-test succeeds and yields the multiplicity of $z_{\ell,i}$ as a root of $R_\ell$ as well as a lower bound for $|R_\ell|$ on the boundary of $\Delta_{\ell,i}$.
By taking the pairwise product of any two discs $\Delta_{1,i}$ and $\Delta_{2,j}$, we obtain a set $C$ of polydiscs $\mathbf{\Delta}_{i,j}:=\Delta_{1,i}\times \Delta_{2,j}$ in $\CC^2$. 
Notice that each solution $\bz$ in $\mathbf{\Delta}$ of $\FFF$ must be one of the \emph{candidate solutions} $\bz_{i,j}:=(z_{1,i},z_{1,j})$ as each coordinate of $\bz$ is a root of the corresponding polynomial $R_{\ell}$. Hence, each solutions must be contained in one of the candidate regions, even though most candidate regions do not contain any solution. In addition, each candidate region $\mathbf{\Delta}_{i,j}$ contains at most one solution, which must be $\bz_{i,j}$ 

In the \emph{validation phase}, the algorithm checks for every candidate region $\mathbf{\Delta}_{i,j}$ whether it contains a solution or not. In other words, we check whether the corresponding candidate solution $\bz=\bz_{i,j}$ is actually a solution or not. The approach used in \textsc{Bisolve} shares many similarities to the algorithm $\cert$ as proposed in this paper. That is, we write
\[
  R_\ell=g_{\ell,1}\cdot f_1 + g_{\ell,2}\cdot f_2\text{ with }g_{\ell,1},g_{2,\ell}\in\ZZ[x,y]\text{ and }\ell=1,2
\]
and compute an upper bound $\UB$ for $|g_{\ell,i}(\bx)|$ for $\ell=1,2$, $i=1,2$, and arbitrary $\bx\in \mathbf{\Delta}_{i,j}$. Similar as in $\cert$,  this is achieved without actually computing the polynomials $g_{\ell,1}$ and $g_{\ell,2}$, but by exploiting the fact that these polynomials can be written as determinants of ``Sylvester-like'' matrices\footnote{Notice that this is one crucial point, where our novel approach differs from $\textsc{Bisolve}$. Namely, for $\cert$, we use the results from~\cite{D2013} on the arithmetic Nullstellensatz to derive corresponding bounds on the cofactors $g_{\ell,j}$. This was necessary for generalizing the method to arbitrary dimension. Another crucial difference is that no truncation of the system is considered in \textsc{Bisolve}.}; see~\cite{DBLP:journals/jc/KobelS15} for details.  
Together with the lower bounds $\LB_{1,i}$ and $\LB_{2,j}$ as computed above this yields 
a lower bound 
$\LB^*=\frac{\min(\LB_{1,i},\LB_{2,j})}{2\UB}$ for $\|\mathcal{F}\|_\infty=\max(|f_1|,|f_2|)$ on the boundary of $\mathbf{\Delta}_{i,j}$. 

Now, in order to discard or certify $\bz$ as a solution, \textsc{Bisolve} proceed in rounds, where a $2^m$-bit approximation $\zeta$ of $\bz$ is computed at the beginning of the $m$-th round. 
As an \emph{exclusion predicate}, interval arithmetic is used in order to compute a superset $\square f_\ell(\mathbf{\Delta}_{2^{-m}}(\zeta))$ of $f_\ell(\mathbf{\Delta}_{2^{-m}}(\zeta))$ for $\ell=1,2$. If we can show that either $f_1$ or $f_2$ does not vanish, the candidate is discarded. As an \emph{inclusion predicate} the above lower bound $\LB^*$ on the boundary of $\mathbf{\Delta}_{i,j}$ is compared to the values that $f_1$ and $f_2$ take at the approximation $\zeta$ of the candidate $\bz$. 
More specifically, if $\max(|f(\zeta)|,|g(\zeta)|)<\LB^*$, then $\mathbf{\Delta}_{i,j}$ contains a solution; see Theorem 4 in \cite{DBLP:journals/tcs/BerberichEKS13}. 
If neither the exclusion nor the inclusion predicate applies, we proceed with the next round.

\paragraph{The \textsc{BisolvePlus} routine.}
Notice that, even though \textsc{Bisolve} computes the set
\[
    \mathbf{Z}:=\{\bz_{i,j}\in\mathbf{\Delta}:f_1(\bz_{i,j})=f_2(\bz_{i,j})=0\}
\]
of all solutions of $\FFF$ within $\mathbf{\Delta}$, it does not reveal the 
multiplicity $k$ of a specific solution $\bz=\bz_{i,j}=(z_{1,i},z_{2,j})\in\mathbf{Z}$. 
However, due to the properties of the resultant polynomials, it holds that $k=\mu(z_{1,i},R_1)$ if the following two conditions are both fulfilled: 
\begin{align}\label{condition1}
    &\deg_{x_2} f_1=\deg f_1\text{ and }\deg_{x_2} f_2=\deg f_2 \\ \label{condition2}
    &\forall x_2\in\CC\setminus\{ z_{2,j}\} : 
    f_1(z_{1,i},x_2)\neq 0\text{ or }f_2(z_{1,i},x_2)\neq 0
\end{align}

{\LinesNumberedHidden
\begin{algorithm}[t!]
    \BlankLine
    \Input{Zero-dimensional bivariate system $\FFF: f_1(x_1,x_2)=f_2(x_1,x_2)=0$ with $f_1,f_2\in\QQ[x]$, polydisc $\mathbf{\Delta}_r(\bm)$}
    \Output{$\mathbf{Z^+}$ such that $\mathbf{Z^+}=\{(\bz,k):f_1(\bz)=f_2(\bz)=0\text{, }k=\mu(\bz,\FFF)\}$.}
    \BlankLine
    
    \For{$\rho:=1,2,4,\ldots$.}{
        Choose a matrix $S\in\mathcal{S}_{D_\FFF}$ and compute $\FFF^*=(f^*_1,f^*_2) = \FFF\circ S^{-1}$\;
        \If{$\deg_{x_2}f^*_1 = \deg f^*_1$ and $\deg_{x_2} f^*_2 = \deg f^*_2$}{
            \BlankLine
            
            Call \textsc{Bisolve} with input $\FFF^*$ and $\mathbf{\Delta}_{2r}(\bm)$ to compute (for $\ell=1,2$):
            \begin{itemize}
                \item Discs $\Delta_{\ell,i}$, $i=1,\ldots,i_\ell$, that isolate the roots $z_{\ell,i}$ of $R_\ell:=\res(\FFF^*,x_\ell)$.
                \item The multiplicity $k_{\ell,i}=\mu(z_{\ell,i})$ of $z_{\ell,i}$ as a root of $R_\ell$.
                \item The set $\mathbf{Z}:=\{\bz_{i,j}=(z_{1,i},z_{2,j}):f^*_1(\bz_{i,j})=f^*_2(\bz_{i,j})=0\}$ of all solutions of $\FFF^*$.
            \end{itemize}
            \For{each solution $\bz=\bz_{i,j}\in\mathbf{Z}$}{
                \For{$\ell=1,2$}{
                    Compute disjoint discs $D_{\ell,1},\ldots,D_{\ell,s_\ell}$ of radius less than $2^{-\rho}$ such that
                    \begin{itemize}
                        \item Each disc $D_{\ell,s}$ contains at least one root of $f^*_\ell(z_i,x_2)\in\CC[x_2]$.
                        \item $\bigcup_{s=1}^{s_\ell} D_{\ell,s}$ contains all roots of $f^*_\ell(z_i,x_2)\in\CC[x_2]$.
                    \end{itemize}
                }
                Determine the set 
                \[
                    \mathbf{D^*}:=\{D_{1,s}:\exists D_{2,s'}\text{ with }D_{1,s}\cap D_{2,s'}\neq \emptyset\}
                \] 
                of all discs $D_{1,s}$ that have non-empty intersection with one of the discs $D_{2,s'}$.
                \BlankLine
                \If{for all $D_{1,s}\in \mathbf{D^*}$ it holds that $D_{1,s}\subset \Delta_i$}{
                    Set $b_{i,j}=\textsc{True}$\;
                }
            } 
            \If{$\bigwedge_{i,j}b_{i,j}$}{
                \Return{$\{(S^{-1}\circ\bz_{i,j},k_i):\bz_{i,j}\in \mathbf{Z}\text{ and }\bz_{i,j}\in \mathbf{\Delta}_r(\bm)\}$}
            }
        }
    }
    \caption{$\textsc{BisolvePlus}$}
    \label{algo:bisolveplus}
\end{algorithm}
}

The first condition guarantees that there is no solution of $\FFF$ at infinity above any $z\in\CC$, whereas the second condition guarantees that there is no other finite (complex) solution of $\FFF$ that shares the first coordinate with $\bz$.
We remark that it is easy to check the first condition, however, checking the second condition is more difficult. This is due to the fact that $\bz$ might be the only solution in $\mathbf{\Delta}$ of $\FFF$ with $x_1=z_{1,i}$, but there is a another solution of $\FFF$ with $x_2=z_{i,1}$ that is not contained within $\mathbf{\Delta}$.
We aim to address this problem by the following approach (see also Algorithm~\ref{algo:bisolveplus}): 

Let $L\in\NN$ be fixed non-negative integer. In a first step, we check whether (\ref{condition1}) is fulfilled. If this is not the case, we return \textsc{False}, otherwise, we proceed. Now, for each solution $\bz_{i,j}\in \mathbf{Z}$ and each $\ell\in\{1,2\}$,
we use a complex root finder\footnote{Each of the methods in~\cite{DBLP:conf/issac/BeckerS0XY16,DBLP:journals/corr/BeckerS0Y15,DBLP:journals/jsc/MehlhornSW15} applies to polynomials with arbitrary complex coefficients. Also, for computing only approximations of the roots, there are no restrictions on the multiplicities of the roots. In our implementation, we use a strongly simplified variant of the algorithm from~\cite{DBLP:conf/issac/BeckerS0XY16}.} to compute a set of pairwise disjoint discs $D_{\ell,j}$ of radius less than $2^{-\rho}$ such that each disc contains at least one root and the union of all discs $D_{\ell,j}$ contains all complex roots of $f_\ell(z_{1,i},x_2)\in\CC[x]$. Then, we determine all discs $D_{1,j_1},\ldots,D_{1,j_s}$ that have a non-empty intersection with one of the discs $D_{2,j'}$. It follows that each common root of $f(z_{1,i},x_2)$ and $f(z_{1,i},x_2)$ must be contained in one of the discs $D_{1,j_{s'}}$. Hence, if each of these discs is contained in $\Delta_{2,i}$, then $x_2=z_{2,j}$ is the unique solution of $f(z_{1,i},x_2)=f(z_{2,i},x_2)=0$, and thus (\ref{condition2}) is fulfilled. In this case, we may conclude that $\mu(z_{1,i},R_1)$ equals the multiplicity of $\bz$. If we succeed in computing the multiplicities for all solutions in $\mathbf{Z}$, we return the solutions together with their corresponding multiplicities. Otherwise, we return \textsc{False}.

Obviously, the above approach cannot succeed if one of the above conditions is not fulfilled. However, even if both conditions are fulfilled, it may still fail due to the fact that $\rho$ has not been chosen large enough. 

\begin{lemma}\label{successofAlgobisolveplus}
Suppose that both conditions (\ref{condition1}) and (\ref{condition2}) are fulfilled. Then, there exists a $L_0\in\NN$ such that Algorithm~\ref{algo:bisolveplus} succeeds for all $L>L_0$.
\end{lemma}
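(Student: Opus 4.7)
The plan is to fix a candidate solution $\bz=\bz_{i,j}\in\mathbf{Z}$ and reduce the argument to a one-dimensional statement about the two univariate polynomials
\[
    p_\ell(x_2) := f^*_\ell(z_{1,i}, x_2) \in \CC[x_2], \qquad \ell\in\{1,2\}.
\]
Condition~(\ref{condition1}), which is enforced by the explicit \texttt{if} check of Algorithm~\ref{algo:bisolveplus}, guarantees that each $p_\ell$ has the expected maximal degree and, in particular, is not identically zero. Condition~(\ref{condition2}) then says that $z_{2,j}$ is the unique $x_2\in\CC$ at which $p_1$ and $p_2$ simultaneously vanish. Since $\mathbf{Z}$ is finite, it suffices to produce for each such $\bz$ a threshold $L_0(\bz)$ beyond which the per-solution flag $b_{i,j}$ is set to \textsc{True} and then take the maximum.

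The central quantity is
\[
    \epsilon_0 := \min\!\Bigl(\sigma(p_1),\; \sigma(p_2),\; \min_{(\alpha,\beta)\neq(z_{2,j},z_{2,j})}|\alpha-\beta|,\; \operatorname{dist}(z_{2,j}, \partial\Delta_{2,j})\Bigr),
\]
where $\alpha$ ranges over the roots of $p_1$ and $\beta$ over those of $p_2$. Showing $\epsilon_0 > 0$ is the crux: the first two terms are positive because $p_\ell\not\equiv 0$, the third because condition~(\ref{condition2}) rules out any further common root, and the fourth because $z_{2,j}$ lies in the open isolating disc $\Delta_{2,j}$ produced by \textsc{Bisolve}. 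Setting $L_0 := \lceil\log(4/\epsilon_0)\rceil$ and taking any $\rho>L_0$, every disc $D_{\ell,s}$ returned by the root-finder has radius below $2^{-\rho}<\epsilon_0/4$; hence distinct roots of $p_\ell$ are isolated in distinct discs, and the disc around $z_{2,j}$ sits inside $\Delta_{2,j}$. Whenever $D_{1,s}\cap D_{2,s'}\neq\emptyset$, the enclosed roots $\alpha,\beta$ satisfy $|\alpha-\beta|<4\cdot 2^{-\rho}<\epsilon_0$, which by the third term of $\epsilon_0$ forces $\alpha=\beta=z_{2,j}$. Thus $\mathbf{D^*}$ reduces to the single disc enclosing $z_{2,j}$, which is contained in $\Delta_{2,j}$, and $b_{i,j}$ becomes \textsc{True}.

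The main obstacle I anticipate is formal rather than conceptual: one must check that conditions~(\ref{condition1}) and~(\ref{condition2}) transfer from $\FFF$ to the rotated system $\FFF^* = \FFF\circ S^{-1}$ actually manipulated by the algorithm. Condition~(\ref{condition1}) is tested directly and holds with probability at least $1/2$ for a random $S\in\SSS_{D_\FFF}$ by Lemma~\ref{lem:rotationsystem}. An analogous Schwartz--Zippel-style argument, arranged as in the proof of Lemma~\ref{lem:rotationsystem}, shows that the set of rotations for which condition~(\ref{condition2}) fails at some $\bz^*_{i,j}$ is the vanishing locus of a non-identically-zero polynomial in the entries of $S$, hence admissible rotations preserving~(\ref{condition2}) are hit with positive probability per iteration. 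Combined with the doubling schedule on $\rho$, this guarantees the outer loop eventually lands on an iteration where both conditions hold for $\FFF^*$ and $\rho > \max_{\bz\in\mathbf{Z}} L_0(\bz)$, completing the argument.
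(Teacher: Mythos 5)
Your proof follows essentially the same route as the paper's: define a positive separation quantity $\epsilon$ for the roots of the two univariate specializations $f^*_\ell(z_{1,i},\cdot)$, observe that once the disc radii drop below $\epsilon/4$ any intersecting pair $D_{1,s}\cap D_{2,s'}\neq\emptyset$ must enclose a \emph{common} root, invoke condition~\eqref{condition2} to identify that root as $z_{2,j}$, and conclude that $\mathbf{D^*}$ collapses to the one disc around $z_{2,j}$ which is eventually inside $\Delta_{2,j}$. The paper's proof is terser; you have filled two small gaps that it glosses over: you explicitly include $\operatorname{dist}(z_{2,j},\partial\Delta_{2,j})$ in your $\epsilon_0$ (the paper just says ``if $L$ is large enough, then $\Delta_2$ contains $D_{1,j_{s'}}$'' without tying this into the threshold), and you pin down $L_0:=\lceil\log(4/\epsilon_0)\rceil$ concretely. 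Your final paragraph about transferring conditions~\eqref{condition1}/\eqref{condition2} to the rotated system $\FFF^*$ is a legitimate observation, but it is outside the scope of this lemma: the lemma is stated conditionally (the conditions are assumed to hold for the system the subroutine works on), and the rotation/probability argument is handled by the paper in the discussion \emph{following} the lemma, not inside its proof. So that paragraph is not a gap in the paper's argument, just a reminder that the lemma alone does not close the loop.
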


\begin{proof}
Let $\epsilon$ be a lower bound on the distance between any distinct roots of $f_1(z_{1,i},x_2)$ and $f_2(z_{1,i},x_2)$. Now, if $2^{-L}<\epsilon/4$, then two discs $D_{1,j'}$ and $D_{2,j''}$ can only intersect if they contain a common root of $f_1(z_{i,1},x_2)$ and $f_2(z_{i,1},x_2)$. Since $z_{2,j}$ is the only common root, we thus conclude that each of the discs $D_{1,j_{s'}}$ must contain $z_{2,j}$. Hence, if $L$ is large enough, then $\Delta_2$ contains $D_{1,j_{s'}}$.
\end{proof}

The problem with this approach is that we do neither know in advance whether the condition \eqref{condition2} is fulfilled nor do we know whether $\rho$ has been chosen sufficiently large. In order to overcome this issue, we consider a rotation of the system by means of a rotation matrix $S\in\mathcal{S}_{D_\FFF}$. Then, with probability at least $1/2$, both conditions \eqref{condition1} and \eqref{condition2} are fulfilled for the rotated system $\FFF^*:=\FFF\circ S^{-1}$. 
We now proceed in rounds (numbered by $m$), where, in each round, we choose a matrix $S\in\mathcal{S}_{D_\FFF}$ at random and run Algorithm~\ref{algo:bisolveplus}
with input $\FFF^*=\FFF\circ S^{-1}$ and $\rho:=2^{m}$. 
Since there are only finitely many different choices for $S$ and since the conditions \eqref{condition1} and \eqref{condition2} are fulfilled for at least the half of the systems $\FFF^*$, 
Lemma~\ref{successofAlgobisolveplus} guarantees that, for sufficiently large $\rho$, Algorithm~\ref{algo:bisolveplus} returns the solutions of $\FFF^*$ in $\mathbf{\Delta}$ together with the corresponding multiplicities with probability at least $1/2$.

\paragraph{New Validation Phase.}
We are now ready to modify \textsc{Bisolve} by using an inclusion predicate based on the algorithm $\cert$. More specifically, let $C:=\{\bz_{i,j}\}_{i,j}$ be the set of candidate solutions and let  $\mathbf{\Delta}_{i,j}=\Delta_{1,i}\times\Delta_{2,j}$ be the corresponding candidate regions as computed in the projection phase of \textsc{Bisolve}. 
The validation routine, see also Algorithm~\ref{algo:validate}, that is called for each candidate solution $\bz=\bz_{i,j}=(z_{1,i},z_{2,j})$ again works in rounds, where in round $m$, we compute a $L=2^m$-bit approximation $\zeta=(\zeta_{1},\zeta_{2})$ of $\bz$ such that $\|\zeta-\bz\|<2^{-L}$ and $\Delta_{64\max(d_1,d_2)^2\cdot 2^{-L}}(\zeta)\subset \Delta_{2,j}$.
The exclusion predicate is identical to the original \textsc{Bisolve} routine, i.e., we check whether we can guarantee that $f_1$ or $f_2$ does not vanish on $\mathbf{\Delta}_{2^{-L}}(\zeta)$ by evaluating interval extensions $\square f_\ell(\mathbf{\Delta}_{2^{-L}}(\zeta))$ for $\ell=1,2$ using interval arithmetic. The inclusion predicate now works as follows. There is still one tiny detail that prevents us from directly plugging in $\cert$ as an inclusion predicate. Namely, even if the candidate solution would actually turn out to be a solution of the system,  we do not know the multiplicity $k$ of this solution. Thus, we have to search for the multiplicity $k$. 
As we have seen in the previous section that $\cert(\FFF,\mathbf{\Delta}_{2^{-L}}(\zeta),K)$ actually succeeds for any $K\ge k$, we can use exponential search for $k$ by calling $\cert(\FFF,\mathbf{\Delta}_{2^{-L}}(\zeta),K)$ for $K=1,2,4,\ldots,2^{\lceil\log(\min(k_1, k_2))\rceil}$, where $k_\ell$ for $\ell\in\{1,2\}$ is the multiplicity of $z_\ell$ as a root of $R_\ell(x):=\res(\FFF, x_\ell)$. 
In the calls to $\cert$, we use the above described \textsc{BisolvePlus}-routine in order to implement the computation of the solutions of the truncated system in Line~\ref{line:oracle} of Algorithm~\ref{algo:truncate}.
We remark that our actual implementation of the new inclusion predicate differs slightly from the description of $\cert$ in one more detail. For efficiency reasons, we consider a partial change of order of the three considered steps \emph{Solving the truncated system}, \emph{Projection step}, and \emph{Bound Computation and Comparison}. 
{\LinesNumberedHidden
\begin{algorithm}[t!]
\Input{Zero-dimensional system $\FFF=(f_1,f_2)$ with polynomials $f_1,f_2\in\ZZ[\bx]$ of degrees $d_1$ and $d_2$, respectively, polydisc $\mathbf{\Delta}=\Delta_1\times\Delta_2\subset\CC^2$, candidate $\bz=(z_1,z_2)$ and multiplicities $k_\ell$ s.t.~the multiplicity of $z_\ell$ as a root of $R_\ell(x):=\res(\FFF, x_\ell)$ is $k_\ell$ for $\ell\in\{1,2\}$.}
\Output{$k\in\NN_0$. If $k\ge 1$, then there is a unique solution $\bz$ of $\FFF=0$ of multiplicity $k$ within the polydisc $\mathbf{\Delta}$. Otherwise,  $\mathbf{\Delta}$ contains no solution.}
\BlankLine
\For{$L=1,2,4,\ldots$}{
  Compute $L$-bit approximation $\zeta$ of $\bz$\;
  \If{$\Delta_{2^{-L + 2\lceil\log(\min(k_1, k_2))\rceil+6}}(\zeta)\subset \mathbf{\Delta}$}{
      \If{$0\notin \square f_1(\mathbf{\Delta}_{2^{-L}}(\zeta))$ or $0\notin \square f_2(\mathbf{\Delta}_{2^{-L}}(\zeta))$}{
          \Return{$0$}
        }
        \For{$K=1,2,4,\ldots,2^{\lceil\log(\min(k_1, k_2))\rceil}$}{
            \If{$\cert(\FFF,\Delta_{2^{-L}}(\zeta), K) = k\ge 1$}{
                \Return{$k$}
            }
        }
    }
}
\BlankLine
\caption{\textsc{Validate}}
\label{algo:validate}
\end{algorithm}}

\subsection{Setting}
We performed experiments on a compute server with 48 Intel (R) Xeon (R) CPU E5-2680 v3 @ 2.50GHz cores and a total of 256 GB RAM running Debian GNU/Linux 8. All code was implemented in SageMath version 7.6, release date 2017-03-25. 

\begin{figure}[ht!]
	\begin{center}
		\subfloat{
			\includegraphics[width=0.48\textwidth]{./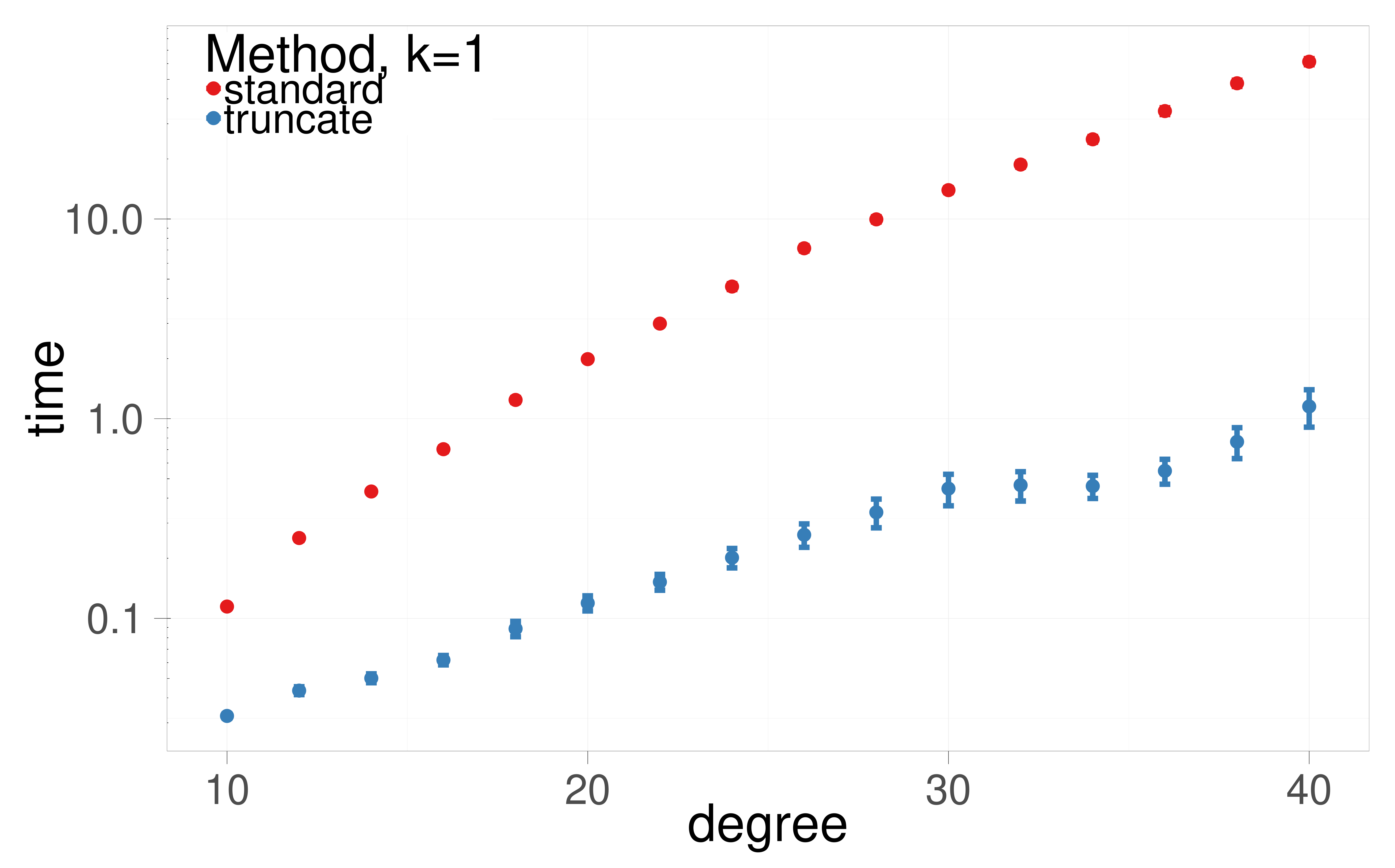}
		}
		\subfloat{
			\includegraphics[width=0.48\textwidth]{./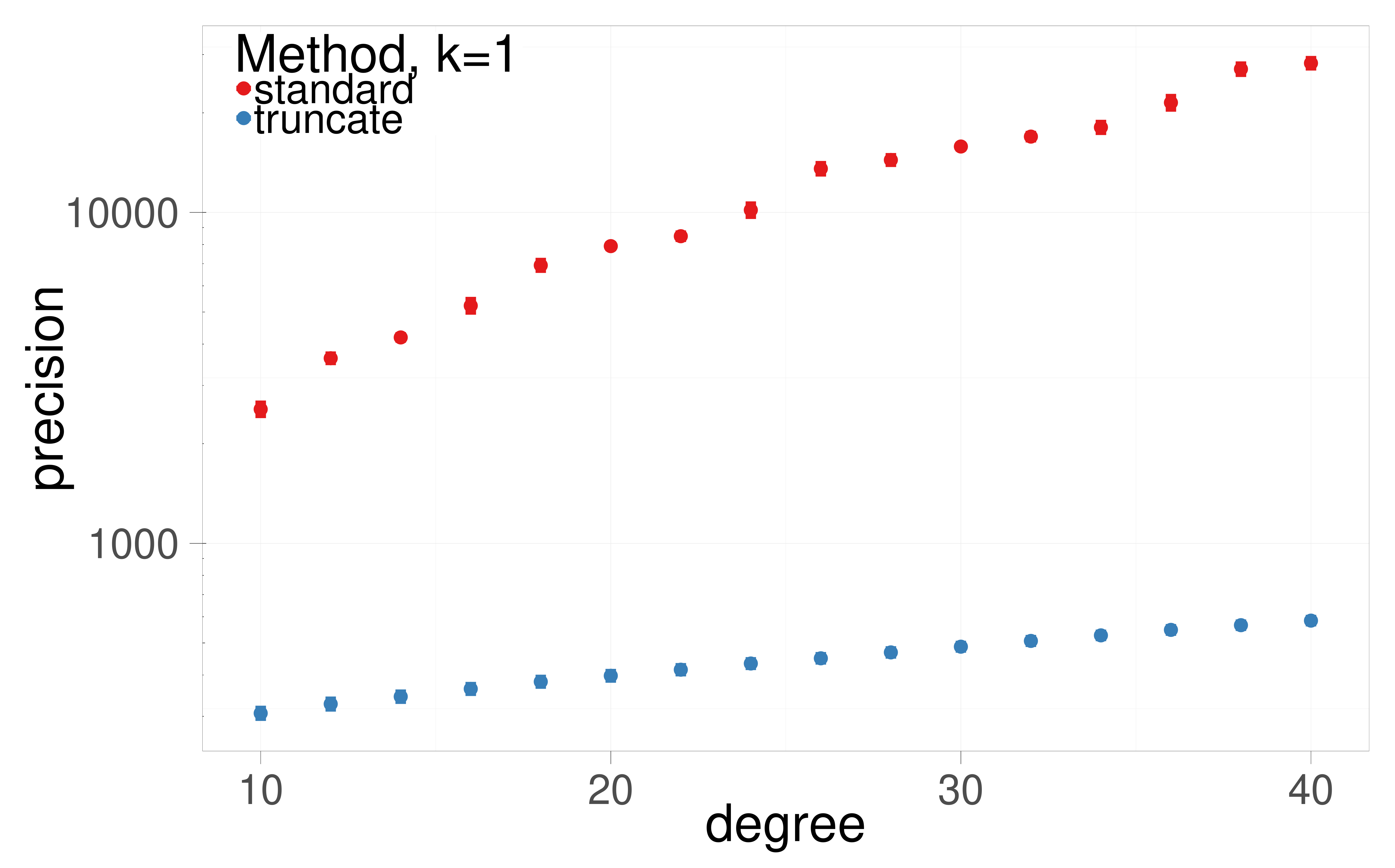}
		}
	\end{center}
    \vspace{-13mm}
    \begin{center}
		\subfloat{
			\includegraphics[width=0.48\textwidth]{./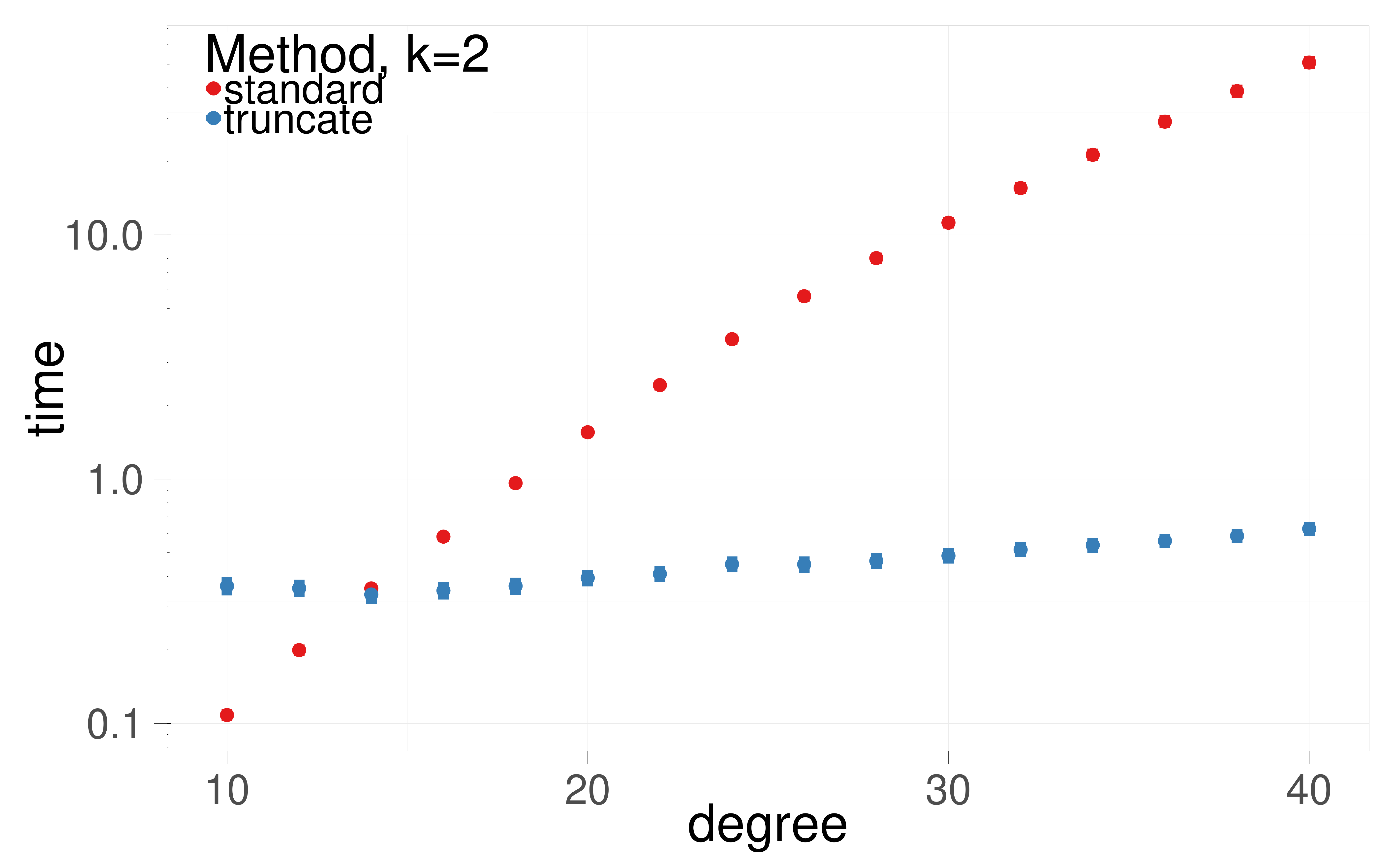}
		}
		\subfloat{
			\includegraphics[width=0.48\textwidth]{./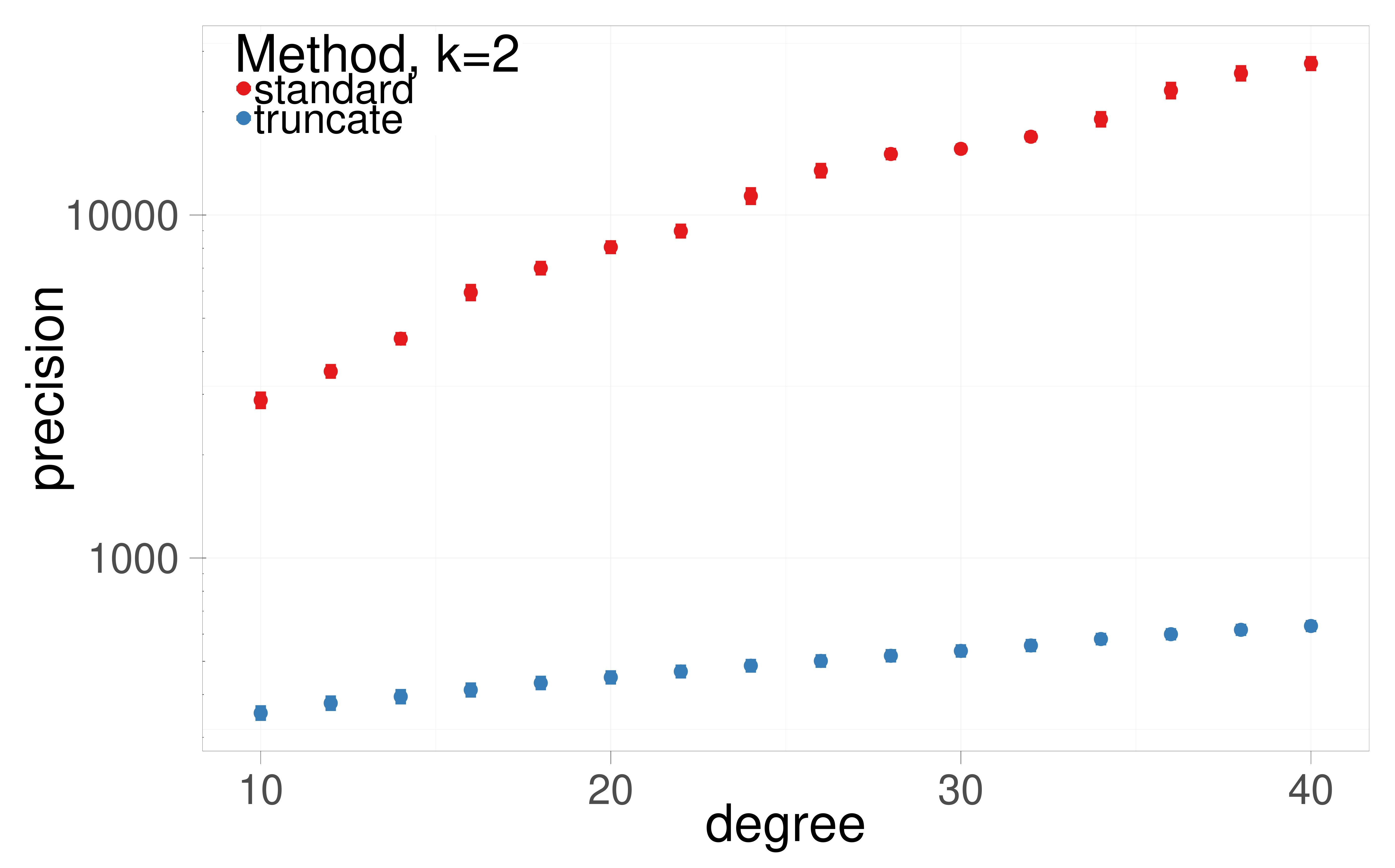}
		}
	\end{center}
	\vspace{-13mm}
	\begin{center}
		\subfloat{
			\includegraphics[width=0.48\textwidth]{./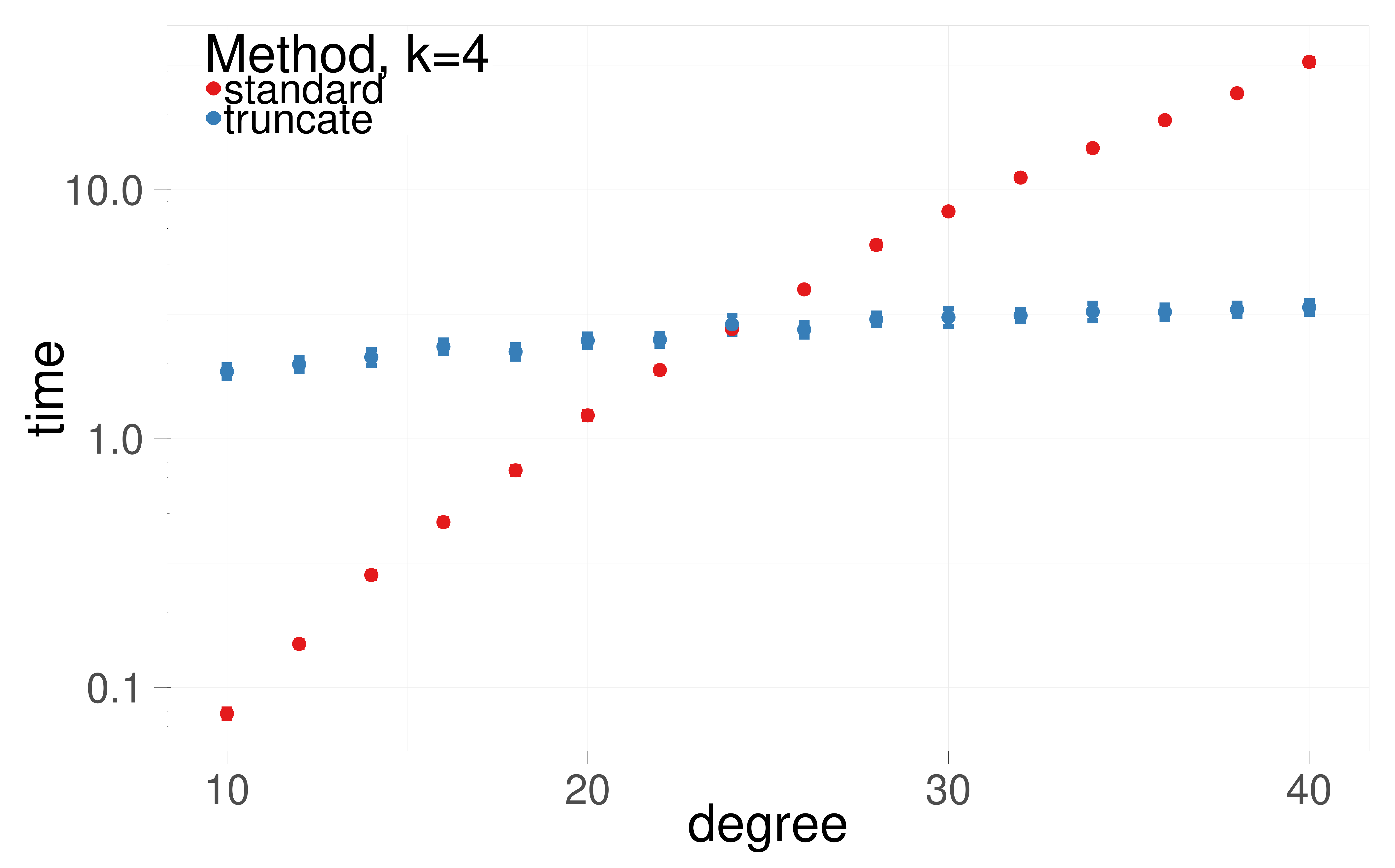}
		}
		\subfloat{
			\includegraphics[width=0.48\textwidth]{./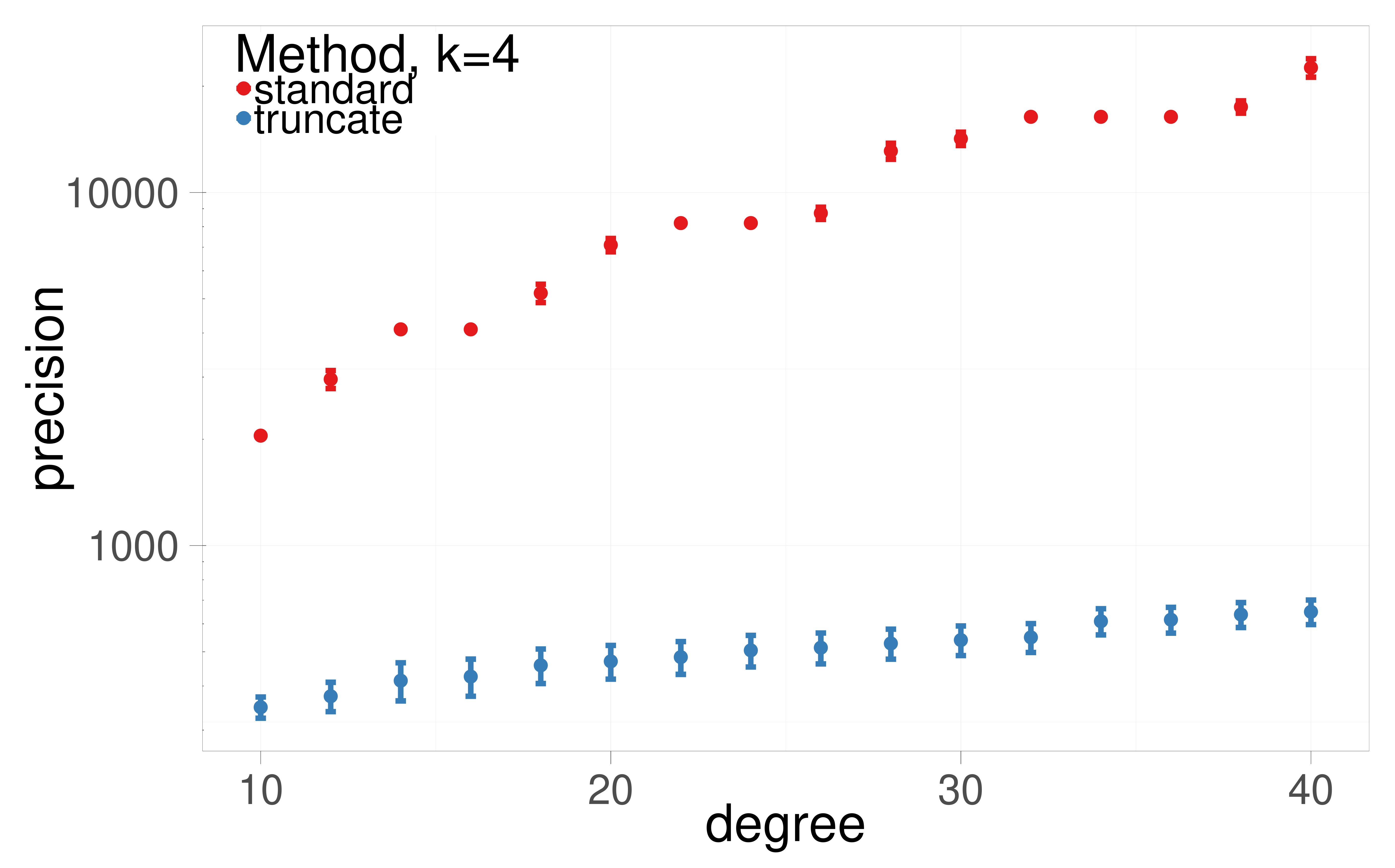}
		}
	\end{center}
	\vspace{-13mm}
	\begin{center}
		\subfloat{
			\includegraphics[width=0.48\textwidth]{./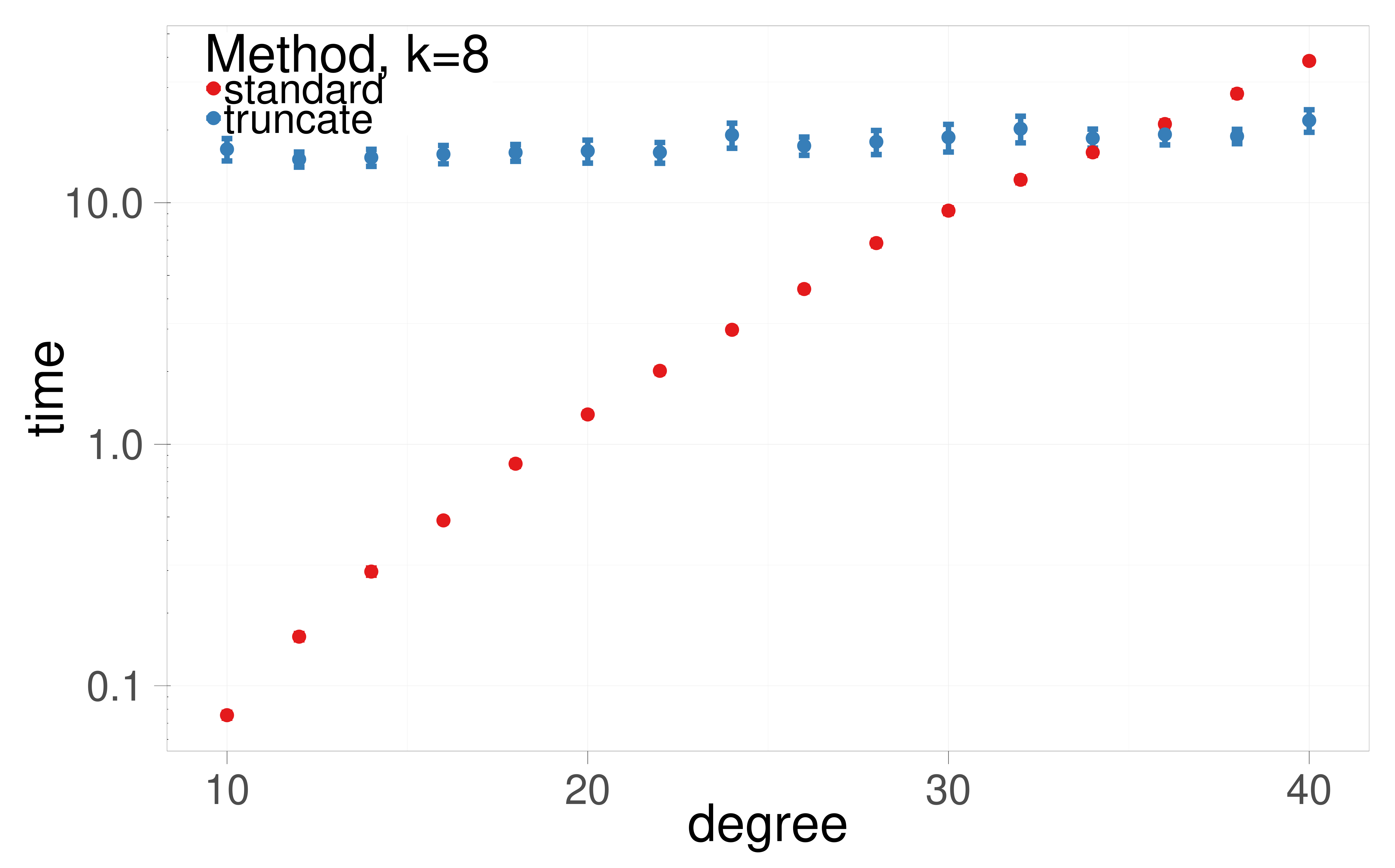}
		}
		\subfloat{
			\includegraphics[width=0.48\textwidth]{./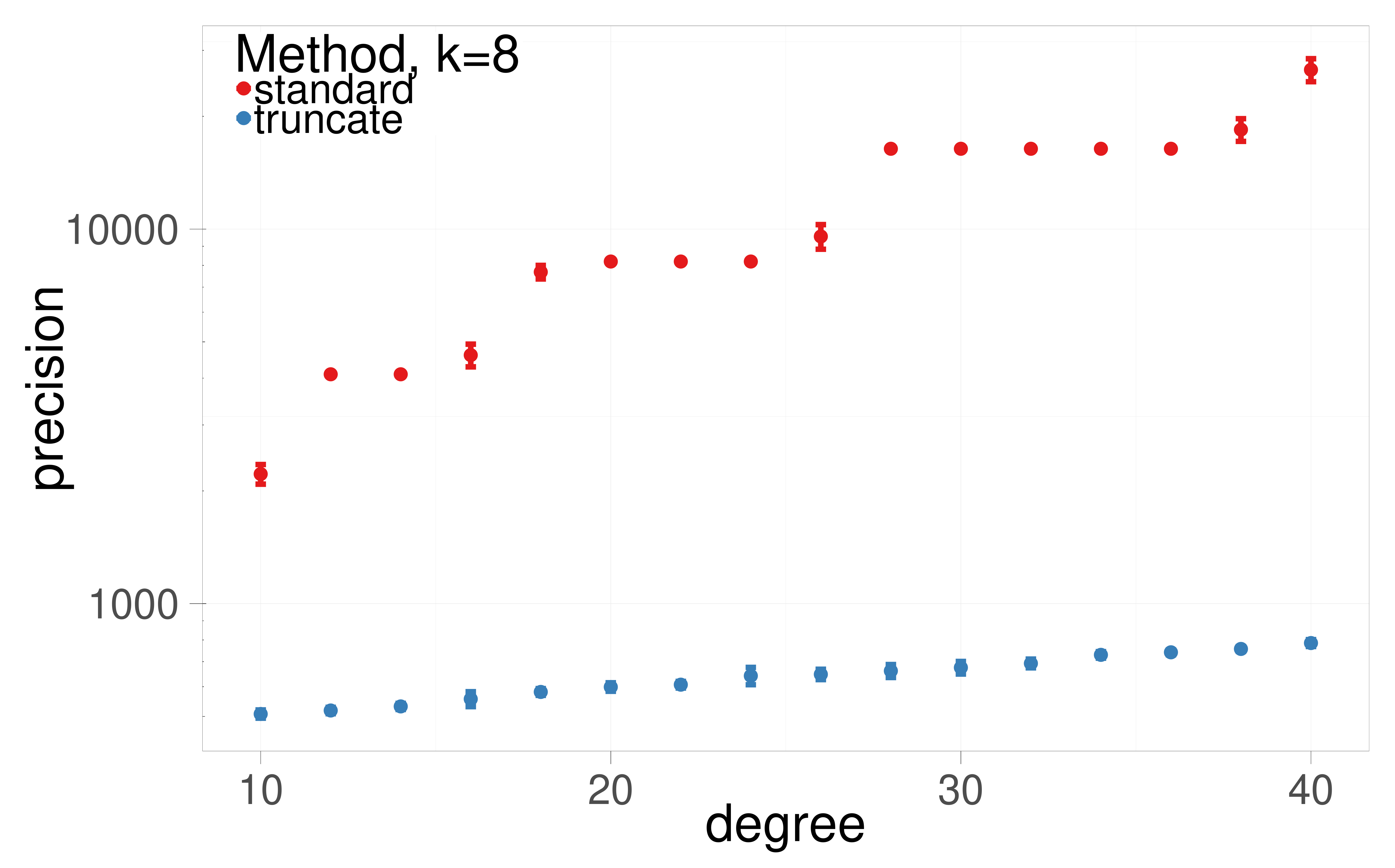}
		}
	\end{center}
	\caption{
		Evaluation for validation of $k$-fold roots, for $k=1,2,4,8$. In all plots the degree is on the horizontal axis. On the left the validation time is on the vertical axis and on the right the precision demand is on the vertical axis. The red dots correspond to the validation method of the original \textsc{Bisolve} routine, called \texttt{standard}. The blue dots correspond to the validation method that uses the new inclusion predicate, called \texttt{truncate}. 
		}
	\label{fig:plots}
\end{figure}

\begin{figure}
    \begin{center}
        \subfloat{
            \includegraphics[width=0.48\textwidth]{./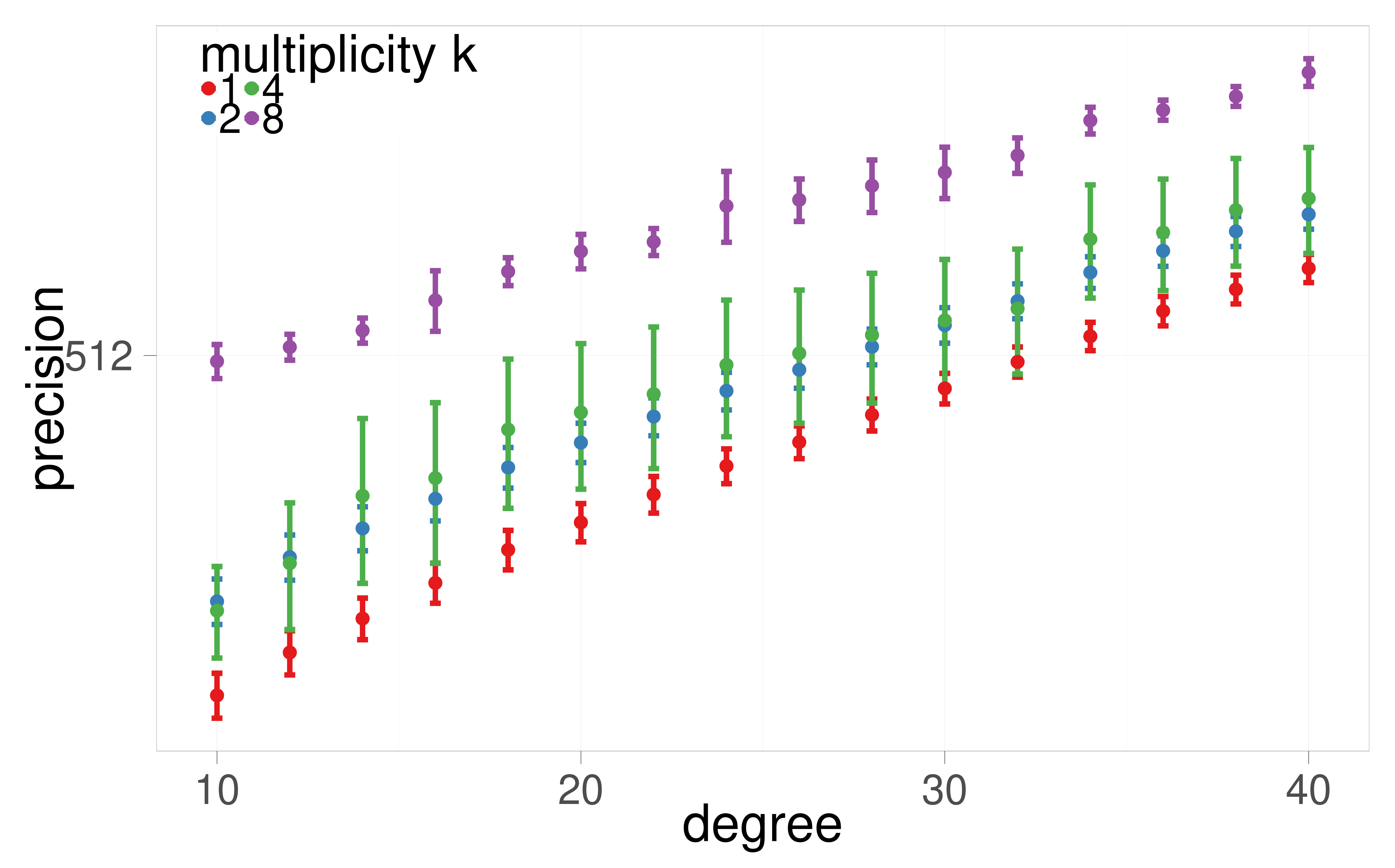}
        }
        \subfloat{
            \includegraphics[width=0.48\textwidth]{./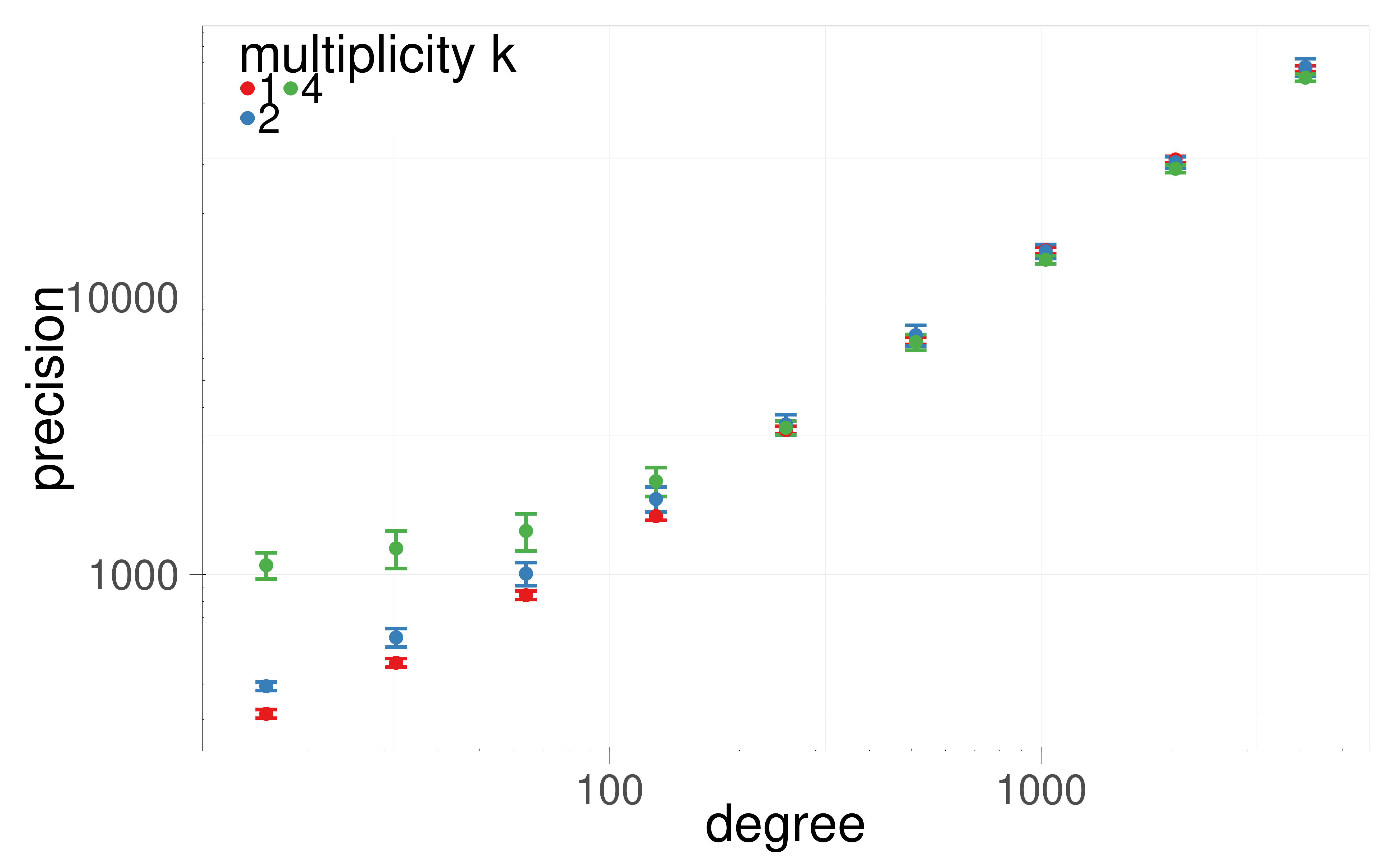}
        }
    \end{center}
    \caption{Comparison of the dependence of the precision demand on the degree for different values of $k$. For \texttt{herwig\_hauser}-instances on the left, and for \texttt{random}-instances on the right.}
    \label{fig:k_plots}
\end{figure}

\subsection{Instance Generation} 
The instances on which we compared the implementations are generated as follows. Given a trivariate polynomial $P\in\ZZ[x,y,z]$. There are several different ways of obtaining two bivariate polynomials $f,g$ from $P$ that have solutions of higher multiplicity. The different ways are encoded by the strings \texttt{0xx}, \texttt{0xy}, \texttt{0yy}, \texttt{x0y}, \texttt{y0x} in the file names. The following table summarizes the meaning of these abbreviations. We denote $p_v = \partial_v p$ for any polynomial $p\in R[v]$ for some ring $R$.
\begin{align*}
	\texttt{0xx} & \hspace{1cm} f = \res(P, P_z, z)  & \hspace{1cm} g = f_x \cdot f_x \\ 
	\texttt{0xy} & \hspace{1cm} f = \res(P, P_z, z)  & \hspace{1cm} g = f_x \cdot f_y \\  
	\texttt{0yy} & \hspace{1cm} f = \res(P, P_z, z)  & \hspace{1cm} g = f_y \cdot f_y \\ 
	\texttt{x0y} & \hspace{1cm} f = \res(P, P_z, z) \cdot f_x  & \hspace{1cm} g = f_y \\ 
	\texttt{y0x} & \hspace{1cm} f = \res(P, P_z, z) \cdot f_y  & \hspace{1cm} g = f_x 
\end{align*}
From the resulting system $f,g$, we construct the sheared system $f, g \leftarrow f(ax+by, cx+dy), g(ax+by, cx+dy)$ with integers $a,b,c,d$ drawn uniformly at random from $[-2,2]$. This is done in order to make degenerate situations where multiple solutions share the same $x$ or $y$-value less likely. We create an even larger set of instances by renaming the variables of $P$ from $x, y, z$ to $x, z, y$ or $y, z, x$ (or equivalently considering $P_x$ and $P_y$ instead of $P_z$). We abbreviate this choice with \texttt{xyz}, \texttt{xzy}, and \texttt{yzx}.
Now, let $z$ be a solution of such a system $f, g$ of multiplicity $k$. We pick random polynomials $p, q$ of increasing degrees and consider the systems $f\cdot p, g\cdot q$. This results in systems $f_d, g_d$ of increasing degrees $d$ that have the same solution $z$ of multiplicity $k$. For each degree $d$, we create three such system $f_d, g_d$ by multiplying $f, g$ with different random polynomials. 

There are two different classes of instances that we consider depending on how the initial trivariate polynomial $P$ is chosen. In the first class, called \texttt{herwig\_hauser}, we pick the polynomial $P$ from the set of polynomials given as three dimensional surfaces in the Herwig Hauser Classics gallery~\cite{herwig-hauser}. In the second class, called \texttt{random}, we pick $P$ randomly. In the first class called \texttt{herwig\_hauser} we let $d=10,12,\ldots, 40$, whereas in the second class \texttt{random}, we let $d=16, 32, \ldots, 4096$. We note that in the latter case we pick the random polynomial with which we multiply $f,g$ in order to get $f_d,g_d$ as sparse polynomials as otherwise evaluating $f,g$ already becomes non-trivial.

The generated instances can be found on the project page.\footnote{\url{http://resources.mpi-inf.mpg.de/systemspellet/}} A folder corresponding to a candidate contains one file called \texttt{orig.cnd}, which refers to the polynomials $f,g$. The remaining files correspond to the polynomials $f_d,g_d$ as described above. Every file contains four lines, the first two contain the system, while the third and fourth contain the boundaries $x-r, x+r$ and $y-r, y+r$ such that the solution is contained within this range.

\subsection{Experiments and Evaluation Results}
In the first experiment, we compare the running time as well as the precision demand of the two respective validation methods called \texttt{standard} for the method included in the original \textsc{Bisolve} routine and \texttt{truncate} for the method using the new inclusion predicate on the instance class \texttt{herwig\_hauser}.
In Figure~\ref{fig:plots}, we can see the evaluation for validating $k$-fold roots for $k=1,2,4,8$. The measurements are repeated three times, for each method and system. This results in 9 measurements (3 different random polynomials, 3 different runs) per degree per method. On the left, the running times are on the vertical logarithmic axis, whereas the degree of the systems is on the linear horizontal axis. 
On the right, the precision demand is on the vertical logarithmic axis, whereas the degree of the systems is on the linear horizontal axis. The error bars indicate 95\%-confidence intervals. 

We can see a clear advantage for our new method \texttt{truncate}. On average over all instances of degree 40, we obtain an improvement of a factor of $43.6$, $37.9$, $29.8$, $25.2$ for $k=1,2,4,8$ in the precision demand.
In Figure~\ref{fig:k_plots} on the left, we can see the precision demand for the \texttt{herwig\_hauser} instances for different $k=1,2,4,8$ for the \texttt{truncate} method. We can see that the precision demand increases with $k$ in a comparable amount as the theoretical worst-case bounds predict, namely, we can roughly see a quadratic dependence between the precision demand and the multiplicity $k$ in Figure~\ref{fig:k_plots} on the left.

In Figure~\ref{fig:k_plots} on the right, we can see results for the same experiment for the \texttt{random} instances. In this experiment, we only include the \texttt{truncate} method as the \texttt{original} method does not scale well enough for solving instances of that degree. Here both axis are logarithmic and the degree goes up to 4096. 
Fitting a linear model to the data points leads an estimate for the exponent of $0.99 \pm 0.05$ $0.94 \pm 0.06$, and $0.75\pm 0.13$ for $k=1,2,4$. The coefficients of determination lie above $0.94$ in all three cases that is roughly 94\% of the variance of the data can be explained by the fitted power model. Thus, we may conjecture that the precision demand depends at most a linearly on $d$. We remark that the plot suggests that the impact of the degree $d$ dominates over the impact of $k$ for very large $d$ as we cannot see a difference between the curves for different values of $k$ for large $d$. We remark that the impact of $k$ for small $d$ explains the smaller exponent in the fitted linear model for $k=4$ compared to $k=1,2$.

\medskip
\noindent \emph{The source code, the statistical data underlying the plots, the instances, and the script used for benchmarking are available for download on the project page.\footnote{\url{http://resources.mpi-inf.mpg.de/systemspellet/}}}

\printbibliography
\end{document}